		\newtheorem{Lemma}[theorem]{Lemma}
		\crefname{Theorem}{theorem}{theorems}
		\crefname{Lemma}{lemma}{lemmas}
		\Crefname{Theorem}{Theorem}{Theorems}
		\Crefname{Lemma}{Lemma}{Lemmas}
\newcommand{\Wlogg}{Without loss of generality}
\newcommand{\intervalI}{\mathcal{I}} %
\renewcommand{\b}[1]{\mathsf{b}(#1)}%
\newcommand{\e}[1]{\mathsf{e}(#1)}%
\renewcommand{\r}[1]{\ensuremath{{#1}_{\mathup{\rho}}}}
\newcommand{\arm}   {\ensuremath{u}}
\newcommand{\subarm}{\ensuremath{s}}
\newcommand{\rep}   {\ensuremath{r}}
\newcommand{\gap}   {\ensuremath{v}}
\newcommand{\gapS}   {\ensuremath{g}}
\newcommand{\zarm}  {\ensuremath{z}}
\newcommand{\agr}  {\ensuremath{\sigma}} %
\newcommand{\pali}[1]{\ensuremath{{#1}^{\mathup{\intercal}}}}
\newcommand{\grGR}  [1][w]{\ensuremath{\mathcal{G }_\alpha(#1)}}
\newcommand{\grP}   [1][w]{\ensuremath{\beta\mathcal{P }_\alpha(#1)}}
\newcommand{\grNP}  [1][w]{\ensuremath{\overline{\beta\mathcal{P }}_\alpha(#1)}}
\newcommand{\gpP}   [1][w]{\ensuremath{\pali{\beta\mathcal{P }}_\alpha(#1)}}
\newcommand{\gpNP}  [1][w]{\ensuremath{\pali{\overline{\beta\mathcal{P }}}_\alpha(#1)}}
\newcommand{\gpGR}  [1][w]{\ensuremath{\pali{\mathcal{G }}_\alpha(#1)}}
\newcommand{\period}{p}
\newcommand{\s}[1]{\overline{#1}}
\tikzstyle{arm} = [pattern=horizontal lines, pattern color=blue!10]
\tikzstyle{subarm} = [pattern=vertical lines, pattern color=green!10]
\tikzstyle{gap} = [pattern=north east lines, pattern color=red!10]
\tikzstyle{rep} = [pattern=north west lines, pattern color=purple!10]
\tikzstyle{diff} = [pattern=fivepointed stars, pattern color=cyan!10]
\edef\ourCol{0}
\pgfmathsetcount{\ourRow}{0}
\newcommand{\col}[3][]{%
\draw [#1] (\ourCol,\ourRow) rectangle node {#3} (\ourCol+#2,\ourRow-1);
\edef\ourCol{\ourCol+#2}
}
\newcommand{\void}[1]{%
\edef\ourCol{\ourCol+#1}
}
\newcommand{\newrow}{%
\pgfmathsetcount{\ourRow}{\ourRow-1}
\edef\ourCol{0}
}
\newcommand{\lab}[1]{%
\draw (\ourCol,\ourRow) -- (\ourCol,\ourRow-1) node [anchor=north] {#1};
}
\newcommand{\per}{{per}}
\newcommand{\bigo}{{\mathcal O}}
\newcommand{\LCP}{{\mathit{LCP}}}
\title{Efficiently Finding All Maximal $\alpha$-gapped Repeats}
\titlerunning{Efficiently Finding All Maximal $\alpha$-gapped Repeats} %
\author[1]{Pawe\l{} Gawrychowski}
\author[2]{Tomohiro I}
\author[3]{Shunsuke Inenaga} 
\author[2]{Dominik K\"{o}ppl}
\author[4]{Florin Manea}
\affil[1]{
Institute of Informatics, University of Warsaw, Poland,\\ 
gawry@mimuw.edu.pl
}
\affil[2]{
Department of Computer Science, TU Dortmund, Germany\\
\{tomohiro.i,dominik.koeppl\}@cs.tu-dortmund.de}
\affil[3]{Department of Informatics, Kyushu University, Japan\\
inenaga@inf.kyushu-u.ac.jp}
\affil[4]{
Department of Computer Science, Kiel University, Germany\\
flm@informatik.uni-kiel.de}
\authorrunning{Gawrychowski et al.} %
\subjclass{}%
\keywords{}%
\begin{document}

\maketitle
\renewcommand{\l}[1]{\ensuremath{{#1}_{\mathup{\lambda}}}}

\begin{abstract}
For $\alpha\geq 1$, an $\alpha$-gapped repeat in a word $w$ is a factor $uvu$ of $w$ such that $|uv|\leq \alpha |u|$; the two factors $u$ in such a repeat are called arms, while the factor $v$ is called gap. Such a repeat is called maximal if its arms cannot be extended simultaneously with the same symbol to the right or, respectively, to the left. In this paper we show that the number of maximal $\alpha$-gapped repeats that may occur in a word is upper bounded by $18\alpha n$. This allows us to construct an algorithm finding all the maximal $\alpha$-gapped repeats of a word in $\bigo(\alpha n)$; this is optimal, in the worst case, as there are words that have $\Theta(\alpha n)$ maximal $\alpha$-gapped repeats. Our techniques can be extended to get comparable results in the case of $\alpha$-gapped palindromes, i.e., factors $uv\pali{u}$ with $|uv|\leq \alpha |u|$.
 \end{abstract}

\section{Introduction}
Gapped repeats and palindromes are repetitive structures occurring in words that were investigated extensively within theoretical computer science (see, e.g., \cite{Gu97,Brodal,KK_SPIRE,gappedPalindroms,KolpakovPPK14,power_of_SA,Cro2011,KKC15,FCT,MFCS,TanimuraFIIBT15} and the references therein) with motivation coming especially from the analysis of DNA and RNA structures, where they were used to model different types of tandem and interspersed repeats as well as hairpin structures; such structures are important in analyzing the structural and functional information of the genetic sequences (see, e.g.,  \cite{Gu97,Brodal,gappedPalindroms}).

Following~\cite{gappedPalindroms,KolpakovPPK14}, we study gapped repeats (palindromes) $uvu$ (respectively, $uv\pali{u}$) where the length of the gap $v$ is upper bounded by the length of the arm $u$ multiplied by some factor, also known as \intWort{$\alpha$-gapped repeats} and \intWort{$\alpha$-gapped palindromes} occurring in a word. 

The work on $\alpha$-gapped palindromes was focused so far on combinatorial and algorithmic problems that extend the classical results obtained for squares and palindromes. Namely, problems like how many maximal $\alpha$-gapped repeats or palindromes does a word of length $n$ contain (here maximal means that the arms of the repeat cannot be both extended to the right or left with the same symbol), how efficiently can we compute the set of maximal $\alpha$-gapped repeats or palindromes of a word, how efficiently can we compute the longest $\alpha$-gapped repeat or palindrome, were investigated (see, for instance, \cite{gappedPalindroms,Brodal,KolpakovPPK14,FCT,TanimuraFIIBT15,KKC15}, and the references therein). In this paper we obtain the following results:
\begin{itemize}
	\item The number of maximal $\alpha$-gapped repeats in a word of length $n$ is at most $18\alpha n$. 
	\item We can compute the list of all $\alpha$-gapped repeats in $\Oh{\alpha n}$ time for \emph{integer} alphabets.
\end{itemize} 
Our techniques can be extended to show that the number of maximal $\alpha$-gapped palindromes in a word of length $n$ is upper bounded by $28\alpha n + 7n$; they can be found in $\bigo(\alpha n)$ time. As there are words of length $n$ that contain $\Theta(\alpha n)$ maximal $\alpha$-gapped repeats or palindromes (see \cite{KolpakovPPK14}), it follows that the bounds on the number of maximal $\alpha$-gapped repeats or palindromes we obtained are asymptotically tight, and that we cannot hope for algorithms finding all $\alpha$-gapped repeats or palindromes faster in the worst case.

Our results improve those of~\cite{KolpakovPPK14} (as well as those existing in the literature before~\cite{KolpakovPPK14}), where the size of the set of maximal $\alpha$-gapped repeats with non-empty gap is shown to be $\bigo(\alpha^2 n)$, and can be computed in $\bigo(\alpha^2 n)$ time for integer alphabets. An alternative proof of the fact that the number of maximal $\alpha$-gapped repeats in a word of length $n$ is $\bigo(\alpha n)$ was given in the very recent \cite{KKC15}; however, compared to the respective paper, we give a more direct proof of the $\bigo(\alpha n)$ upper bound as well as a concrete evaluation of the constant hidden by the $\bigo$-denotation. In~\cite{KKC15,TanimuraFIIBT15} algorithms producing all the maximal $\alpha$-gapped repeats of a word were given; in the light of the upper bound $\bigo(\alpha n)$ on the number of maximal $\alpha$-gapped repeats that may occur in a word of length $n$, it follows that these algorithms work in $\bigo(\alpha n)$ time, but only for constant alphabets. Extending the approach in \cite{FCT}, we show here that, in fact, such algorithms can be also designed for integer alphabets; our algorithm requires a deeper analysis than the one developed in \cite{FCT} for finding the longest $\alpha$-gapped repeat, and uses essentially different techniques and data structures than the ones in~\cite{KKC15,TanimuraFIIBT15}.

 \section{Combinatorics on words preliminaries}
Let $\Sigma$ be a finite alphabet; $\Sigma^*$ denotes the set of all finite words over $\Sigma$. The \intWort{length} of a word $w\in \Sigma^*$ is denoted by $\left|w\right|$. The \intWort{empty word} is denoted by ${\varepsilon}$. 
A word $u\in \Sigma^*$ is a \intWort{factor} of $v\in \Sigma^*$ if $v=xuy$, for some $x, y\in \Sigma^*$; we say that $u$ is a \intWort{prefix} of $v$, if $x={\varepsilon}$, and a \intWort{suffix} of $v$, if $y={\varepsilon}$.
We denote by $w[i]$ the symbol occurring at position $i$ in $w,$ and by $w[i,j]$ the factor of $w$ starting at position $i$ and ending at position $j,$ consisting of the catenation of the symbols $w[i], \ldots, w[j],$ where $1\leq i\leq j\leq n$; we define $w[i,j]=\varepsilon$ if $i>j$. {By ${\pali{w}}$ we denote the mirror image of $w$.} A \intWort{period} of a word $w$ over $\Sigma$ is a positive integer $p$ such that $w[i]=w[j]$ for all $i$ and $j$ with $i\equiv j\pmod{p}$; a word that has period $p$ is also called \intWort{$p$-periodic}. Let $\per(w)$ be the smallest~period~of~$w$. A word $w$ with $\per(w)\leq\frac{|w|}{2}$ is called \intWort{periodic}; otherwise, $w$ is called \intWort{aperiodic}. It is worth noting that the length of the overlap between two consecutive occurrences of an aperiodic factor $v$  in $w$ is upper bounded by $\frac{|v|}{2}$.

By $\intervalI=[b,e]$ we represent the set of consecutive integers from $b$ to $e$, for $b\le e$, and call $\intervalI$ an \intWort{interval}. 
For an interval $\intervalI$, we use the notations $\b{\intervalI}$ and $\e{\intervalI}$ to denote the beginning and end of $\intervalI$;
i.e., $\intervalI = [\b{\intervalI},\e{\intervalI}]$.
We write $\abs{\intervalI}$ to denote the length of $\intervalI$; i.e., $\abs{\intervalI}=\e{\intervalI}-\b{\intervalI}+1$.
A \intWort{subword}~$\arm$ of a word $w$ is a pair $(\subarm, [b, e])$ consisting of a factor $\subarm$ of $w$ and an interval $[b, e]$ in $w$ 
such that $\subarm = w[b,e]$.
While a factor is identified only by a sequence of letters, a subword is also identified by its position in the word.
So subwords are always unique, while a word may contain multiple occurrences of the same factor.
For two subwords $\arm$ and $\s\arm$ of a word~$w$, we write $\arm = \s\arm$ 
if they start at the same position in~$w$ and have the same length.
We write $\arm \equiv \s\arm$ if the factors identifying these subwords are the same. %
We implicitly use subwords both like factors of $w$ 
and as intervals contained in $[1,\abs{w}]$, e.g., 
we write $\arm \subseteq \s\arm$ if
two subwords $\arm = (\subarm, [b, e]) , \s\arm = (\s\subarm, [\s{b}, \s{e}])$ of $w$ satisfy $[b, e] \subseteq [\s{b}, \s{e}]$, i.e.,
$\b{\s\arm} \le \b{\arm} \le \e{\arm} \le \e{\s\arm}$.
Two subwords $\arm$ and $\s\arm$ of the same word~$w$ are called \intWort{consecutive}, iff $\e{\arm}+1 = \b{\s\arm}$.%
For a word $w$, we call a triple of consecutive subwords $\l\arm,\gap,\r\arm$ a \intWort{gapped repeat} 
with period $\abs{\l\arm\gap}$ and gap $|v|$ iff $\r\arm \equiv \l\arm$.
A triple of consecutive subwords $\l\arm,\gap,\r\arm$ is called a \intWort{gapped palindrome} with gap $\abs{\gap}$ iff $\r\arm \equiv \pali{\l\arm}$.
The subwords $\l\arm$ and $\r\arm$ are called left and right \intWort{arm}, respectively.
For $\alpha\geq 1$, the gapped repeat (palindrome) $\l\arm,\gap,\r\arm$ is called \intWort{$\alpha$-gapped} iff $\abs{\l\arm}+\abs{\gap} \le \alpha \abs{\l\arm}$.
Further, it is called \intWort{maximal} iff %
that
$w[\b{\l\arm}-1] \not= w[\b{\r\arm}-1]$ and 
$w[\e{\l\arm}+1] \not= w[\e{\r\arm}+1]$,
and for a gapped palindrome $\l\arm,\gap,\r\arm$ that
$w[\b{\l\arm}-1] \not= w[\e{\r\arm}+1]$ and 
$w[\e{\l\arm}+1] \not= w[\b{\r\arm}-1]$.
Let $\grGR$ (respectively, $\gpGR$) denote the set of maximal $\alpha$-gapped repeats (palindromes) in~$w$.
The representation of a maximal gapped repeat (palindrome) by the subword $\zarm := w[\l\arm]w[\gap]w[\r\arm]$ is not unique ---
the same subword~$\zarm$ can be composed of gapped repeats (palindromes) with different periods (different gaps).
Instead, a maximal gapped repeat (palindrome) is uniquely determined by its left arm~$\l\arm$ and its period (gap).
By fixing~$w$, we therefore can map $\l\arm,\gap,\r\arm$ injectively to the pair of integers $(\e{\l\arm}, \abs{\l\arm\gap})$ in case of a gapped repeat, 
or to $(\e{\l\arm}, \abs{\gap})$ in case of a gapped palindrome.

A run in a word $w$ is a maximal periodic factor; the exponent of a run is the number of times the period fits in that run. For a word $w$, 
let $R(w)$ and $E(w)$ denote the number of runs and the sum of the exponents of runs in~$w$, respectively.
The exponent of a run~$\rep$ is denoted by $\exp(\rep)$.
We use the following results from literature:

\begin{Lemma}[\cite{runstheorem}]\label{lemmaExponent}
	For a word $w$, $E(w) < 3\abs{w}$.
\end{Lemma}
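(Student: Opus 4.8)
This is the sum-of-exponents half of the ``runs theorem,'' so rather than reprove it from scratch the plan is to follow the Lyndon-root argument in the spirit of Bannai et al. First I would fix a total order $<_0$ on $\Sigma$, let $<_1$ be its reverse, and append a sentinel smaller than every letter so that every suffix comparison is well defined. For a run $\rep$ with smallest period $\period$ and for each $b\in\{0,1\}$, let $\lambda_b(\rep)$ be its \emph{Lyndon root}: the unique rotation of $w[\b{\rep},\b{\rep}+\period-1]$ that is a Lyndon word with respect to $<_b$. The occurrences of $\lambda_b(\rep)$ that lie fully inside $\rep$ begin at positions forming an arithmetic progression of step $\period$, and a length-$\period$ factor fits $\lfloor\abs{\rep}/\period\rfloor$ times into a window of length $\abs{\rep}$; so, up to a single boundary occurrence lost to the period offset, each run carries at least $\exp(\rep)-1$ internal Lyndon-root occurrences.

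The second step is the decomposition $E(w)=\sum_{\rep}\exp(\rep)=R(w)+\sum_{\rep}\bigl(\exp(\rep)-1\bigr)$. For the first term I would invoke the runs theorem, $R(w)<\abs{w}$. For the second term the goal is to charge the internal Lyndon-root occurrences to distinct text positions. The key structural lemma is that every run admits a \emph{distinguished order} $b^\ast(\rep)$ for which the start of each of its internal Lyndon-root occurrences is the beginning of the \emph{longest} Lyndon word (w.r.t.\ $<_{b^\ast(\rep)}$) starting at that position. Since the longest Lyndon word starting at a fixed position is unique, it pins down both the root and the run, so for each fixed order the map (occurrence)$\mapsto$(position) is injective; as the runs are partitioned by their distinguished order, each order contributes at most $\abs{w}$ charged positions. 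This gives $\sum_{\rep}(\exp(\rep)-1)\le 2\abs{w}$, whence $E(w)=R(w)+\sum_{\rep}(\exp(\rep)-1)\le R(w)+2\abs{w}<3\abs{w}$.

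The hard part is precisely this structural lemma: proving that a single order $b^\ast(\rep)$ ``sees'' all of a run's internal Lyndon-root occurrences as longest-Lyndon-word starts, and that these starts cannot be shared between two different runs. This rests on the interplay between the Lyndon order and the periodicity of the run (the synchronization properties of periodic factors recorded earlier, e.g.\ that an aperiodic factor overlaps a consecutive occurrence of itself by at most half its length) together with a case analysis of the characters bordering each occurrence to decide which of $<_0,<_1$ is distinguished. Nailing the exact constant $3$, as opposed to a larger linear bound, additionally requires tracking the fractional part of $\exp(\rep)$ and the boundary occurrence hidden in the estimate $\lfloor\abs{\rep}/\period\rfloor\ge\exp(\rep)-1$; this is where the careful bookkeeping, rather than any new idea, is concentrated.
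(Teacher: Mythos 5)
First, a framing point: the paper does not prove this statement at all --- \Cref{lemmaExponent} is imported directly from the runs-theorem paper \cite{runstheorem}, so your proposal has to be measured against Bannai et al.'s argument, which is indeed what you are sketching. Your route is the right one (Lyndon roots with respect to two opposite orders, charging root occurrences to longest-Lyndon-word starts, disjointness of the charged positions), and you are candid that the structural lemma --- a single distinguished order per run under which every internal root occurrence is a longest-Lyndon-word start, pinning down the run uniquely --- is deferred; that lemma \emph{is} the content of the cited theorem, so the proposal is a plan rather than a proof. A smaller circularity: you ``invoke the runs theorem'' for $R(w)<|w|$ inside an argument meant to reprove its companion bound; in \cite{runstheorem} both halves fall out of the same construction ($R(w)<|w|$ because each run's charged set is nonempty and the sets are disjoint), so this should be derived, not assumed.

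The genuine gap is quantitative, and it kills exactly the constant $3$. Your per-run count ``each run carries at least $\exp(\rep)-1$ internal Lyndon-root occurrences'' is false: the root occurrences inside a run $\rep$ with period $\period$ number only $\lfloor \abs{\rep}/\period\rfloor=\lfloor\exp(\rep)\rfloor$ in the worst case, and after discarding the occurrence at the left end of $\rep$ (which must be excluded for the longest-Lyndon-word property to hold) you retain only $\lfloor\exp(\rep)\rfloor-1$, which can fall short of $\exp(\rep)-1$ by almost one. Concretely, for a run $ababa$ one has $\exp(\rep)=5/2$ and root $ab$, with a single internal occurrence, while $\exp(\rep)-1=3/2$. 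The correct per-run inequality is therefore $\exp(\rep)<\abs{B_\rep}+2$, where $B_\rep$ is the set of charged positions. Feeding this into your bookkeeping, your per-order injectivity only yields $\sum_\rep \abs{B_\rep}\le 2|w|$, hence $E(w)<\sum_\rep\abs{B_\rep}+2R(w)\le 4|w|$ --- not $3|w|$. What \cite{runstheorem} actually proves is stronger than your per-order claim: the sets $B_\rep$ are pairwise disjoint across \emph{all} runs simultaneously, irrespective of which of the two orders each run distinguishes (a cross-order collision is excluded by comparing the two longest Lyndon words at a shared position, and a same-order collision by the period-overlap argument of \Cref{lemmaRepetitionsOverlap}); together with $1\notin B_\rep$ this gives $\sum_\rep\abs{B_\rep}\le |w|-1$ and $R(w)\le |w|-1$, whence $E(w)<(|w|-1)+2(|w|-1)=3|w|-3<3|w|$. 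So to repair the proposal you must both correct the count to $\exp(\rep)<\abs{B_\rep}+2$ and upgrade the disjointness from ``at most $|w|$ per order'' to a single global bound of $|w|$; with only the two-order bound the method cannot reach the stated constant.
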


\begin{Lemma}[{\cite{KolpakovPPK14}}]\label{lemmaRepetitionsOverlap}
Two distinct maximal repetitions with the same minimal period~$p$ cannot
have an overlap of length greater than or equal to~$p$.
\end{Lemma}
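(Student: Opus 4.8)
The plan is to argue by contradiction: I assume two distinct maximal repetitions $R_1 = w[i_1,j_1]$ and $R_2 = w[i_2,j_2]$, both with minimal period $p$, overlap in a factor of length at least $p$, and I derive a violation of the maximality of one of them. \Wlogg{} suppose $i_1 \le i_2$, so that the overlap is the factor $w[i_2,k]$ with $k = \min(j_1,j_2)$, and the hypothesis reads $k - i_2 + 1 \ge p$. Since the two runs are distinct intervals that overlap, their union $U = w[i_1, \max(j_1,j_2)]$ strictly contains at least one of them; indeed, if $U$ equalled both $R_1$ and $R_2$, then $i_1=i_2$ and $j_1=j_2$, contradicting distinctness. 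The whole argument then hinges on showing that $U$ itself has period $p$, since this will make the strictly-contained run extendable and hence not maximal.

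The key step is to prove that $U$ is $p$-periodic by a gluing argument. First I note that the overlap $w[i_2,k]$ has length at least $p$, so its positions $i_2, i_2+1, \dots, i_2+p-1$ form a complete residue system modulo $p$; I define $c_r$ to be the letter occurring in the overlap at the unique position of residue $r$, for each residue $r$. Because $R_1$ has period $p$ and is an interval, every position $a \in [i_1,j_1]$ can be joined to the overlap by repeatedly adding or subtracting $p$ without ever leaving $[i_1,j_1]$, whence $w[a] = c_{a \bmod p}$; the same holds for every position of $R_2$ using that $R_2$ has period $p$. As the two colourings are read off the \emph{same} common region, they agree, so every position $a$ of $U = R_1 \cup R_2$ satisfies $w[a] = c_{a \bmod p}$, i.e.\ $p$ is a period of $U$.

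It then remains to convert this into a contradiction. Since $R_1 \ne R_2$, at least one of them, say $R_1$, is a proper subinterval of $U$, so $U$ extends $R_1$ by at least one position to the left or to the right; this extended factor, being a factor of $U$, still has $p$ as a period. Moreover its minimal period cannot drop below $p$: any period $q < p$ of the extension would restrict to a period $q < p = \per(R_1)$ of $R_1$, which is impossible. Hence $R_1$ can be extended while preserving its minimal period $p$, contradicting that $R_1$ is a maximal repetition (and the symmetric case $R_2 \subsetneq U$ is identical). This will close the argument.

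I expect the gluing in the middle paragraph to be the main obstacle. The delicate point is that the letters assigned to a position through $R_1$ and through $R_2$ must be shown to coincide, and it is precisely the hypothesis that the overlap has length at least $p$ — guaranteeing a full period's worth of shared, already-equal letters common to both runs — that forces this consistency. With a shorter overlap the two period-$p$ patterns need not agree on a complete residue system, and the union may genuinely fail to be $p$-periodic, so the length bound cannot be weakened. Note also that no appeal to Fine and Wilf is needed here, since both runs share the \emph{same} period $p$; the whole statement follows from this direct periodicity computation together with the restriction-of-periods fact used above.
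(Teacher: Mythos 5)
Your proof is correct and complete. Note that the paper itself does not prove this lemma: it is stated as an imported result, cited from Kolpakov et al.~\cite{KolpakovPPK14}, so there is no in-paper argument to compare against. Your argument is the standard direct one for this fact, and all the delicate points are handled properly: the overlap, having length at least $p$, contains $p$ consecutive positions and hence a complete residue system modulo $p$; every position of either run reaches the overlap by $\pm p$ steps that stay inside that run (this uses exactly the hypothesis $k-i_2+1\ge p$ to guarantee the landing position falls inside the overlap, e.g.\ $k-p+1\ge i_2$ when stepping down from the right), so both period-$p$ colourings agree with the one read off the common region and the union is $p$-periodic; since the runs are distinct, at least one is a proper subinterval of the union, and the restriction-of-periods observation correctly rules out the extension having minimal period below $p$, so that run extends while keeping minimal period exactly $p$, contradicting maximality. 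Your closing remark is also apt: no Fine--Wilf is needed because the two runs share the same period, and the bound is sharp in the sense that an overlap shorter than $p$ need not cover a full residue system, which is precisely why the lemma's hypothesis cannot be weakened.
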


\begin{corollary}[{\cite{KolpakovPPK14}}]\label{lemmaNumCycleSquares}
If a square $uu$ is primitive, any word~$v$ contains no more than $|v|/|u|$
occurrences of $uu$.
\end{corollary}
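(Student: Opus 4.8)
The plan is to exploit the classical \emph{synchronization property} of primitive words: if $u$ is primitive, then $u$ occurs in $uu$ exactly twice, namely as its prefix and as its suffix. Here I read ``the square $uu$ is primitive'' as ``$u$ is a primitive word'', so that the minimal period of $uu$ is $|u|$. Granting the synchronization property, two occurrences of $uu$ in $v$ cannot start too close together, and a counting argument over the admissible starting positions then yields the bound $|v|/|u|$.

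First I would make the synchronization property precise and, if it is not simply quoted as a known fact, derive it from the periodicity lemma of Fine and Wilf. Concretely, suppose for contradiction that $u$ occurs inside $uu$ at some internal position $d$ with $0<d<|u|$. Reading off the two occurrences of $u$ at relative positions $0$ and $d$ shows that the prefix of $uu$ of length $d+|u|$ admits the period $d$; since $uu$ itself has period $|u|$, this prefix has the two periods $d$ and $|u|$, and its length $d+|u|$ satisfies the hypothesis of the Fine--Wilf theorem. Hence the prefix has period $g:=\gcd(d,|u|)$, a proper divisor of $|u|$, so $u$ is a power of its length-$g$ prefix, contradicting primitivity.

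Next I would transfer this to occurrences in $v$. Suppose $uu$ occurs at positions $i<j$ of $v$ with $d:=j-i<|u|$. The occurrence at $j$ places a copy of $u$ at relative position $d$ inside the occurrence of $uu$ at $i$; since $0<d<|u|$ and this copy ends at $j+|u|-1=i+d+|u|-1<i+2|u|-1$, it lies strictly inside the square starting at $i$, i.e., it is an internal occurrence of $u$ in $uu$. This contradicts the synchronization property, so any two distinct occurrences of $uu$ in $v$ start at positions differing by at least $|u|$.

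Finally, the starting positions of occurrences of $uu$ in $v$ all lie in the range $[1,|v|-2|u|+1]$ and are pairwise at least $|u|$ apart; dividing, the number of occurrences is at most $|v|/|u|$, as claimed. I expect the only genuine subtlety to be the synchronization step, and in particular getting the index bookkeeping in the Fine--Wilf application right (which factor carries which period over which range); everything downstream is routine arithmetic. As an alternative route to the separation one could instead invoke \cref{lemmaRepetitionsOverlap}: two occurrences of $uu$ closer than $|u|$ overlap in more than $|u|$ symbols and hence, by that lemma, must belong to a single maximal repetition of minimal period $|u|$, within which the same primitivity reasoning again forces spacing that is a multiple of $|u|$.
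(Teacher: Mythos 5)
Your proof is correct, but note that the paper itself gives no proof of this statement at all: it is quoted verbatim from Kolpakov et al.\ as a corollary of \Cref{lemmaRepetitionsOverlap}, and the intended derivation there runs through the runs machinery --- each occurrence of the primitive square $uu$ has minimal period exactly $|u|$, extends to a maximal repetition with minimal period $|u|$, distinct such repetitions overlap by less than $|u|$, and within a single repetition the occurrences are phase-locked to multiples of the period. You instead give a self-contained, more elementary argument: you prove the synchronization property of primitive words directly from Fine--Wilf (your index bookkeeping is right --- the internal occurrence at offset $d$ forces period $d$ on the length-$(|u|+d)$ prefix of $uu$, which together with period $|u|$ and $\gcd(d,|u|)<|u|$ kills primitivity), deduce that any two occurrences of $uu$ in $v$ start at least $|u|$ apart, and count. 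The counting even has slack: starting positions lie in $[1,|v|-2|u|+1]$ with pairwise gaps at least $|u|$, so the number of occurrences is at most $1+(|v|-2|u|)/|u| = |v|/|u|-1$, strictly better than the stated bound. What your route buys is independence from the theory of maximal repetitions --- useful if one wants the corollary without importing \Cref{lemmaRepetitionsOverlap}; what the paper's route buys is uniformity, since \Cref{lemmaRepetitionsOverlap} is needed elsewhere (e.g.\ in the proof of \Cref{lemmaP}) anyway. Your closing alternative sketch via \Cref{lemmaRepetitionsOverlap} is essentially the cited derivation, and you correctly observe that even that route still needs the synchronization step inside a single run, so your Fine--Wilf argument is doing the real work either way.
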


\begin{Lemma}
	Inverting a gapped repeat (palindrome) results in a gapped repeat (palindrome) with the same period. 
	Hence there exist the bijections
	$\grGR \sim \grGR[\pali{w}]$ and $\gpGR \sim \gpGR[\pali{w}]$.
\end{Lemma}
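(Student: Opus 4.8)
The plan is to realize the inversion by the position-reversal map $\phi\colon i\mapsto\abs{w}+1-i$, which satisfies $\pali{w}[\phi(i)]=w[i]$, and to check that $\phi$ sends gapped repeats (palindromes) of $w$ to gapped repeats (palindromes) of $\pali{w}$ while preserving every relevant quantity.

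First I would verify the statement at the level of factors. For a gapped repeat $\l\arm,\gap,\r\arm$ with common arm factor $u:=w[\l\arm]$ and gap factor $v:=w[\gap]$, the factor $w[\b{\l\arm},\e{\r\arm}]\equiv uvu$ reverses to $\pali{(uvu)}=\pali{u}\,\pali{v}\,\pali{u}$, which is again a gapped repeat, with arms $\pali{u}$ and gap $\pali{v}$. For a gapped palindrome, $w[\b{\l\arm},\e{\r\arm}]\equiv uv\pali{u}$ reverses to $\pali{(uv\pali{u})}=u\,\pali{v}\,\pali{u}$, again a gapped palindrome, with left arm $u$ and gap $\pali{v}$. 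In both cases the arm length $\abs{u}$ and the period $\abs{\l\arm\gap}$ (respectively the gap length $\abs{\gap}$) are unchanged, so the $\alpha$-gapped inequality $\abs{\l\arm}+\abs{\gap}\le\alpha\abs{\l\arm}$ is preserved by the inversion.

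The main bookkeeping step is to show that maximality is preserved. Since $\phi$ reverses the order of positions, the image $\phi(\r\arm)$ of the right arm becomes the left arm of the inverted repeat and $\phi(\l\arm)$ becomes its right arm; for palindromes the two arms swap roles in the same way, each being individually reversed. Using the elementary identities $\phi(i)-1=\phi(i+1)$ and $\phi(i)+1=\phi(i-1)$, each boundary position queried by a maximality condition in $w$ is carried to the mirrored boundary position in $\pali{w}$. Tracing this through, the left-extension condition $w[\b{\l\arm}-1]\neq w[\b{\r\arm}-1]$ of $w$ becomes exactly the right-extension condition of the inverted repeat in $\pali{w}$, and symmetrically the right-extension condition of $w$ becomes the left-extension condition of $\pali{w}$; for palindromes the inner and outer extension conditions are exchanged in the same fashion. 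Hence the inversion maps $\grGR$ into $\grGR[\pali{w}]$ and $\gpGR$ into $\gpGR[\pali{w}]$. Boundary cases, where an extension would reach position $0$ or $\abs{w}+1$, cause no trouble if these are read as distinct sentinels, so that the conditions hold vacuously at the ends of the word and are respected by $\phi$.

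Finally, because $\phi$ is an involution ($\phi\circ\phi=\id$) and reversal of factors is an involution, applying the same construction to $\pali{w}$ returns each inverted repeat to its original in $\pali{\pali{w}}=w$. The inversion is therefore its own inverse after identifying $\pali{\pali{w}}$ with $w$, which yields the asserted bijections $\grGR\sim\grGR[\pali{w}]$ and $\gpGR\sim\gpGR[\pali{w}]$. The hard part will be only the index arithmetic in the maximality step, in particular keeping track of the fact that inversion swaps the two arms and, for palindromes, reverses each arm; I expect nothing conceptually difficult beyond this careful bookkeeping.
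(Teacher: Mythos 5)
Your proposal is correct, and it matches the paper exactly insofar as the paper states this lemma \emph{without any proof}, treating precisely this position-reversal bookkeeping (the involution $i \mapsto \abs{w}+1-i$ swapping the arms while preserving arm length, period, gap length, the $\alpha$-gapped inequality, and the maximality conditions) as immediate; your write-up simply fills in the details the authors omit. One cosmetic slip: for palindromes the outer maximality condition is carried to the outer condition and the inner to the inner (they are \emph{not} exchanged, unlike the repeat case, where left- and right-extension do swap), but since the conjunction of the two conditions is preserved either way, this does not affect the validity of your argument.
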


\section{Point analysis}\label{secPointAnalysis}

A pair of positive integers is called a \intWort{point}.
We use points to bound the cardinality of a subset of gapped repeats and gapped palindromes
by injectively mapping a gapped repeat (palindrome) to a point as stated above. 
To this end, we show that some vicinity of any point generated by a member of this subset does not contain any
point that is generated by another member. This vicinity is given by
\begin{definition}
  For any $\gamma \in (0,1]$, we say that a point $(x, y)$ $\gamma$-\intWort{covers} a point $(x', y')$
  iff $x - \gamma y \le x' \le x$ and $y - \gamma y \le y' \le y$.
\end{definition}
It is crucial that the $\gamma$~factor is always multiplied with the $y$-coordinates.
In other words, the number of $\gamma$-covers of a point $(\cdot,y)$ correlates with $\gamma$ and the value~$y$.
The main property of this definition is given by

\begin{Lemma}\label{lemmaPoints}
For any $\gamma \in (0,1]$,
let $S \subset {[1,n]}^2$ be a set of points such that no two distinct points in~$S$ $\gamma$-cover the same point.
  Then $\abs{S} < 3 n / \gamma$.
\end{Lemma}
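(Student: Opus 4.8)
The plan is to read the hypothesis geometrically. For a point $(x,y)\in S$ let $C(x,y)$ be the set of integer points it $\gamma$-covers, i.e.\ the axis-aligned grid $\{x-\lfloor\gamma y\rfloor,\dots,x\}\times\{y-\lfloor\gamma y\rfloor,\dots,y\}$ of $(\lfloor\gamma y\rfloor+1)^2$ cells anchored at its top-right corner $(x,y)$. The assumption that no two distinct points of $S$ $\gamma$-cover a common point says exactly that the sets $C(x,y)$, $(x,y)\in S$, are pairwise disjoint; moreover each lies inside $[1,n]^2$, since every covered point $(x',y')$ satisfies $1\le x'\le x\le n$ and $1\le y'\le y\le n$. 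A naive area argument (summing $(\lfloor\gamma y\rfloor+1)^2\le n^2$) is hopelessly weak, because points with $\gamma y<1$ cover only themselves and carry no area. I would therefore split $S$ according to the side length $\lfloor\gamma y\rfloor+1$ of the covered grid and treat the degenerate part on its own.

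First I would dispose of the degenerate points $S_0:=\{(x,y)\in S:\gamma y<1\}$. Here $\lfloor\gamma y\rfloor=0$, so $C(x,y)=\{(x,y)\}$ and the only constraint is that the points be distinct. Since $\gamma y<1$ forces $1\le y\le\lceil 1/\gamma\rceil-1$ while $1\le x\le n$, there are fewer than $n/\gamma$ such points.

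For the remaining points I would group dyadically: for $k\ge 1$ let $L_k:=\{(x,y)\in S:2^k\le\lfloor\gamma y\rfloor+1<2^{k+1}\}$, so that every $C(x,y)$ with $(x,y)\in L_k$ occupies at least $4^k$ cells. The key step is a \emph{vertical confinement}: the bound $\lfloor\gamma y\rfloor+1<2^{k+1}$ gives $y<2^{k+1}/\gamma$, while the covered grid extends downward by only $\lfloor\gamma y\rfloor<2^{k+1}$ rows and $y\ge(2^k-1)/\gamma$; hence all cells covered by points of $L_k$ lie in a horizontal slab $[1,n]\times[\,\cdot\,,\,\cdot\,]$ whose height is $O(2^k/\gamma)$ — crucially proportional to the side length divided by $\gamma$, not to $n$. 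This slab contains $O(n\,2^k/\gamma)$ integer cells, and since the pairwise-disjoint grids inside it each use at least $4^k$ of them, $|L_k|\le O\!\big(n/(\gamma 2^k)\big)$.

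Summing the resulting geometric series over $k\ge 1$ gives $\sum_{k\ge1}|L_k|\le 2n/\gamma$, which together with $|S_0|<n/\gamma$ yields $|S|<3n/\gamma$. The step I expect to be the main obstacle is precisely the vertical-confinement estimate: it is what rescues the area argument from the useless small squares, and what turns the per-band counts into a convergent geometric (rather than a harmonic, hence logarithmic) series. Pinning down the constants there — so that the degenerate part contributes exactly $n/\gamma$ and the dyadic part exactly $2n/\gamma$, absorbing the lower-order rounding terms from passing between slab height and integer-row count — is the only delicate calculation, and it is where the precise constant $3$ is earned.
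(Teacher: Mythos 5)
Your proposal is correct and takes essentially the same route as the paper's own proof: you split off the degenerate points with $\gamma y<1$ (fewer than $n/\gamma$ of them), group the remaining points into dyadic bands according to the size of $\gamma y$, pack the pairwise-disjoint covered grids of side at least $2^k$ into a horizontal slab of height proportional to $2^k/\gamma$ to get at most $O\bigl(n/(2^k\gamma)\bigr)$ points per band, and sum the geometric series to $2n/\gamma$, for a total below $3n/\gamma$. If anything, your explicit \emph{vertical confinement} step spells out the packing argument that the paper compresses into its terse ``in other words'' sentence, so the two proofs differ only in presentation, not in substance.
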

\begin{proof}
	We estimate the maximal number of points that can be placed in ${[1,n]}^2$ such that their covered points are disjoint.
	First, the number of points $(\cdot,y) \in {[1,n]}^2$ with $y < 1/\gamma$ is less than $n / \gamma$.
	Second, if a point $(\cdot, y)$ satisfies $2^l / \gamma \leq y < 2^{l+1} / \gamma$ for some integer $l \geq 0$,
	the point~$(\cdot, y)$ $\gamma$-covers at least $2^l \times 2^l$ points, or to put it differently,
	this point $\gamma$-covers at least $2^l$ points $(\cdot,y')$ with $y - 2^l \le y' \le y$.
	In other words, there are at most $n / (2^l \gamma)$ points in $S$ with $2^l / \gamma \leq y < 2^{l+1} / \gamma$.
  Hence, $\abs{S} < n / \gamma + \sum_{l=0}^\infty n / (2^l \gamma) = 3 n / \gamma$.
\end{proof}

Kolpakov et al.~\cite{KolpakovPPK14} split the set of maximal $\alpha$-gapped repeats into three subsets, and studied the maximal size of each subset.
They analyzed maximal $\alpha$-gapped repeats by partitioning them into three 
subsets: 
\sitemize*{%
	\item those contained in some repetition,
	\item those having arms containing a periodic prefix or suffix that is larger than half of the size of the arms, 
	\item and those not belonging to both former subsets.
} 

They showed that the first two subsets contain at most $\Oh{\alpha n}$ elements.
The point analysis is used as a tool for studying the last subset.
By mapping a gapped repeat to a point consisting of the end position of its left arm and its period,
they showed that the points created by two different maximal $\alpha$-gapped repeats cannot $\frac{1}{4\alpha}$-cover the same point.
By this property, they bounded the size of the last subset by $\Oh{\alpha^2 n}$.
\Cref{lemmaPoints} immediately improves this bound of $\Oh{\alpha^2 n}$ to $\Oh{\alpha n}$. Consequently, it shows that the number of maximal $\alpha$-gapped repeats of a word of length $n$ is $\Oh{\alpha n}$.

\section{Gapped repeats}

We optimize the proof technique from~\cite{KolpakovPPK14} and 
improve the upper bound of the number of maximal $\alpha$-gapped repeats in a word of length $n$ from $\Oh{\alpha n}$ to $18 \alpha n$.
Unlike~\cite{KolpakovPPK14,KKC15}, we partition the maximal $\alpha$-gapped repeats differently.
We categorize a gapped repeat (palindrome) depending on whether their left arm contains a periodic prefix or not.
Both subsets are treated differently.
For the ones having an periodic prefix, we think about the number of runs covering this prefix.
The other category is analyzed by using the results of \Cref{secPointAnalysis}.
We begin with a formal definition of both subsets and analyze the former subset.

Let $0 < \beta < 1$.
A gapped repeat (palindrome) $\agr = \l\arm,\gap,\r\arm$ belongs to \grP{} (\gpP{}) iff \l{\arm} contains 
a periodic prefix of length at least $\beta\abs{\l{\arm}}$.
We call $\agr$ \intWort{periodic}.
Otherwise $\agr \in \grNP$ ($\agr \in \gpNP$), where $\grNP := \grGR \setminus \grP$ and $\gpNP := \gpGR \setminus \gpP$;
we call $\agr$ \intWort{aperiodic}.

\begin{Lemma}\label{lemmaP}
Let $w$ be a word, $\alpha > 1$ and $0 < \beta < 1$ 
two real numbers.
Then $\abs{\grP}$ is at most $2 \alpha E(w) / \beta$. 
\end{Lemma}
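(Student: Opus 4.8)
The plan is to assign every member of $\grP$ to a single run of $w$ and to prove that each run $\rep$ receives at most $2\alpha\exp(\rep)/\beta$ repeats; summing over all runs and using $\sum_\rep\exp(\rep)=E(w)$ then yields $\abs{\grP}\le 2\alpha E(w)/\beta$.

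For the assignment, fix $\agr=\l\arm,\gap,\r\arm\in\grP$ and let $P$ be the longest periodic prefix of the left arm $\l\arm$, with $\period=\per(P)$. As $P$ is periodic, $\abs{P}\ge 2\period$, so $P$ is contained in a unique run $\rep$ of minimal period $\period$; I assign $\agr$ to $\rep$. Two elementary bounds already explain the shape of the target. From the defining inequality $\abs{P}\ge\beta\abs{\l\arm}$ together with $P\subseteq\rep$ we get $\abs{\l\arm}\le\abs{P}/\beta\le\abs{\rep}/\beta$; and the $\alpha$-gapped condition $\abs{\l\arm}+\abs{\gap}\le\alpha\abs{\l\arm}$ gives that the period satisfies $\abs{\l\arm\gap}\le\alpha\abs{\l\arm}\le\alpha\abs{\rep}/\beta=\alpha\period\exp(\rep)/\beta$. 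Hence every repeat assigned to $\rep$ has period at most $\alpha\period\exp(\rep)/\beta$.

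The heart of the proof is the per-run count. Since $\r\arm\equiv\l\arm$, the right arm starts with a copy $P'$ of $P$; then $P'$ also has minimal period $\period$ and length $\ge2\period$, so it lies in a run $\rep'$ of minimal period $\period$ whose starting position is within distance at most $\alpha\abs{\rep}/\beta$ to the right of $\rep$. I would then bound the number of maximal repeats assigned to $\rep$ by controlling the admissible runs $\rep'$: all of them have minimal period $\period$ and meet a window of length $O(\alpha\abs{\rep}/\beta)$ around $\rep$, and by \Cref{lemmaRepetitionsOverlap} two distinct runs of minimal period $\period$ overlap in fewer than $\period$ positions, so their starting positions are more than $\period$ apart. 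Consequently at most $O(\alpha\abs{\rep}/(\beta\period))=O(\alpha\exp(\rep)/\beta)$ such runs exist -- dividing the window length by $\period$ is exactly what converts $\abs{\rep}=\period\exp(\rep)$ into the exponent. Combined with the observation that a matching pair of periodic prefixes of length $\ge2\period$ fixes the relative phase of $\rep$ and $\rep'$, and that maximality forces the arms to span the whole induced match, this keeps the number of distinct maximal repeats per run proportional to $\exp(\rep)$.

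The main obstacle is to make this count tight in two respects. First, for a fixed pair of runs $(\rep,\rep')$ both degrees of freedom -- the start of the left arm inside $\rep$ and the period -- must be pinned down simultaneously; the remedy is to work with the injective encoding of a maximal repeat by $(\e{\l\arm},\abs{\l\arm\gap})$ and to use maximality (non-extendability of the arms) to eliminate the start as a free parameter, so that only the separation of equal-period runs, governed by \Cref{lemmaRepetitionsOverlap}, is counted. Second, the degenerate case $\rep'=\rep$, where both periodic prefixes lie in one run, must be treated apart: here two occurrences of the $\period$-periodic factor inside a single run of minimal period $\period$ must be aligned, which forces $\abs{\l\arm\gap}$ to be a multiple of $\period$, and there are at most $\alpha\exp(\rep)/\beta$ admissible multiples. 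Assembling both cases and carefully accounting for the constants -- using $\alpha>1>\beta$ to swallow the lower-order terms -- yields the bound $2\alpha\exp(\rep)/\beta$ per run, and the summation over runs completes the argument.
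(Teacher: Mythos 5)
Your global plan coincides with the paper's: charge each $\agr\in\grP$ to the run $\l\rep$ generating the periodic prefix of its left arm, prove a per-run bound of order $\alpha\exp(\l\rep)/\beta$, and sum via $E(w)$; your two preliminary estimates ($\abs{\l\arm}\le\abs{\l\rep}/\beta$ and $\abs{\l\arm\gap}\le\alpha\period\exp(\l\rep)/\beta$) are exactly the paper's. The proof breaks, however, at the heart of the per-run count: you enumerate partner runs $\r\rep$ meeting the window and claim that the fixed relative phase together with maximality leaves (essentially) one maximal repeat per pair of runs. That claim is false. Take a primitive word $z$ and $w = a\,z^5\,b\,z^9\,d$ with $a,b,d$ letters not occurring in $z$; the only runs with minimal period $\abs{z}$ are the blocks $z^5$ and $z^9$. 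For each offset $i=0,1,\dots,4$, the left arm $z^5$ filling the first run and the right arm $z^5$ starting at offset $i\abs{z}$ inside the second run form a \emph{maximal} gapped repeat (the letters $a,b,d$ and the first/last letter of $z$ block every extension in every one of these cases), all five are $\alpha$-gapped already for $\alpha=2$, and all lie in $\grP$ for every $\beta<1$ since the arms are fully periodic. So a single pair of runs hosts $\Theta(\exp(\r\rep))$ distinct maximal repeats, and ``number of runs in the window times $O(1)$'' undercounts; your degenerate case $\r\rep=\l\rep$ is handled correctly, but it is not where the difficulty sits.

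The missing idea is the dichotomy the paper extracts from maximality: if $\b{\l\arm}>\b{\l\rep}$ \emph{and} $\b{\r\arm}>\b{\r\rep}$, then by $\period$-periodicity inside the two runs $w[\b{\l\arm}-1]=w[\b{\l\arm}+\period-1]=w[\b{\r\arm}+\period-1]=w[\b{\r\arm}-1]$, so the repeat could be extended to the left; hence $\b{\l\arm}=\b{\l\rep}$ or $\b{\r\arm}=\b{\r\rep}$. In each branch one arm's start is pinned to a run boundary, the repeat is then determined by that run together with the period $q$, two distinct repeats sharing the run satisfy $\abs{q_1-q_2}\ge\period$ (synchronization of the primitive root inside a run, via \Cref{lemmaRepetitionsOverlap}), and $1\le q\le\alpha\abs{\rep}/\beta$ then gives at most $\alpha\exp(\rep)/\beta$ repeats per run \emph{per branch} --- one counts admissible period values directly, never partner runs. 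The factor $2$ is the sum over the two branches (the second branch charges to $\r\rep$, not to $\l\rep$), not over your same-run/distinct-run split. Note finally that the lemma demands the explicit constant $2\alpha E(w)/\beta$: your $O(\cdot)$ bookkeeping (boundary runs per window, per-pair multiplicities) cannot be ``swallowed'' by $\alpha>1>\beta$, so even a repaired run-pair count would not deliver the stated constant.
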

\begin{proof}
  Let $\agr = (\l\arm,\gap,\r\arm) \in \grP$.
  By definition, the left arm~$\l\arm$ has a periodic prefix~$\l\subarm$ of length at least $\beta \abs{\l\arm}$.
  Let $\l\rep$ denote the run that generates $\l\subarm$, i.e.,
  $\l\subarm \subseteq \l\rep$ and they both have the common shortest period $\period$.
  By the definition of gapped repeats, there is a right copy $\r\subarm$ of $\l\subarm$ contained in $\r\arm$ with
  $%
	  \r\subarm = 
		  w[\b{\l\subarm}+\abs{\l\arm\gap},\e{\l\subarm}+\abs{\l\arm\gap}] \equiv \l\subarm. 
  $%
  
  Let $\r\rep$ be a run generating $\r\subarm$ (it is possible that $\r\rep$ and $\l\rep$ are identical).
  By definition, $\r\rep$ has the same period~$\period$ as $\l\rep$.
  In the following, we will see that $\agr$ is uniquely determined by $\l\rep$ and 
	  the period $q := \abs{\l\arm\gap}$, if $\agr$ is a periodic gapped repeat.
  We will fix $\l\rep$ and pose the question how many maximal periodic gapped repeats can be generated by $\l\rep$.
  
  Since $\agr$~is maximal, $\b{\l\arm} = \b{\l\rep}$ or $\b{\r\arm} = \b{\r\rep}$ must hold;
  otherwise we could extend $\agr$ to the left. 
  We analyze the case $\b{\l\subarm} = \b{\l\rep}$, the other is treated exactly in the same way by symmetry.
    The gapped repeat~$\agr$ is identified by~$\l\rep$ and the period~$q$.
    We fix $\l\rep$ and count the number of possible values for the period~$q$.
    Given two different gapped repeats $\agr_1$ and $\agr_2$ with respective periods $q_1$ and $q_2$ such that the left arms of both
	are generated by $\l\rep$, the difference~$\delta$ between $q_1$ and $q_2$ must be at least $\period$.

    Since $\abs{\l\arm} \leq \abs{\l\subarm} / \beta$ and $\agr$ is $\alpha$-gapped,
    $1 \leq q \leq \abs{\l\subarm} \alpha / \beta \leq \abs{\l\rep} \alpha / \beta$.
    Then the number of possible periods $q$ is bounded by 
    $\abs{\l\rep} \alpha / (\beta \period) = \exp(\l\rep) \alpha / \beta$.
    Therefore the number of maximal $\alpha$-gapped repeats is bounded by $\alpha E(w) / \beta$ for the case $\b{\l\arm} = \b{\l\rep}$.
  Summing up we get the bound $2 \alpha E(w) / \beta$.
  \end{proof}

Remembering the results of \Cref{secPointAnalysis}, we map gapped repeats  to their respective points.
By using the period as the $y$-coordinate, one can show \Cref{lemmaNPCover}.

\begin{Lemma}\label{lemmaNPCover}
	Given a word~$w$, and two real numbers $\alpha > 1$ and $2/3 \leq \beta < 1$.
	The points mapped by two different maximal gapped repeats in~$\grNP$ cannot $\frac{1-\beta}{\alpha}$-cover the same point.
\end{Lemma}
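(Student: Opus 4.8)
The plan is to argue by contradiction. Suppose two distinct repeats $\agr_1,\agr_2\in\grNP$ map to points that both $\gamma$-cover a common point $(x',y')$, where $\gamma=\tfrac{1-\beta}{\alpha}$, and derive that one of them must in fact possess a long periodic prefix, i.e.\ lie in $\grP$. Write $x_i=\e{\l\arm}$, $q_i=\abs{\l\arm\gap}$, and $\ell_i$ for the end position of the left arm, the period, and the length of the left arm of $\agr_i$, so that $\agr_i$ is mapped to $(x_i,q_i)$. Unfolding the two covering conditions immediately gives that periods and left-arm endpoints are close: $\abs{q_1-q_2}\le\gamma\max(q_1,q_2)$ and $\abs{x_1-x_2}\le\gamma\max(q_1,q_2)$. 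Since each $\agr_i$ is $\alpha$-gapped we have $\ell_i\ge q_i/\alpha$; together with $\beta\ge 2/3$ (hence $\gamma\le\tfrac{1}{3\alpha}<\tfrac1\alpha$) this forces the two left arms to overlap in a block of length at least a fixed fraction of $\max(q_1,q_2)/\alpha$, and in particular a block of length at least $\delta:=\abs{q_1-q_2}$.

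First I would dispose of the degenerate case $q_1=q_2=:q$. Here both repeats assert $w[j]=w[j+q]$ over their (by the above, overlapping) left arms, and for a fixed period the positions satisfying this relation split into disjoint maximal equality-runs; two distinct maximal repeats of the same period sit in distinct such runs, so their left arms would be disjoint. This forces $\abs{x_1-x_2}\ge q/\alpha$, contradicting $\abs{x_1-x_2}\le\gamma q<q/\alpha$. From now on $\delta>0$. The core step is to produce a long factor of period $\delta$: for every position $j$ lying in both left arms we have $w[j+q_1]=w[j]=w[j+q_2]$, hence $w[k]=w[k+\delta]$ for $k=j+\min(q_1,q_2)$. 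Thus $w$ has period $\delta$ on an interval whose length equals that of the arm overlap, which by the previous paragraph is at least $\delta$.

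Finally I would transfer this periodicity, through the identity that each right arm is an exact copy of the corresponding left arm, back onto a \emph{prefix} of one of the arms. Chaining the repeat equation of the arm with the larger starting position (the ``right-shifted'' arm) with the derived period-$\delta$ relation, and exploiting that a period-$\delta$ relation constrains a position together with its $\delta$-shift, one extends the periodic block so that it begins exactly at $\b{\l\arm}$ of that arm. A direct computation from $\abs{x_1-x_2}\le\gamma\max(q_1,q_2)$, $\delta\le\gamma\max(q_1,q_2)$, $\ell_i\ge q_i/\alpha$, $\gamma=\tfrac{1-\beta}{\alpha}$ and $\alpha>1$ then shows this prefix has length at least $\beta$ times the arm's length and period $\delta$ at most half its length, i.e.\ it is a periodic prefix witnessing membership in $\grP$ --- contradicting $\agr\in\grNP$. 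I expect this localization to be the main obstacle: one must track several relative positions of the two arms (which starts first, which ends first, and whether $\delta$ exceeds $\abs{x_1-x_2}$), and it is precisely the extra $+\delta$ gained from the period relation, combined with the choice $\beta\ge 2/3$, that makes the length bound $\ge\beta\abs{\l\arm}$ hold in every configuration.
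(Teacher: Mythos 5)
Your proposal is correct and follows essentially the same route as the paper's own proof: assume both points $\frac{1-\beta}{\alpha}$-cover a common point, deduce that the two left arms overlap substantially while the periods differ by a small $\delta=\abs{q-\s{q}}$, obtain period~$\delta$ on that overlap by chaining the two copy relations, and then, by a case analysis on the relative positions of the arms, localize a periodic prefix of length at least $\beta$ times an arm length (your ``extra $+\delta$'' extension is exactly what the paper needs in its sub-case~2b, where the overlap alone is too short relative to $\delta$), contradicting membership in $\grNP$. Your only real deviation is cosmetic: you dispose of the equal-period case via disjointness of the maximal intervals of the relation $w[j]=w[j+q]$, whereas the paper folds it into its sub-claim by deriving $w[\e{\l\arm}+1]=w[\e{\r\arm}+1]$ from the other repeat and contradicting maximality --- both arguments exploit maximality in the same way.
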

\begin{proof}
	Let $\agr = \l\arm,\gap,\r\arm$ and $\s\agr = \s{\l\arm},\s{\gap},\s{\r\arm}$ be two different maximal gapped repeats in~$\grNP$.
  Set $\arm := \abs{\l\arm} = \abs{\r\arm}$, $\s\arm := \abs{\s{\l\arm}} = \abs{\s{\r\arm}}$, $q := \abs{\l\arm\gap}$ and $\s{q} := \abs{\s{\l\arm}\s{\gap}}$.
  We map the maximal gapped repeats~$\agr$ and~$\s\agr$ to the points~$(\e{\l\arm}, q)$ and~$(\e{\s{\l\arm}},\s{q})$, respectively.
  Assume, for the sake of contradiction, that both points $\frac{1-\beta}{\alpha}$-cover the same point $(x, y)$.

  Let $\zarm := \abs{\e{\l\arm} - \e{\s{\l\arm}}}$ be the difference of the endings of both left arms, and
  $\l\subarm := w[ [\b{\l\arm},\e{\l\arm}]\cap [\b{\s{\l\arm}},\e{\s{\l\arm}}] ]$ 
  be the overlap of $\l\arm$ and $\s{\l\arm}$.
  Let $\subarm := \abs{\l\subarm}$, and
  let $\r\subarm$ (resp. $\s{\r\subarm}$) be the right copy of $\l\subarm$ based on $\agr$ (resp. $\s\agr$).

  {\bf Sub-Claim:}  The overlap~$\l\subarm$ is not empty, and $\r\subarm \not= \s{\r\subarm}$

  {\bf Sub-Proof.}
  Assume for this sub-proof that $\e{\l\arm} < \e{\s{\l\arm}}$ (otherwise exchange $\agr$ with~$\s\agr$, or yield the contradiction $\agr = \s\agr$). 
  By combining the $(1-\beta)/\alpha$-cover property with the fact that $\s\agr$ is $\alpha$-gapped, we yield
  \(
	  \e{\s{\l\arm}} - \s\arm \le \e{\s{\l\arm}} - \s{q}(1-\beta)/\alpha \le x \le \e{\l\arm} < \e{\s{\l\arm}}.
  \)
  So the subword $w[\e{\l\arm}]$ is contained in $\s{\l\arm}$.
  If $\r\subarm = \s{\r\subarm}$, then we get a contradiction to the maximality of $\agr$:
  By the above inequality, $w[\e{\l\arm}+1]$ is contained in $\s{\l\arm}$, too. 
  Since $\s\agr$ is a gapped repeat, the character~$w[\e{\l\arm}+1]$ occurs in $\s{\r\arm}$, exactly at $w[\e{\r\arm}+1]$.
  \qed{}

  \Wlogg{} let $q \le \s{q}$.  Then
  \begin{flalign}
	  \label{equOrdGapRuleInv}
	  &\s{q} - \frac{\s{q}(1-\beta)}{\alpha} \le y \le q \le \s{q}. 
  &\\
  \label{equOrdGapRule}
	  &\text{So the difference of both periods is }
    0 \le \delta := \s{q} - q \le \s{q}(1-\beta)/\alpha \le \s\arm(1-\beta).
	&\\
  \label{equOrdGapRuleTakeFirstArm}
  &\text{\Cref{equOrdGapRuleInv} also yields that }
  \arm \ge q/\alpha \ge \frac{\s{q}}{\alpha} (1 - \frac{1-\beta}{\alpha}) \ge \s{q}\beta/\alpha.
  &
  \end{flalign}
  Since $\r\subarm = [\b{\l\subarm}+q,\e{\l\subarm}+q]$ 
  and $\s{\r\subarm} = [\b{\l\subarm}+\s{q},\e{\l\subarm}+\s{q}]$, we have
  $\b{\s{\r\subarm}} - \b{\r\subarm} = \delta$.
  
  By case analysis, we show that $\l\arm$ or $\s{\l\arm}$ has a periodic prefix,
  which leads to the contradiction that $\agr$ or $\s\agr$ are in $\grP$.

  {\bf 1. Case $\e{\l\arm} \le \e{\s{\l\arm}}$}.
  Since $\e{\s{\l\arm}} - \s{q}(1-\beta)/\alpha \le x \le \e{\l\arm} \le \e{\s{\l\arm}}$,
  \begin{equation}\label{equOrdGapRulePosFirstLeft}
    \zarm = \e{\s{\l\arm}} - \e{\l\arm} \le \s{q}(1-\beta)/\alpha \le \s\arm(1-\beta).
  \end{equation}

  \begin{figure}[h]
    \centering{%
		\begin{tikzpicture}[yscale=0.4,xscale=0.92]
  \col[arm]{3}{$\l\arm$}
  \col[gap]{2}{$\gap$}
  \col[arm]{3}{$\r\arm$}
  \newrow
  \void{0.5}
  \col[arm]{3.5}{$\s{\l\arm}$}
  \col[gap]{2.5}{$\s\gap$}
  \col[arm]{3.5}{$\s{\r\arm}$}
  \newrow
  \void{0.5}
  \col[subarm]{2.5}{$\l\subarm$}
  \col[diff]{1}{$\zarm$}
  \void{1.5}
  \col[subarm]{2.5}{$\r\subarm$}
  \newrow
  \void{5.5}
  \col[rep]{1}{$\delta$}
  \col[subarm]{2.5}{$\s{\r\subarm}$}
	\end{tikzpicture}
    }
	\caption{Sub-Case 1a}
  \end{figure}
  {\bf 1a. Sub-Case $\b{\l\arm} \le \b{\s{\l\arm}}$}.
  By~\Cref{equOrdGapRulePosFirstLeft}, we get $\subarm = \s\arm - \zarm \ge \s\arm \beta$.
  It follows from~\Cref{equOrdGapRule} and $2/3 \le \beta < 1$ that
  $\subarm / \delta \ge \s\arm \beta / \s\arm (1-\beta) = \beta / (1-\beta) \ge 2$,
  which means that $\r\subarm$ and $\s{\r\subarm}$ overlap at least half of their common length, so $\l\subarm$ is periodic.
  Since $\l\subarm$ is a prefix of $\s{\l\arm}$ of length $\subarm \ge \s\arm \beta$, $\s\agr$ is in $\grP$, a contradiction.

  \begin{figure}[h]
    \centering{%
		\begin{tikzpicture}[yscale=0.4,xscale=0.92]
			\void{0.5}
  \col[arm]{3}{$\l\arm$}
  \col[gap]{2}{$\gap$}
  \col[arm]{3}{$\r\arm$}
  \newrow
  \col[arm]{4}{$\s{\l\arm}$}
  \col[gap]{2.5}{$\s\gap$}
  \col[arm]{4}{$\s{\r\arm}$}
  \newrow
  \void{0.5}
  \col[subarm]{3}{$\l\subarm$}
  \col[diff]{0.5}{$\zarm$}
  \void{1.5}
  \col[subarm]{3}{$\r\subarm$}
  \newrow
  \void{5.5}
  \col[rep]{1.5}{$\delta$}
  \col[subarm]{3}{$\s{\r\subarm}$}
	\end{tikzpicture}
    }
	\caption{Sub-Case 1b}
  \end{figure}
  {\bf 1b. Sub-Case $\b{\l\arm} > \b{\s{\l\arm}}$}.
  We conclude that $\l\subarm = \l\arm$.
  It follows from~\Cref{equOrdGapRule,equOrdGapRuleTakeFirstArm} and $2/3 \le \beta < 1$ that
  $\subarm / \delta \ge \s{q} \alpha \beta / (\s{q} \alpha (1-\beta)) = \beta / (1-\beta) \ge 2$,
  which means that $\l\subarm = \l\arm$ is periodic.
  Hence $\agr$ is in $\grP$, a contradiction.

  {\bf 2. Case $\e{\l\arm} > \e{\s{\l\arm}}$}.
  Since $\e{\l\arm} - q(1-\beta)/\alpha \le x \le \e{\s{\l\arm}} \le \e{\l\arm}$,
  \begin{equation}\label{equOrdGapRulePosFirstRight}
    \zarm = \e{\l\arm} - \e{\s{\l\arm}} \le q(1-\beta)/\alpha \le \s{q}(1-\beta)/\alpha \le \s\arm(1-\beta).
  \end{equation}

  \begin{figure}[h]
    \centering{%
		\begin{tikzpicture}[yscale=0.4,xscale=0.92]
  \col[arm]{3}{${\l\arm}$}
  \col[gap]{2.5}{$\gap$}
  \col[arm]{3}{${\r\arm}$}
  \newrow
  \void{0.5}
  \col[arm]{2}{$\s{\l\arm}$}
  \col[gap]{4.5}{$\s\gap$}
  \col[arm]{2}{$\s{\r\arm}$}
  \newrow
  \void{0.5}
  \col[subarm]{2}{$\l\subarm$}
  \col[diff]{0.5}{$\zarm$}
  \void{3}
  \col[subarm]{2}{$\r\subarm$}
  \newrow
  \void{6}
  \col[rep]{1}{$\delta$}
  \col[subarm]{2}{$\s{\r\subarm}$}
	\end{tikzpicture}
    }
	\caption{Sub-Case 2a}
  \end{figure}
  {\bf 2a. Sub-Case $\b{\l\arm} \le \b{\s{\l\arm}}$}.
  We conclude that $\l\subarm = \s{\l\arm}$.
  It follows from~\Cref{equOrdGapRule} and $2/3 \leq \beta < 1$ that
  $\subarm / \delta \ge \s\arm / (\s\arm (1-\beta)) = 1 / (1-\beta) \ge 3 > 2$,
  which means that $\l\subarm = \s{\l\arm}$ is periodic.
  Hence $\s\agr$ is in $\grP$, a contradiction.

  \begin{figure}[h]
    \centering{%
		\begin{tikzpicture}[yscale=0.4,xscale=0.92]
  \void{0.5}
  \col[arm]{3.5}{${\l\arm}$}
  \col[gap]{2.5}{$\gap$}
  \col[arm]{3.5}{${\r\arm}$}
  \newrow
  \col[arm]{3}{$\s{\l\arm}$}
  \col[gap]{4.5}{$\s\gap$}
  \col[arm]{3}{$\s{\r\arm}$}
  \newrow
  \void{0.5}
  \col[subarm]{2.5}{$\l\subarm$}
  \col[diff]{1}{$\zarm$}
  \void{2.5}
  \col[subarm]{2.5}{$\r\subarm$}
  \newrow
  \void{6.5}
  \col[rep]{1.5}{$\delta$}
  \col[subarm]{2.5}{$\s{\r\subarm}$}
	\end{tikzpicture}
    }
	\caption{Sub-Case 2b}
  \end{figure}
  {\bf 2b. Sub-Case $\b{\l\arm} > \b{\s{\l\arm}}$}.
  By~\Cref{equOrdGapRulePosFirstRight}, we get $\subarm = \arm - \zarm \ge \arm \beta$.
  If $\delta \le \subarm / 2$, $\r\subarm$ and $\s{\r\subarm}$ overlap at least half of their common length,
  which leads to the contradiction that $\l\arm$ has a periodic prefix $\l\subarm$ of length at least $\arm \beta$.
  Otherwise, let us assume that $\subarm / 2 < \delta$.
  By~\Cref{equOrdGapRule,equOrdGapRuleTakeFirstArm} we get
  $\arm / \delta \ge \s{q} \alpha \beta / (\s{q} \alpha (1-\beta)) = \beta / (1-\beta) \ge 2$
  with $2/3 \le \beta < 1$.
  Hence, $\delta$ is upper bounded by $\arm / 2$; so
  $\r\arm$ has a periodic prefix of length at least $2 \delta$ (since $2 \delta > \subarm \ge \arm \beta$), a contradiction.
\end{proof}

The next \namecref{lemmaNP} follows immediately from~\Cref{lemmaPoints,lemmaNPCover}.
\begin{Lemma}\label{lemmaNP}
  For $\alpha > 1$, $2/3 \leq \beta < 1$ and a word $w$ of length $n$, $\abs{\grNP} < 3 \alpha n / (1 - \beta)$.
\end{Lemma}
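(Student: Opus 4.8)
The plan is to observe that this \namecref{lemmaNP} is a pure composition of the two preceding results, with no new combinatorics required. Recall from the discussion preceding \Cref{secPointAnalysis} that a maximal gapped repeat $\agr = \l\arm,\gap,\r\arm$ is uniquely determined by its left arm and its period, and is therefore mapped \emph{injectively} to the point $(\e{\l\arm}, \abs{\l\arm\gap}) \in {[1,n]}^2$. First I would fix this map and let $S \subseteq {[1,n]}^2$ be the image of $\grNP$ under it; by injectivity we have $\abs{\grNP} = \abs{S}$, so it suffices to bound $\abs{S}$.

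Next I would instantiate \Cref{lemmaPoints} with the parameter $\gamma := (1-\beta)/\alpha$. For this to be legitimate I must check that $\gamma$ lies in the admissible range $(0,1]$: since $\alpha > 1$ and $2/3 \le \beta < 1$ give $0 < 1-\beta \le 1/3$, we obtain $0 < \gamma < 1$, so the hypothesis on $\gamma$ is satisfied. This is the only range verification the proof needs, and it is routine.

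I would then invoke \Cref{lemmaNPCover}, which asserts exactly that two distinct maximal gapped repeats in $\grNP$ map to points that cannot $\tfrac{1-\beta}{\alpha}$-cover a common point; that is, no two distinct points of $S$ $\gamma$-cover the same point. Hence $S$ meets the hypothesis of \Cref{lemmaPoints}, and applying it yields
\[
  \abs{\grNP} = \abs{S} < \frac{3n}{\gamma} = \frac{3\alpha n}{1-\beta},
\]
which is the claimed bound.

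The hard part of this \namecref{lemmaNP} is entirely outsourced: all the genuine case analysis lives in \Cref{lemmaNPCover} (showing the $\gamma$-cover disjointness), while the counting lives in \Cref{lemmaPoints}. The only real obstacle here is ensuring the two interfaces match, namely that the mapping used to define $S$ is the same injective point map to which \Cref{lemmaNPCover} refers, and that $\gamma = (1-\beta)/\alpha \in (0,1]$ so that \Cref{lemmaPoints} applies verbatim. Once these two compatibility checks are in place, the conclusion follows immediately.
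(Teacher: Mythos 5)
Your proof is correct and is exactly the paper's argument: the paper derives \Cref{lemmaNP} immediately by combining \Cref{lemmaPoints} (instantiated with $\gamma = (1-\beta)/\alpha$) and \Cref{lemmaNPCover}, via the injective map of a maximal gapped repeat to $(\e{\l\arm}, \abs{\l\arm\gap})$. Your added verification that $\gamma \in (0,1]$ is a sensible compatibility check that the paper leaves implicit.
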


\begin{theorem}
	Given a word~$w$ of length~$n$, and a real number $\alpha > 1$.
	Then $\abs{\grGR} < 18 \alpha n$.
\end{theorem}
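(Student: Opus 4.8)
The plan is to combine the two bounds already established for the periodic and aperiodic parts of $\grGR$ and then tune the free parameter $\beta$. Since by definition $\grNP = \grGR \setminus \grP$, the sets $\grP$ and $\grNP$ partition $\grGR$, so I would simply write $\abs{\grGR} = \abs{\grP} + \abs{\grNP}$ and bound each summand separately, choosing $\beta$ at the very end to balance the two contributions.

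For the periodic part, \Cref{lemmaP} gives $\abs{\grP} \le 2\alpha E(w)/\beta$; feeding in the run-exponent bound $E(w) < 3n$ from \Cref{lemmaExponent} turns this into $\abs{\grP} < 6\alpha n/\beta$. For the aperiodic part, \Cref{lemmaNP} directly yields $\abs{\grNP} < 3\alpha n/(1-\beta)$, valid for every $\beta$ in the admissible range $2/3 \le \beta < 1$ (the lower limit $2/3$ being inherited from the hypotheses of \Cref{lemmaNP}, and $\alpha>1$ matching the theorem). Adding the two strict inequalities gives
\[
  \abs{\grGR} < \left(\frac{6}{\beta} + \frac{3}{1-\beta}\right)\alpha n,
\]
so everything reduces to minimizing $f(\beta) := 6/\beta + 3/(1-\beta)$ over $[2/3,1)$.

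The only genuine computation is this one-variable optimization, and the one mild subtlety is that the unconstrained minimizer falls outside the admissible interval. Solving $f'(\beta) = -6/\beta^2 + 3/(1-\beta)^2 = 0$ gives $\beta = 2 - \sqrt{2} \approx 0.586$, which is strictly below the forced lower bound $2/3$. Since $f$ is strictly decreasing to the left of $2-\sqrt{2}$ and strictly increasing to its right, its minimum on $[2/3,1)$ is attained at the left endpoint $\beta = 2/3$, where $f(2/3) = 9 + 9 = 18$. Substituting $\beta = 2/3$ then yields $\abs{\grGR} < 18\alpha n$, as claimed. The main obstacle, such as it is, lies precisely in recognizing that the constraint $\beta \ge 2/3$ carried over from \Cref{lemmaNP} is active, so that one must pick $\beta = 2/3$ rather than the naive balancing point $2-\sqrt{2}$; any other admissible choice would only worsen the constant.
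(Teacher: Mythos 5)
Your proposal is correct and follows exactly the paper's own proof: partition $\grGR$ into $\grP$ and $\grNP$, bound these via \Cref{lemmaP} (with $E(w)<3n$ from \Cref{lemmaExponent}) and \Cref{lemmaNP}, and minimize $6/\beta + 3/(1-\beta)$ over $2/3 \le \beta < 1$ to get $18\alpha n$ at $\beta = 2/3$. Your added observation that the unconstrained minimizer $\beta = 2-\sqrt{2}$ lies below $2/3$, so the constraint from \Cref{lemmaNP} is active, is a correct justification of the paper's terse claim that the minimum occurs at $\beta = 2/3$.
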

\begin{proof}
  Combining the results of~\Cref{lemmaP,lemmaNP},
  $\abs{\grGR} = \abs{\grP} + \abs{\grNP} < 2 \alpha E(w) /\beta + 3 \alpha n / (1 - \beta)$
  for $2/3 \leq \beta < 1$.
  Applying~\Cref{lemmaExponent}, the term is upper bounded by $6 \alpha n / \beta + 3 \alpha n / (1 - \beta)$.
  The number is minimal for $\beta = 2/3$, yielding the bound $18 \alpha n$.
\end{proof}

We can bound the number of maximal $\alpha$-gapped palindromes by similar proofs to $28 \alpha n + 7 n$.
This bound solves an open problem in~\cite{gappedPalindroms}, where Kolpakov and Kucherov
conjectured that the number of $\alpha$-gapped palindromes with $\alpha \geq 2$ in a string is linear.
We briefly explain the main differences and similarities needed to understand the relationship between
gapped repeats and palindromes.
Let $\agr$ be a maximal $\alpha$-gapped repeat (or $\alpha$-gapped palindrome).
  If the gapped repeat (palindrome) has a periodic prefix~$\l\subarm$ generated by some run, the right arm has a periodic prefix (suffix)~$\r\subarm$
  generated by a run of the same period. 
  Since $\agr$ is maximal, both runs have to obey constraints that are similar in both cases, considering whether $\agr$ is a gapped repeat or a gapped palindrome.
  So it is easy to change the proof of \Cref{lemmaP} in order to work with palindromes.
  Like with aperiodic gapped repeats,
we can apply the point analysis to the aperiodic $\alpha$-gapped palindromes, too.
As main idea, we map a gapped palindrome  $\l\arm,\gap,\r\arm$ injectively to the pair of integers $(\e{\l\arm}, \abs{\gap})$, 
exchanging the period with the size of the gap. {See appendix for proofs.} %

\section{Algorithms}
The computational model we use to design and analyze our algorithms is the standard unit-cost RAM with logarithmic word size, which is generally used in the analysis of algorithms. 
In the upcoming algorithmic problems, we assume that the words we process are sequences of integers. In general, if the input word has length $n$ then we assume its letters are in $\{1,\ldots,n\}$, so each letter fits in a single memory-word. This is a common assumption in stringology (see, e.g., the discussion in \cite{KaSaBu06}).
For a word $w$, $|w|=n$, we build in $\bigo(n)$ time the suffix array as well as data structures allowing us to retrieve in constant time the length of the longest common prefix of any two suffixes $w[i,n]$ and $w[j,n]$
of $w$, denoted $\LCP_w(i,j)$ (the subscript $w$ is omitted when there is no danger of confusion). Such structures are called $\LCP$ data structures in the following (see, e.g., \cite{KaSaBu06,Gu97}). %
We begin with a simple lemma.
\begin{lemma}\label{periods}
Given a word $w$, $|w|=n$, we can process it in $\bigo(n)$ time such that, for each $i,p\leq n$, we can return in $\bigo(1)$ time the longest factor of period $p$ starting at position $i$ in $w$. 
\end{lemma}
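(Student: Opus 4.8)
The plan is to reduce each query to a single longest-common-prefix computation, exploiting the $\LCP$ data structures that, by the preceding paragraph, are already built in $\bigo(n)$ time. The starting observation is purely combinatorial: a factor $w[i,e]$ has period $p$ if and only if $w[k]=w[k+p]$ holds for every $k$ with $i\le k\le e-p$; in particular every factor of length at most $p$ trivially has period $p$, since the defining condition is then vacuous. Consequently, the longest factor of period $p$ starting at position $i$ is entirely governed by the \emph{first} index $k\ge i$ at which $w[k]\neq w[k+p]$.

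The key step is to identify this first mismatch with an $\LCP$ query. If $i+p>n$, then the whole remaining suffix $w[i,n]$ has length at most $p$, hence trivially has period $p$, and we simply output $w[i,n]$. Otherwise $i+p\le n$, and I set $\ell:=\LCP(i,i+p)$, the length of the longest common prefix of the suffixes $w[i,n]$ and $w[i+p,n]$. By definition of $\ell$ we have $w[k]=w[k+p]$ for all $k$ with $i\le k\le i+\ell-1$, while $w[i+\ell]\neq w[i+\ell+p]$ unless the comparison already reaches the right end of $w$. Hence the first violated equality sits at index $k=i+\ell$, which affects only factors ending at position $i+\ell+p$ or later; the factor $w[i,\,i+\ell+p-1]$ therefore has period $p$, whereas extending it one symbol further does not. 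The answer is thus the factor of length $\ell+p$ ending at $i+\ell+p-1$, and it is produced with one $\LCP$ lookup and constant-time arithmetic, giving the claimed $\bigo(1)$ query time after $\bigo(n)$ preprocessing.

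The remaining obligation is only a boundary check: since the shorter of the two compared suffixes has length $n-i-p+1$, we have $\ell\le n-i-p+1$, so the reported end position $i+\ell+p-1$ never exceeds $n$ and no explicit capping at the end of the word is required. The one point that must be argued rather than merely computed — and what I expect to be the subtle part — is precisely the ``$+p$'' in the length formula: the run of equalities terminates at the mismatch $k=i+\ell$, yet the factor legitimately extends $p-1$ positions beyond it, because its trailing block of length $p$ imposes no periodicity constraint of its own. Everything else is a direct translation of the first-mismatch characterization into a single longest-common-prefix value.
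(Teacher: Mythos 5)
Your proof is correct and follows exactly the paper's approach: a single $\LCP$ query on the suffixes starting at $i$ and $i+p$, with the answer being that common prefix length plus $p$. You additionally spell out the boundary case $i+p>n$ and justify the ``$+p$'' in the length, which the paper's two-line proof leaves implicit, but the underlying argument is the same.
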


Let $w$ be a word and $v$ be a factor of $w$ with $\per(v)=p$. 
Further, let $z$ be a subword of length $\ell |v|$ of $w$. 
An occurrence of $v$ in $z$ is a subword $(v,[i,i+|v|-1])$ of $z$; we say that $v$ occurs at position $i$ in $z$. 
For an easier presentation of our algorithm, we distinguish between two types of occurrences of $v$ in $z$.
On the one hand, we have the so-called \intWort{single occurrences}.
If $v$ is aperiodic, then all its occurrences in $z$ are single occurrences; there are $\bigo(\ell)$ such occurrences (see, e.g., \cite{KociumakaSPIRE2012}).
If $v$ is periodic, then a subword $(v,[i,i+|v|-1])$ of $z$ starting on position $i$ in $z$ is a single occurrence if $v$ occurs neither at position $i-p$ nor at position $i+p$ in $z$.
On the other hand, we have \intWort{occurrences of $v$ within a run} of $z$, whose period is $p=per(v)$.
That is, the subword  $(v,[i,i+|v|-1])$ starting on position $i$ in $z$ is an occurrence of $v$ within a run if $v$ occurs either at $i-p$ or at $i+p$.
We say that $(v,[i,i+|v|-1])$ is the \intWort{first occurrence} of $v$ in a run of period $p$ of $z$ if $v$ does not occur at $i-p$ but occurs at $i+p$.
Note that there are $\bigo(\ell)$ runs containing occurrences of $v$ in $z$, or, equivalently, $\bigo(\ell)$ first occurrences of $v$ in a run of period $p$.

Consequently, the occurrences of $v$ in $z$ can be succinctly represented as follows.
For the single occurrences we just store their starting position.
The occurrences of $v$ in a run~$\rep$ can be represented by the starting position of the first occurrence of $v$ in~$\rep$, 
together with the period of $v$,
since the starting positions of the occurrences of $v$ in~$\rep$ form an arithmetic progression of period $p$.

In our approach, basic factors (i.e., factors of length $2^k$, for $k\geq 1$) of the input word are important. For some integer $c\geq 2$, the occurrences of the basic factor $w[i,i+2^k-1]$ in a subword of length $c2^k$ can be represented in a compact manner: 
$\bigo(c)$ positions of the single occurrences of $w[i,i+2^k-1]$ and $\bigo(c)$ first occurrences of $w[i,i+2^k-1]$ in runs, together with the period of $w[i,i+2^k-1]$. 
We recall the next lemma (see \cite{FCT,KociumakaSPIRE2012}, appendix). 
\begin{lemma}\label{find_occ_range}
Given a word $w$ of length $n$ and an integer $c\geq 2$, we can process $w$ in time $\bigo(n\log n)$ such that given any basic factor $y=w[i,i+2^k-1]$ and any subword of $w$ $(z,[j,j+c 2^k -1])$, with $k\geq 0$, we can compute in $\bigo(\log \log n + c)$ time the representation of all the (single and within runs) occurrences of $y$ in $z$.
\end{lemma}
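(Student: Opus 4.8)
The plan is to build, for each relevant scale $k$, a data structure that enables locating occurrences of a basic factor $y = w[i,i+2^k-1]$ inside a window $z = w[j,j+c2^k-1]$ of length $c$ times the factor length. The crucial observation is that a window of length $c2^k$ can be tiled by $O(c)$ overlapping \emph{blocks} of length $2^{k+1}$, each aligned to a multiple of $2^k$; every occurrence of $y$ in $z$ lies in at least one such block, and within a single block of length $2^{k+1}$ each position that is a multiple of $2^k$ anchors at most a bounded number of occurrences. So it suffices to answer, quickly, the local question: given a factor of length $2^k$ and an aligned block of length $2^{k+1}$, report its occurrences there. Since there are only $O(n/2^k)$ aligned blocks at scale $k$ and $O(\log n)$ scales, the combined preprocessing will aim for $O(n\log n)$ total.

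First I would precompute, for every scale $k = 0,1,\ldots,\lfloor\log n\rfloor$, a suitable index over the set of length-$2^k$ basic factors. The standard tool is the dictionary of basic factors (the Karp--Miller--Rosenberg naming scheme): in $O(n\log n)$ time one assigns to each basic factor $w[i,i+2^k-1]$ an integer identifier so that two basic factors of the same length are equal iff they share an identifier, and these identifiers are drawn from a range of size $O(n)$. Equipped with identifiers, deciding whether $y$ occurs at an aligned position $j'$ in a block reduces to comparing two identifiers in $O(1)$ time, or, for unaligned offsets, to at most two $\LCP$ queries (one forward, one backward from a known aligned occurrence), which the $\LCP$ data structures answer in $O(1)$. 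Thus the \emph{raw} detection of a single occurrence is constant time once we know where to look.

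The next step handles the two occurrence regimes separately. For the single occurrences, I would exploit the sparsity guarantee cited in the excerpt (at most $O(\ell)=O(c)$ single occurrences of a length-$2^k$ factor in a length-$c2^k$ window) and simply enumerate candidate anchor positions block by block, validating each in $O(1)$ via identifiers and $\LCP$; this yields $O(c)$ work after we have located the first candidate. For the within-run occurrences, the key is that the starting positions inside a run form an arithmetic progression with common difference $\period = \per(y)$, so per run we only emit the first occurrence together with $\period$; the period itself is obtainable in $O(1)$ using \Cref{periods}. The remaining algorithmic subtlety is locating the relevant anchor positions without scanning all $O(c)$ blocks with a logarithmic factor each. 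To keep the query at $O(\log\log n + c)$ rather than $O(c\log n)$, I would store, for each scale and each identifier, the sorted list of aligned starting positions of that basic factor, and perform a single predecessor search to find the first aligned occurrence at or after $j$; using a $y$-fast trie (or fusion-tree style) predecessor structure over a universe of size $n$ gives $O(\log\log n)$ per query, after which the remaining aligned occurrences within the window are consecutive in the stored list and are read off in $O(c)$ additional time.

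\textbf{Main obstacle.} The hard part is reconciling the three cost regimes so the $O(\log\log n + c)$ bound is genuinely met: the $\log\log n$ term must come \emph{once} (the predecessor search that seeds the enumeration), while everything else — block tiling, identifier comparisons, $\LCP$-based extension to recover unaligned occurrences, and grouping occurrences into runs via their period — must be amortized into the $O(c)$ budget. In particular I must argue that the $O(c)$ aligned occurrences found by the predecessor search, together with at most two $\LCP$ extensions per block, already capture \emph{every} occurrence (aligned or not) of $y$ in $z$, so that no per-occurrence logarithmic overhead creeps in; and I must verify that representing within-run occurrences by (first position, period) rather than individually is what keeps the output — and hence the running time — linear in $c$ even when a run packs $\Theta(2^k)$ occurrences into the window.
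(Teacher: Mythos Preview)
Your proposal ultimately lands on the right mechanism---store, for every basic-factor identifier at every scale, the sorted list of its starting positions, answer a query with one predecessor search in $O(\log\log n)$ time, then scan the $O(c)$ consecutive entries that fall in $[j,\,j+c2^k-1]$, grouping adjacent occurrences into runs via $\LCP$---and that is exactly how the paper proves the lemma. The dictionary of basic factors gives the identifiers in $O(n\log n)$; a van~Emde~Boas--type structure over each identifier's occurrence array gives the $O(\log\log n)$ successor of $j$; and a left-to-right pass over the contiguous sub-array, using an $\LCP$ query between each pair of consecutive occurrences to detect a run and jump past it, yields the compact representation in $O(c)$ further time.

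Where your write-up diverges is the block-tiling layer and, more seriously, the restriction of the stored lists to \emph{aligned} starting positions (multiples of $2^k$). The tiling is superfluous once you have the full occurrence arrays: the window's occurrences already form a contiguous range of the array, so there is nothing to gain by chopping $z$ into $2^{k+1}$-blocks. And the ``aligned'' restriction is an actual gap: $y$ may occur in $z$ only at unaligned positions, in which case your predecessor search returns nothing in range and your ``two $\LCP$ extensions per block'' have no anchor to extend from---an $\LCP$ between $y$'s position and a block boundary does not by itself locate an occurrence of $y$ inside the block. (There \emph{is} a separate $O(1)$-per-block technique for pattern-in-window queries, but it needs different preprocessing than what you set up.) Drop the word ``aligned'' so that you store \emph{all} occurrences, discard the tiling, and your final paragraph is precisely the paper's proof.
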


We now focus on short basic factors of words. The constant $16$ occurring in the following considerations can be replaced by any other constant; we just use it here so that we can apply these results directly in the main proofs of this section. 

Given a word $v$ and some integer $\beta\geq 16$ with $|v|=\beta \log n$,  as well as a basic factor $y=v[i2^k+1,(i+1)2^{k}]$, with $i,k\geq 0$ and $i2^k+1> (\beta-16)\log n$ (so occurring in the suffix of length $16 \log n$ of $v$), the occurrences of $y$ in $v$ can be represented as $\bigo(\beta)$ bit-sets, each containing $\bigo(\log n)$ bits, the $1$-bits marking the starting positions of the occurrences of $y$ in $v$. 
The next result can be shown using tools developed in~\cite{Gawrychowski11} (see also \cite{FCT} and the appendix). 
\begin{lemma}\label{find_occ_small}
Given a word $v$ and an integer $\beta> 16$, with $|v|=\beta \log n$, we can process $v$ in time $\bigo(\beta \log n)$ time such that given any basic factor $y=v[i2^k+1,(i+1)2^{k}]$ with $i,k\geq 0$ and $i2^k+1> (\beta-16)\log n$, we can find in $\bigo(\beta)$ time the $\bigo(\beta)$ bit-sets, each storing $\bigo(\log n)$ bits, characterizing all the occurrences of $y$ in $v$.
\end{lemma}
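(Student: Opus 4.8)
The plan is to exploit the fact that the queried basic factor $y$ is \emph{short}. Since $y=v[i2^k+1,(i+1)2^k]$ starts at a position $>(\beta-16)\log n$ and has length $2^k$, it must fit inside the suffix of length $16\log n$ of $v$; hence $2^k\le 16\log n$, i.e.\ $2^k=\bigo(\log n)$. Moreover, at each level $k$ there are only $\bigo((\log n)/2^k)$ basic factors aligned to a $2^k$-boundary inside that suffix, so summing over all levels there are only $\bigo(\log n)$ distinct basic factors that can ever be asked for. I therefore precompute, during the processing of $v$, the occurrence bit-sets of \emph{all} of these $\bigo(\log n)$ candidate factors and store them; a query then merely returns the $\bigo(\beta)$ precomputed words in $\bigo(\beta)$ time. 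The positions $1,\dots,\beta\log n$ of $v$ are partitioned into $\bigo(\beta)$ consecutive blocks of $\Theta(\log n)$ positions, one machine word each, so a single factor's occurrences occupy $\bigo(\beta)$ words and the whole table occupies $\bigo(\beta\log n)$ words.

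The engine for the precomputation is a bottom-up doubling over the $\bigo(\log\log n)$ relevant levels, combined with bit-parallelism. The crucial structural observation is that the two halves of an aligned basic factor are themselves aligned basic factors of the previous level, and that both halves start no earlier than $y$, so they also lie in the suffix of length $16\log n$; hence every half needed at level $k$ is itself a candidate factor at level $k-1$ whose occurrence bit-set is already available. A factor $y$ occurs at position $j$ iff its first half occurs at $j$ and its second half occurs at $j+2^{k-1}$, so in the packed representation the occurrence bit-set of $y$ is obtained from those of its halves by one shift of $2^{k-1}<16\log n$ positions followed by a bitwise \textsc{and}; since the shift crosses only $\bigo(1)$ word boundaries per output word, this costs $\bigo(\beta)$ per factor. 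The number of candidate factors at level $k$ is $\bigo((\log n)/2^k)$, so the total work over levels $k\ge1$ telescopes to $\sum_{k\ge1}\bigo(\beta\,(\log n)/2^k)=\bigo(\beta\log n)$.

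The base case $k=0$ must be handled without a separate scan per symbol, since there may be up to $16\log n$ distinct query symbols. Here I first reduce the $\bigo(\log n)$ symbols occurring in the suffix of length $16\log n$ to small labels, and then build their occurrence bit-sets in a single left-to-right pass over $v$: for each of the $\beta\log n$ positions $j$ I look up the label of $v[j]$ (if any) and set the corresponding bit in the correct block of that symbol's bit-set. Each position contributes $\bigo(1)$ work, so the whole base level costs $\bigo(\beta\log n)$, and two suffix positions carrying the same symbol simply share one bit-set. Equality tests between basic factors, and the identification of which positions carry a given query symbol, are supported in constant time by the global suffix-array / $\LCP$ machinery already available for $w$.

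The main obstacle is keeping everything inside the tight budgets---$\bigo(\beta)$ per produced bit-set and $\bigo(\beta\log n)$ overall---rather than incurring the extra logarithmic factors that a naive realization would suffer: the per-symbol membership queries and label assignment at the base level, and the cross-word shifts in the doubling step, have to run in (amortized) constant time per word. This is exactly where the packed-string tools of~\cite{Gawrychowski11} are used, which let us perform the alphabet reduction of the $\bigo(\log n)$ suffix symbols and the packed shift-and-\textsc{and} operations within the allotted time; the bound then follows by the geometric summation above.
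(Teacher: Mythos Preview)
Your argument is correct and takes a genuinely different route from the paper's. The paper does not tabulate bit-sets by level-wise doubling; instead it reduces to a variant where the query factor lies in the last $\log n$ positions, builds the \emph{suffix tree} of each word $v[i\log n{+}1,(i{+}2)\log n]\cdot v[(\alpha{-}1)\log n{+}1,\alpha\log n]$ for $0\le i\le\alpha{-}2$, augments every node with a bit-set of occurrence positions computed bottom-up by bitwise \textsc{or}, and uses Lemma~\ref{find_node} (from~\cite{Gawrychowski11}) to locate in $O(1)$ time the node whose root-path label has prefix $y$. A query is then $O(\alpha)$ suffix-tree lookups, one per window. Your construction instead enumerates the $O(\log n)$ candidate aligned factors directly and builds each bit-set at level $k$ from two bit-sets at level $k{-}1$ via one shift-and-\textsc{and}; this avoids suffix trees and Lemma~\ref{find_node} entirely and is arguably more elementary, while the paper's approach is more modular (it reuses existing suffix-tree machinery and gets the base case for free at the leaves). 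One small point worth tightening: the $O(1)$ ``look up the label of $v[j]$ (if any)'' in your base case is not delivered by $\LCP$ structures per se; what you actually want is a global scratch array of size $n$ indexed by alphabet symbols, filled with the $O(\log n)$ labels and cleared afterwards---standard and within budget, but better stated explicitly than deferred to~\cite{Gawrychowski11}.
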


In the context of the previous lemma, once the occurrences of $y$ in $v$ are computed, given a subword $z$ of $v$ of length $|z|=c|y|$, 
for some $c\geq 1$, we can obtain in $\bigo(c)$ time both the single occurrences of $y$ in $z$ and the occurrences of $y$ within runs of $z$.
We just have to select (by bitwise operations on the bit-sets encoding the factors of $v$ that overlap $z$) the positions where $y$ occurs (so the positions of the $1$-bits in those bit-sets).
For each two consecutive such occurrences of $y$ we detect whether they are part of a run in $v$ and then skip over all the occurrences of $y$ from that run (and the corresponding parts of the bit-sets) before looking again for $1$-bits in the bit-sets; for the positions that form a run we store the first occurrence of $y$ and its period, while for the single occurrences we store the position of that occurrence.%

Now we can begin the presentation of the algorithm finding all the maximal $\alpha$-gapped repeats of a word.
We first show how to find maximal repeats with short arms.
\begin{lemma}\label{short_reps}
Given a word $w$ and $\alpha\geq 1$, we can find all the maximal $\alpha$-gapped repeats $u_\lambda, u', u_\rho$ occurring in $w$, with $|u_\rho|\leq 16 \log n$, in time $\bigo(\alpha n)$.
\end{lemma}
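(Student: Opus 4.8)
The plan is to turn the global search into a collection of independent searches inside short windows, and inside each window to anchor every repeat to a short block-aligned basic factor whose occurrences can be listed compactly through \Cref{find_occ_small}. Since the arms have length at most $16\log n$ and the repeat is $\alpha$-gapped, its period $\abs{u_\lambda u'}$ is at most $16\alpha\log n$, so the whole factor $u_\lambda u' u_\rho$ spans only $\bigo(\alpha\log n)$ consecutive positions. First I would slide a window $v$ of length $\beta\log n$ with $\beta=\Theta(\alpha)$ chosen large enough that $v$ accommodates a full repeat, moving it in steps of $\Theta(\log n)$ so that the \emph{output region} (the suffix of $v$ of length $16\log n$, in which we confine the right arm) overlaps enough to cover $w$; then every maximal $\alpha$-gapped repeat with $\abs{u_\rho}\le 16\log n$ lies entirely in some window, and I would assign it to the unique window whose output region contains its right arm, so each repeat is reported once. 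For each window I run the $\bigo(\beta\log n)$ preprocessing of \Cref{find_occ_small} together with the global $\LCP$ and symmetric longest-common-suffix structures and \Cref{periods}; the $\Theta(n/\log n)$ windows then contribute $\bigo(\alpha n)$ in total, and the per-window machinery lets us avoid the global $\bigo(n\log n)$ cost of \Cref{find_occ_range}.

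Inside a fixed window I would decompose by arm length: for each scale $k$ with $2^k\le 16\log n$ I look for repeats whose arms have length in $[2^k,2^{k+1})$. Any such arm of length at least $2^k=2\cdot 2^{k-1}$ contains a block-aligned basic factor $y$ of length $2^{k-1}$, and since the right arm sits in the output region, the starting position of $y$ lies in the last $16\log n$ symbols of $v$, which is exactly the situation handled by \Cref{find_occ_small}. Fixing $y$ at the canonical (say rightmost) block-aligned position inside a prospective right arm, I retrieve all occurrences of $y$ in the left search region of length $\bigo(\alpha 2^k)$; each matching left occurrence at distance $q$ proposes a candidate period, and I extend the common factor maximally to the right with $\LCP$ and to the left with longest-common-suffix queries, recovering the unique maximal repeat of that period and testing the $\alpha$-gapped and arm-length conditions in $\bigo(1)$. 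The number of block-aligned anchor positions at scale $k$ is $\bigo(\log n/2^{k-1})$, and the \emph{single} occurrences in the search region are $\bigo(\alpha 2^k/2^{k-1})=\bigo(\alpha)$; summing the resulting geometric series in $k$ gives $\bigo(\alpha\log n)$ per window, hence $\bigo(\alpha n)$ overall.

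The hard part is the periodic regime, where $y$ is periodic or its occurrences cluster into runs: a single run may carry $\Theta(\alpha 2^k/p)$ occurrences of $y$ (with $p$ the period of $y$), so enumerating occurrence pairs is too expensive. Here I would exploit the compact representation of \Cref{find_occ_small}, which stores each cluster as a first occurrence together with $p$, and \Cref{periods}, which returns in $\bigo(1)$ the maximal run of period $p$ through any position. When the matched anchor occurrences of a candidate fall inside runs, the repeat is periodic in the sense of \grP{}, and, exactly as in the counting argument behind \Cref{lemmaP}, only $\bigo(\alpha)$ periods $q$ can yield a maximal repeat anchored to a given run, namely those for which maximality forces a run endpoint to align. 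I would compute these $\bigo(\alpha)$ periods arithmetically from the run endpoints supplied by \Cref{periods}, rather than scanning occurrences, and extend each to its maximal repeat in $\bigo(1)$; this keeps the per-anchor cost $\bigo(\alpha)$ even in the periodic regime and preserves the $\bigo(\alpha n)$ total.

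Finally I would argue correctness and non-duplication: every maximal $\alpha$-gapped repeat with short arms has a well-defined scale $k$, a canonical block-aligned anchor, and a unique host window, so it is reported exactly once, while every surviving candidate passes the $\bigo(1)$ maximality and $\alpha$-gapped tests and is therefore genuine. The dominant obstacle throughout is the periodic regime: one must guarantee that runs of anchor occurrences are never materialised and that the ``$\bigo(\alpha)$ periods per run'' bound — which mirrors the combinatorial estimate $\abs{\grP}=\bigo(\alpha E(w))$ from \Cref{lemmaP} together with $E(w)<3\abs{w}$ of \Cref{lemmaExponent} — is realised by an $\bigo(\alpha)$-time enumeration that uses only the compact run data and \Cref{periods}.
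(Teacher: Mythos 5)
Your overall architecture coincides with the paper's: windows of length $\Theta(\alpha\log n)$ slid in steps of $\log n$ with the right arm confined to a suffix ``output region'' of length $16\log n$, a per-scale decomposition with block-aligned anchor factors of length $\Theta(2^k)$, occurrence retrieval via \Cref{find_occ_small}, and $\bigo(1)$-time LCP/longest-common-suffix extension for each of the $\bigo(\alpha)$ single occurrences, with deduplication by host window and anchor. The genuine gap is in your periodic regime. You claim that ``only $\bigo(\alpha)$ periods $q$ can yield a maximal repeat anchored to a given run'' and that these can be enumerated in $\bigo(\alpha)$ time per anchor. That is false. In the case where the right arm ends strictly inside its run $\r\rep$, maximality does pin the left arm to coincide exactly with $\r[r]_\lambda$-like boundaries --- $u_\lambda$ must equal the maximal periodic subword $r_\lambda$ --- but the right arm may then start at \emph{any} position $\ell'+ip$ of an arithmetic progression inside $r_\rho$ compatible with the $\alpha$-gap and anchoring constraints, and for interior values of $i$ both maximality conditions hold automatically, because the period breaks only at the boundaries of $r_\lambda$ and those breaks transfer periodically to the neighbourhood of $u_\rho$. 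Each such $i$ gives a \emph{distinct} maximal $\alpha$-gapped repeat, and their number is of order $2^k/p$, which for small period $p$ is $\Theta(\log n)$ per anchor and exceeds $\bigo(\alpha)$ whenever $\alpha=o(\log n)$. Indeed \Cref{lemmaP}, which you invoke, bounds the repeats generated by a single run $\rep$ by $\exp(\rep)\,\alpha/\beta$, i.e.\ $\bigo(\alpha\exp(\rep))$ --- not $\bigo(\alpha)$ --- and only the global bound $E(w)<3\abs{w}$ of \Cref{lemmaExponent} tames the sum. Since merely \emph{writing down} the repeats anchored at one run can already cost $\omega(\alpha)$, no $\bigo(\alpha)$-time per-anchor enumeration can exist, and your stated complexity accounting breaks precisely at the step you yourself flag as the dominant obstacle.

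The repair is the paper's output-sensitive accounting, which your proposal never introduces: compute the admissible range of $i$ arithmetically in $\bigo(1)$ (from the $\alpha$-gap condition and the requirement that the anchor position lie within the first $2^k$ positions of $u_\rho$), handle the boundary values of $i$ (where $u_\rho$ is a prefix or suffix of its run) with explicit left/right simultaneous-extension checks, spend $\bigo(1)$ per reported repeat, and bound the work for fixed window, scale and anchor by $\bigo(\alpha + N_{j,m,k})$, where $N_{j,m,k}$ is the number of repeats actually output there. Summing these terms and invoking the already-established bound $\abs{\grGR}<18\alpha n$ on the total output yields the $\bigo(\alpha n)$ running time. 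With that substitution for your ``$\bigo(\alpha)$ periods per run'' step, the rest of your construction is essentially the paper's proof.
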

\begin{proof}
If a maximal $\alpha$-gapped repeat $u_\lambda, u', u_\rho$ (where we denote by $u$ the underlying factor of both arms), has $|u|\leq 16 \log n$, we get that $u_\rho$ must be completely contained in a subword $(w',[m\log n+1 , (m+17)\log n])$, for some $m$ with $\frac{n}{\log n}-17 \geq  m\geq 0$. By fixing the interval where $u_\rho$ may occur (that is, fix $m$), we also fix the place where $u_\lambda$ may occur. Indeed, %
the entire subword $u_\lambda, u', u_\rho $ is completely contained in the factor $x_m=(w'',[(m - 16 \alpha )\log n +1  ,  (m+17)\log n])$ (or, in a factor $x_m=(w'',[1, (m+17)\log n])$ if $(m - 16 \alpha )\log n +1<1$). 

Hence, we look for maximal $\alpha$-gapped repeats $u_\lambda, u',u_\rho$ completely contained in $x_m$ with $u_\rho$ completely contained in the suffix of length $16 \log n$ of $x_m$; then we repeat this process for all $m$.
To begin with, we process $x_m$ as in Lemma \ref{find_occ_small}, and construct $\LCP$-structures for it. 

Now, once we fixed the subword $x_m$ of $w$ where we search the maximal $\alpha$-gapped repeats, we try to fix also their length. That is, we find all maximal $\alpha$-gapped repeats $u_\lambda, u', u_\rho$ with $2^{k+1}\leq |u|\leq 2^{k+2}$ completely contained in $x_m$ with $u_\rho$ completely contained in the suffix of length $16 \log n$ of $x_m$; we execute this process for all $0\leq k\leq \log (16 \log n)$. Note that all the maximal $\alpha$-gapped repeats with arms shorter than $2$ (occurring anywhere in the word $w$) can be trivially found in $\bigo(\alpha n)$ time.

Since we can find occurrences of basic factors in $x_m$ efficiently, we try to build a maximal gapped repeat by extending 
gapped repeats whose arms contain a basic factor (see~\Cref{figExtendXm} in the appendix).
To this end, we analyze some \emph{subwords of} $x_m$:
If $2^{k+1}\leq |u|\leq 2^{k+2}$ then $u_\rho$ contains at least one subword $(y,[j2^k+1,(j+1)2^k])$ starting within its first $2^k$ positions.
A copy of the factor $y$ occurs also within the first $2^k$ positions of $u_\lambda$ (with the same offset with respect to the starting position of $u_\lambda$ as the offset of the occurrence of $y$ with respect to the starting position of $u_\rho$).
So, finding the respective copy of $y$ from $u_\lambda$ helps us discover the place where $u_\lambda$ actually occurs.
Indeed, assume that we identified the copy of $y$ from $u_\lambda$, and assume that this copy is $(y,[\ell+1,\ell+|y|])$; we try to build $u_\lambda$ and $u_\rho$ around these two occurrences of $y$, respectively.
Hence, in order to identify $u_\lambda$ and $u_\rho$ we compute the longest factor $p$ of $x_m$ that ends both on $j2^k$ and on $\ell$ and the longest factor $s$ that starts both on $(j+1)2^k+1$ and on $\ell+|y|+1$.
Now, if $\ell+|y|+|s|\leq j2^k-|p|$ then $u_\lambda$ is obtained by concatenating $p$ and $s$ around $x_m[\ell+1,\ell+|y|]$ while $u_\rho$ is obtained by concatenating $p$ and $s$ to the left and, respectively, right of $x_m[j2^k+1,(j+1)2^k]$; otherwise, the two occurrences of $y$ do not determine a maximal repeat.
Moreover, the repeat we determined is a valid solution of our problem only if its length is between $2^{k+1}$ and $2^{k+2}$, and its right arm contains position $j2^k+1$ of $x_m$ within its first $2^k$ positions.

Now we explain how to determine efficiently the copy of $y$ around which we try to build~$u_\lambda$. As $|u|<2^{k+2}$ and $|y|=2^k$ we get that the copy of $y$ that corresponds to $u_\lambda$ should be completely contained in the subword of $x_m$ of length $\alpha 2^{k+2}$ ending on position $j2^k$. As said above, we already processed $x_m$ to construct the data structures from Lemma \ref{find_occ_small}. Therefore, we can obtain in $\bigo(\alpha)$ time a representation of all the occurrences of $y$ inside the factor of length $\alpha 2^{k+2}$ ending on position $j2^k$. These occurrences can be single occurrences and occurrences within runs.
There are $\bigo(\alpha)$ single occurrences, and we can process each of them individually, as explained, to find the maximal $\alpha$-gapped repeat they determine together with the occurrence of $y$ from $u_\rho$. 
However, it is not efficient to do the same for the occurrences of $y$ within runs. For these (which are also $\bigo(\alpha)$ many) we proceed as follows.

Assume we have a run of occurrences of $y$ inside the factor of $x_m$ of length $\alpha 2^{k+2}$ ending on position $j2^k$. Let $\ell$ be the starting position of the first occurrence of $y$ in this run and let $p$ be the period of $y$. Now, using Lemma \ref{periods} we can determine the maximal $p$-periodic subword $r_\lambda$ of $x_m$ containing this run of $y$-occurrences. Similarly, we can determine the maximal $p$-periodic subword $r_\rho$ that contains the  occurrence of $y$ from $u_\rho$ (i.e., $x_m[j2^k+1,(j+1)2^k]$). To determine efficiently the $\alpha$-gapped repeats that contain $x_m[j2^k+1,(j+1)2^k]$ in the right arm and a corresponding occurrence of $y$ from $r_\lambda$ in the left arm we analyze several cases (see \Cref{figPeriodicExtension} in the appendix). 

Assume $u_\rho$ starts on a position of $r_\rho$, other than its first one.
Then $u_\lambda$ should also start on the first position of $r_\lambda$ (or we could extend both arms to the left, a contradiction to the maximality of the repeat).
If $u_\rho$ ends on a position to the right of $r_\rho$, then $u_\lambda$ also ends on a position to the right of $r_\lambda$, and, moreover, the suffix of $u_\lambda $ occurring after the end of $r_\lambda$ and the suffix of $u_\rho$ occurring after the end of $r_\rho$ are equal, and can be computed by a longest common prefix query on $x_m$.
This means that $u_\lambda$ can be determined exactly (we know where it starts and where it ends) so $u_\rho$ can also be determined exactly (we know where it ends), and we can check if the obtained repeat is indeed a maximal $\alpha$-gapped repeat, 
and the arms fulfill the required length conditions (i.e., length between $2^{k+1}$ and $2^{k+2}$, the right arm contains position $j2^k+1$ of $x_m$ within its first $2^k$ positions).
If $u_\rho$ ends exactly on the same position as $r_\rho$ then $u_\rho$ is periodic of period $p$; we just have to compute the longest $p$-periodic factor that ends on the same position as $r_\rho$ and starts at the same position as $r_\lambda$, and this can also be determined in constant time just by taking the longest $p$-periodic prefix of $r_\lambda$ which is also a suffix of $r_\rho$.
So, again, we can determine exactly $u_\lambda$ and $u_\rho$, and we can check if they form a maximal $\alpha$-gapped repeat with the arms fulfilling the length restrictions.
The final, and more complicated case, is when $u_\rho$ ends on a position of $r_\rho$, other than its last position.
In that case, we get that $u_\lambda=r_\lambda$ (or, otherwise, we could extend both arms to the right).
Essentially, this means that we know exactly where $u_\lambda$ is located and its length (and we continue only if this length is between $2^{k+1}$ and $2^{k+2}$); so $u_\lambda$ denotes a factor $z^h z'$ for some $z$ of length $p$.
Now, looking at the run $r_\rho$, we can get easily the position of the first occurrence of $z$ in that run, and the position of its last occurrence.
If the first occurrence is $\ell'$, then the occurrences of $z$ have their starting positions $\ell'$, $\ell'+p, \ldots, \ell'+tp $ for some $t$.
As we know the length of $u_\lambda$ and the fact that $u_\lambda, u', u_\rho$ is $\alpha$-gapped, we can determine in constant time the values $0\leq i\leq t$ such that $u_\rho$ may start on position $\ell'+ip$, the repeat we obtain is $\alpha$-gapped, and $u_\rho$ contains position $j2^k+1$ of $x_m$ within its first $2^k$ positions.
If $u_\rho$ is a prefix of $r_\rho$ we also have to check that we cannot extend simultaneously $u_\rho$ and $u_\lambda$ to the left; if $u_\rho$ is a suffix of $r_\lambda$ we have to check that we cannot extend simultaneously $u_\rho$ and $u_\lambda$ to the right.
Then we can return the maximal $\alpha$-gapped repeats we constructed.

The cases when $u_\rho$ starts on the first position of $r_\rho$ or when it starts on a position to the left of $r_\rho$ can be treated similarly{, and as efficiently (see Appendix).} 

This concludes our algorithm.
Its correctness follows from the explanations above.
Moreover, we can ensure that our algorithm finds and outputs each maximal repeat exactly once; this clearly holds when we analyze the repeats of $x_m$ for each $m$ separately.
 However, when moving from $x_m$ to $x_{m+1}$ we must also check that the right arm of each repeat we find is not completely contained in $x_m$ (so, already found).
This condition can be easily imposed in our search: when constructing the arms determined by a single occurrence of $y$, we check the containment condition separately; when constructing a repeat determined by a run of $y$-occurrences, we have to impose the condition that the right arm extends out of $x_m$ when searching the starting positions of the possible arms.

Next, we compute the complexity of the algorithm. Once we fix $m$, $k$, and $j$, our process takes $\bigo(\alpha + N_{j,m,k})$ time, where $N_{j,m,k}$ is the number of maximal $\alpha$-gapped repeats determined for the fixed $m,j,k$. So, the time complexity of the algorithm is:\\
 \centerline{$\bigo(n+\sum_{0\leq m\leq n/\log n}(16\alpha\log n+ \sum_{0\leq k\leq \log (16 \log n)} (\sum_{j\leq 16 \log n/ 2^k} (\alpha + N_{j,m,k})))) \subseteq \bigo(\alpha n)$,} 
 as the total number of maximal $\alpha$-gapped repeats is $\bigo(\alpha n)$ and we need $\bigo(|x_m|)$ preprocessing time for each $x_m$ and $\bigo(n)$ preprocessing time for $w$.
\end{proof}

Next, we find all maximal $\alpha$-gapped repeats with longer arms.
\begin{lemma}\label{long_reps}
Given a word $w$ and $\alpha\geq1$, we can find all the maximal $\alpha$-gapped repeats $u_\lambda, u', u_\rho$ occurring in $w$, with $|u_\rho|> 16 \log n$, in time $\bigo(\alpha n)$.
\end{lemma}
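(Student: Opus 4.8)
The plan is to mirror the strategy of \Cref{short_reps}, but to run it globally on $w$ rather than inside the sliding windows $x_m$: the restriction $|u_\rho| > 16\log n$ is precisely what makes this affordable. I would preprocess $w$ once with $\LCP$ data structures, with the periodicity oracle of \Cref{periods}, and with the occurrence-finding structure of \Cref{find_occ_range} (taking $c = \bigo(\alpha)$). For each length class $k$ with $2^{k+1}\le |u| < 2^{k+2}$ -- note that $|u_\rho|>16\log n$ forces $2^k > 8\log n$, so only $\bigo(\log n)$ classes occur -- I would anchor each candidate repeat on a \emph{grid-aligned} basic factor $y = w[j2^k+1,(j+1)2^k]$. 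Exactly one such block starts within the first $2^k$ positions of any right arm of length $\ge 2^{k+1}$ (there is a unique position $\equiv 1 \pmod{2^k}$ in every window of $2^k$ consecutive positions), so each maximal repeat is anchored on a single, well-defined $y$ at its unique class $k=\lfloor\log_2|u|\rfloor-1$. With the half-open convention this outputs every repeat exactly once and removes the cross-window double-counting that \Cref{short_reps} had to guard against.

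For a fixed $k$ and a fixed grid block $y$ (its occurrence inside $u_\rho$), the matching copy of $y$ inside $u_\lambda$ starts at distance $q=|u_\lambda u'|\in[2^{k+1},\alpha 2^{k+2}]$ to the left, hence inside the window of length $\bigo(\alpha 2^k)$ ending just before $y$. I would retrieve all occurrences of $y$ in this window in $\bigo(\log\log n+\alpha)$ time via \Cref{find_occ_range}, obtaining $\bigo(\alpha)$ single occurrences and $\bigo(\alpha)$ run-occurrences. Each single occurrence is turned into a candidate by two $\LCP$ queries (extending the common prefix and suffix around the two copies of $y$, exactly as in \Cref{short_reps}), checking that the result is maximal, $\alpha$-gapped, of arm length in $[2^{k+1},2^{k+2})$, and contains $j2^k+1$ within the first $2^k$ positions of its right arm. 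The occurrences clustered in a run are handled \emph{without} touching them individually, reusing the case analysis of \Cref{short_reps}: from the first occurrence and the period I recover via \Cref{periods} the maximal periodic subwords $r_\lambda$ and $r_\rho$, and split on whether $u_\rho$ starts/ends strictly inside, exactly at, or outside their borders, each case determining $u_\lambda,u_\rho$ (or an arithmetic progression of admissible starting positions) in $\bigo(1)$ per produced repeat.

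For the running time the crucial point is that the number of grid anchors is small for long arms: level $k$ has $\bigo(n/2^k)$ grid blocks, and since $2^k>8\log n$ the geometric sum over all relevant $k$ is $\sum_k \bigo(n/2^k)=\bigo(n/\log n)$ anchors in total. Each anchor costs $\bigo(\log\log n+\alpha)$ for the occurrence query, so the anchor work is $\bigo((n/\log n)(\log\log n+\alpha)) = \bigo(n+\alpha n/\log n)\subseteq\bigo(\alpha n)$, while producing the repeats costs $\bigo(\alpha n)$ by the bound $\abs{\grGR}<18\alpha n$; hence the finding phase runs in $\bigo(\alpha n)$. This extra $1/\log n$ factor in the anchor count is exactly the slack that lets us afford the heavier per-query machinery of \Cref{find_occ_range} here, whereas \Cref{short_reps} (with up to $\bigo(n)$ anchors) needed the bit-packed structure of \Cref{find_occ_small}.

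The main obstacle I anticipate is the faithful transfer of the periodic-run analysis: one must argue that every maximal repeat whose matching $y$-copy falls inside a run is produced exactly once and in amortized $\bigo(1)$, covering the boundary sub-cases ($u_\rho$ flush with, or protruding past, $r_\rho$, and symmetrically on the left) together with the simultaneous-extension checks that certify maximality. Since these sub-cases are identical in spirit to those already settled in \Cref{short_reps} -- only the search window is now all of $w$ and the occurrence structure comes from \Cref{find_occ_range} rather than \Cref{find_occ_small} -- I expect the argument to go through with the accounting above, the one genuinely new ingredient being the global de-duplication, which the unique-grid-anchor observation handles cleanly.
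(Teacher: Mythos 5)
Your anchoring scheme, de-duplication argument, and per-query accounting are sound, but there is a genuine gap in the time analysis: you apply \Cref{find_occ_range} \emph{directly to $w$}, and the preprocessing of that lemma alone costs $\bigo(n\log n)$ time. This already exceeds the $\bigo(\alpha n)$ budget whenever $\alpha = o(\log n)$ (e.g., constant $\alpha$), so your unconditional claim fails at the very first step, before any query is issued. The slack you correctly identify --- only $\bigo(n/\log n)$ grid anchors, each answered in $\bigo(\log\log n + \alpha)$ time --- saves the \emph{query} phase, but it cannot pay for a dictionary of basic factors over all of $w$; even restricting construction to the $\bigo(\log n)$ relevant levels $k$ with $2^k = \Omega(\log n)$ still costs $\bigo(n)$ per level, i.e., $\bigo(n\log n)$ in total. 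This is exactly the point the paper flags explicitly (``we cannot apply this lemma directly on the word $w$, but rather on an encoded variant of $w$'').

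The paper's fix, which your proposal is missing, is the block encoding: it builds a word $w'$ of length $n/\log n$ whose letters encode $\log n$-blocks of $w$, so \Cref{find_occ_range} is preprocessed on $w'$ in $\bigo(n)$ time. The price is that the structure on $w'$ only sees \emph{block-aligned} occurrences, while the copy of the anchor in $u_\lambda$ need not be aligned; the paper therefore tries all $\log n$ offsets of a factor $y$ within the first $\log n$ positions of the first $k$-block of $u_\rho$ (one of them must be block-aligned on the left), locating the encoded factor $y'$ by binary search in the suffix array of $w'$. This reintroduces an $\bigo(n\log n)$ query term, which the paper then eliminates for $\alpha < \log n$ by a second stage: run the algorithm only for $k > \log\log n$, and for shorter arms work inside overlapping windows of length $\bigo(\alpha(\log n)^2)$, handing arms below $16\log n$ back to \Cref{short_reps}. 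Your clean $\bigo(n/\log n)$-anchor accounting fundamentally relies on querying non-aligned occurrences directly in $w$, and that is precisely what cannot be afforded; without the encoding (or some substitute achieving $\bigo(\alpha n)$ total preprocessing) together with the small-$\alpha$ windowing, the proof does not establish the stated bound.
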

\begin{proof}
The general approach in proving this lemma is similar to that used in the proof of the previous result.
Essentially, when identifying a new maximal $\alpha$-gapped repeat, we try to fix the place and length of the right arm $u_\rho$ of the respective repeat, which restricts the place where the left arm $u_\lambda$ occurs.
This allows us to fix some long enough subword of $w$ as being part of the right arm, detect its occurrences that are possibly contained in the left arm, and, finally, to efficiently identify the actual repeat.
The main difference is that we cannot use the result of Lemma \ref{find_occ_small}, as we have to deal with repeats with arms longer than $16 \log n$.
Instead, we will use the structures constructed in Lemma \ref{find_occ_range}.
However, to get the stated complexity, we cannot apply this lemma directly on the word $w$, but rather on an encoded variant of $w$.

Thus, the first step of the algorithm is to construct a word $w'$, of length $\frac{n}{\log n}$, whose symbols, called \intWort{blocks}, encode $\log n $ consecutive symbols of $w$ grouped together.
That is, the first block of the new word corresponds to $w[1,\log n]$, the second one to $w[\log n+1,2\log n]$, and so on.
Hence, we have two versions of the word $w$: the original one, and the one where it is split in blocks.
It is not hard to see that the blocks can be encoded into numbers between~$1$ and $n$ in linear time.
Indeed, we build the suffix array and $\LCP$-data structures for $w$, and then we cluster together the suffixes of the suffix array that share a common prefix of length at least $\log n$.
Then, all the suffixes of a cluster are given the same number (between $1$ and $n$), and a block is given the number of the suffix starting with the respective block.

We can now construct in $\bigo(n)$ time the suffix arrays and $\LCP$-data structures for both $w$ and $w'$, as well as the data structures of Lemma \ref{find_occ_range} for the word $w'$. %

Now, we guess the length of the arms of the repeat.
We try to find the maximal $\alpha$-gapped repeats $u_\lambda, u', u_\rho$ of $w$ with $2^{k+1} \log n \leq |u_\lambda | \leq 2^{k+2}\log n$, $k\leq \log \frac{n}{\log n}-2$.
We fix $k$ and split again the word $w$, this time in factors of length $2^k \log n$, called \intWort{$k$-blocks}.
Assume that each split is exact (padding the word with some new symbols ensures this).

Now, if a maximal $\alpha$-gapped repeat  $u_\lambda, u', u_\rho$ with $2^{k+1} \log n \leq |u_\lambda | \leq 2^{k+2}\log n$ exists, then it contains an occurrence of a $k$-block within its first $2^k\log n$ positions.
So, let $z$ be a $k$-block and assume that it is the first $k$-block occurring in $u_\rho$ (in this way fixing a range where $u_\rho$ may occur).
Obviously, if $u_\rho$ contains $z$, then $u_\lambda $ also contains an occurrence of $z$; however, this occurrence is not necessarily starting on a position $j\log n + 1$ for some $j\geq 0$ (so, it is not necessarily a sequence of blocks).
But, at least one of the factors of length $2^{k-1}$ starting within the first $\log n$ positions of $z$ (which are not necessarily sequences of blocks) must correspond, in fact, to a sequence of blocks from the left arm  $u_\lambda$.
So, let us fix now a factor $y$ of length $2^{k-1}$ that starts within the first $\log n$ positions of $z$ (we try all of them in the algorithm, one by one).
As said, the respective occurrence of $y$ from $u_\rho$ is not necessarily a sequence of blocks (so it cannot be mapped directly to a factor of $w'$). But, we look for an occurrence of $y$ starting on one of the $ \alpha 2^{k+2} \log n$ positions to the left of $z$, corresponding to a sequence of blocks, and assume that the respective occurrence is exactly the occurrence of $y$ from $u_\lambda$.

By binary searching the suffix array of $w'$ (using $\LCP$-queries on $w$ to compare the factors of $\log n$ symbols of $y$ and the blocks of $w'$, at each step of the search) we try to detect a factor of $w'$ that encodes a word equal to $y$.
Assume that we can find such a sequence $y'$ of $2^{k-1}$ blocks of $w'$ (otherwise, $y$ cannot correspond to a sequence of blocks from $u_\lambda$, so we should try other factors of $z$ instead).
Using Lemma~\ref{find_occ_range} for $w'$, we get in $\bigo(\log\log |w'|+\alpha)$ time a representation of the occurrences of $y'$ in the range of $\alpha2^{k+2}$ blocks of~$w'$ occurring before the blocks of $z$; this range corresponds to an interval~of~$w$ with a length of $\alpha 2^{k+2} \log n$.

Further, we process these occurrences of $y'$ just like in the previous lemma.
Namely, the occurrences of $y'$ in that range are either single occurrences or occurrences within runs.
Looking at their corresponding factors from $w$, we note that each of these factors fixes a possible left arm $u_\lambda$; this arm, together with the corresponding arm $u_\rho$ can be constructed just like before.
In the case of single occurrences (which are at most $\bigo(\alpha)$, again), %
we try to extend both the respective occurrence and the occurrence of $y$ from $u_\rho$ both to the left and, respectively, to the right, simultaneously, and see if we can obtain in this way the arms of a valid maximal $\alpha$-gapped repeat.
Note that we must check also that the length of the arm of the repeat is between $2^{k+1}$ and $2^{k+2}$, and that $z$ is the first $k$-block of the right arm.
As before, complications occur when the occurrences of $y'$ are within runs.
In this case, the run of occurrences of $y'$ does not necessarily give us the period of $y$, but a multiple of this period that can be expressed also as a multiple of $\log n$ (or, in other words, the minimum period of $y$ is a multiple of the block-length).
This, however, does not cause any problems, as the factor $y$ from $u_\rho$ should always correspond to a block sequence from $u_\lambda$, so definitely to one of the factors encoded in the run of occurrences of $y'$.

Therefore, by determining the maximal factor that contains $y$ and has the same period as the run of occurrences of $y'$ (with the period measured in $w$), we can perform a very similar analysis to the corresponding one from the case when we searched maximal $\alpha$-gapped repeats with arms shorter than $16 \log n$. 

It remains to prove that each maximal gapped repeat is counted only once. 
Essentially, the reason for this is that for two separate factors $y_1$ and $y_2$ (of length $2^{k-1}$) occurring in the first $\log n$ symbols of $z$ we cannot get occurrences of the corresponding factors $y'_1$ and $y'_2$ that define the same repeat; in that case, the distance between $y'_1$ and $y'_2$ should be at least one block, so the distance between $y_1$ and $y_2$ should be at least $\log n$, a contradiction. Similarly, if we have a factor $y$ occurring in the first $\log n$ symbols of some $k$-block $z_1$ such that this factor determines an $\alpha$-gapped maximal repeat, then the same maximal repeat cannot be determined by a factor of another $k$-block, since $z_1$ is the first $k$-block of $u_\rho$.

The correctness of the algorithm described above follows easily from the explanations given in the proofs of the last two lemmas.
Let us evaluate its complexity.
The preprocessing phase (construction of $w'$ and of all the needed data structures) takes $\bigo(n)$ time.
Further, we can choose $k$ (and implicitly an interval for the length of the arms of the repeats) such that $k\leq \log \frac{n}{\log n}-2$.
After choosing $k$, we can choose a $k$-block $z$ in $\frac{n}{2^k \log n}$ ways.
Further, we analyze each factor $y$ of length $2^{k-1}$ starting within the first $\log n$ positions of the chosen $k$-block $z$.
For each such factor $y$ we find in $\bigo(\log \frac{n}{\log n} + \log\log n+ \alpha)$ time the representation of the occurrences of the block encoding the occurrence of $y$ from $u_\lambda$.
From each of the $\bigo(\alpha)$ single occurrences we check whether it is possible to construct a maximal $\alpha$-gapped repeat in $\bigo(1)$ time.
We also have $\bigo(\alpha)$ occurrences of the block encoding $y$ in runs, and each of them is processed in $\bigo(N_{z,y})$ time, where $N_{z,y}$ is the number of maximal $\alpha$-gapped repeats we find for some $z$ and $y$.
Overall, this adds up to a total time of $\bigo(n \log n + \alpha n)$, as the total number of maximal $\alpha$-gapped repeats in $w$ is upper bounded by $\bigo(\alpha n)$.
If $\alpha\geq \log n$, the statement of the lemma follows.
If $\alpha < \log n$ we proceed as follows.

Initially, we run the algorithm only for $k> \log \log n$ and find the maximal $\alpha$-gapped repeats $u_\lambda u' u_\rho$ with $2^{\log\log n} \log n \leq |u_\lambda |$, in $\bigo(\alpha n)$ time.
Further, we search maximal $\alpha$-gapped repeats with shorter arms.
Now, $|u_\lambda|$ is upper bounded by $2^{\log\log n +1}\log n=2 (\log n)^2$, so $|u_\lambda u'u_\rho |\leq \ell_0$, for $\ell_0=\alpha\cdot 2 (\log n)^2+2 (\log n)^2=2(\alpha+1)(\log n)^2$.
Such an $\alpha$-gapped repeat $u_\lambda u'u_\rho $ is, thus, contained in (at least) one factor of length $ 2\ell_0$ of $w$, starting on a position of the form $1+m\ell_0$ for $m\geq 0$.
 So, we take the factors $w[1+m\ell_0,(m+2)\ell_0]$ of $w$, for $m\geq 0$, and apply for each such factor, separately, the same strategy as above to detect the maximal $\alpha$-gapped repeats contained completely in each of them.
The total time needed to do that is {$\bigo\left(\alpha\ell_0 \frac{n}{\ell_0} + N_{\ell_0}\right)=\bigo(\alpha n)$, where $N_{\ell_0}$ is the number of repeats we find; moreover, we can easily ensure that a maximal repeat is not output twice (that is, ensure always that the gapped repeats we produce were not already contained in a previously processed interval).
	Hence, we find all maximal $\alpha$-gapped repeats $u_\lambda u'u_\rho $ with $2^{\log \log (2\ell_0)} \log (2\ell_0) \leq |u| $.
This means we find all the maximal $\alpha$-gapped repeats with $|u|\geq 2^{\log \log (2\ell_0) +1} \log (2\ell_0)$.
Since $2^{\log \log (2\ell_0) +1} \log (2\ell_0) \leq 16\log n$ (for $n$ large enough, as $\alpha\leq \log n$), we can apply Lemma~\ref{short_reps} for gapped repeats with an arm-length smaller than
$2^{\log \log (2\ell_0) +1} \log (2\ell_0)$.
}
\end{proof}

Putting together the results of Lemmas \ref{short_reps} and \ref{long_reps} we get the following theorem. 
\begin{theorem}\label{total}
Given a word $w$ and $\alpha\geq 1$, we can compute $\grGR$  in time $\bigo(\alpha n)$.
\end{theorem}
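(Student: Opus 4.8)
The plan is to combine the two previous lemmas, which already split the search space by the length of the right arm. Specifically, \Cref{short_reps} handles all maximal $\alpha$-gapped repeats $u_\lambda, u', u_\rho$ with $|u_\rho| \leq 16 \log n$ in $\bigo(\alpha n)$ time, while \Cref{long_reps} handles all those with $|u_\rho| > 16 \log n$, again in $\bigo(\alpha n)$ time. Since every maximal $\alpha$-gapped repeat falls into exactly one of these two categories (as $16 \log n$ is a hard threshold on $|u_\rho| = |u|$), the two procedures together enumerate the entire set $\grGR$. Running both and concatenating their outputs therefore produces all of $\grGR$ in $\bigo(\alpha n) + \bigo(\alpha n) = \bigo(\alpha n)$ time.

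The only subtlety I would flag is the boundary between the two regimes: I must make sure that no repeat is missed and none is reported twice at the cutoff $|u_\rho| = 16 \log n$. The statements of the two lemmas are phrased with complementary inequalities ($|u_\rho| \leq 16 \log n$ versus $|u_\rho| > 16 \log n$), so the partition is clean and exhaustive by construction, and no double-counting can occur. I would also note that both lemmas already internally guarantee that each maximal repeat is output exactly once within its own regime (this was argued in their respective proofs when moving between consecutive subwords $x_m$ or $k$-blocks), so the combined algorithm inherits this property.

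First I would invoke \Cref{short_reps} to obtain, in $\bigo(\alpha n)$ time, every maximal $\alpha$-gapped repeat whose arm length is at most $16 \log n$. Next I would invoke \Cref{long_reps} to obtain, again in $\bigo(\alpha n)$ time, every maximal $\alpha$-gapped repeat whose arm length exceeds $16 \log n$. Finally I would take the union of the two output lists. Correctness follows because the length condition partitions $\grGR$ into precisely these two disjoint classes, and the overall running time is the sum of two $\bigo(\alpha n)$ terms, which is $\bigo(\alpha n)$.

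I do not anticipate any genuine obstacle here, since all the difficulty has been discharged into the two preceding lemmas; the residual work is purely a matter of verifying that the two length ranges tile $\grGR$ exactly. The main thing to be careful about is simply asserting the disjoint-and-exhaustive partition cleanly, so that the additive time bound and the exact-once enumeration are both immediate. Hence the theorem follows directly.
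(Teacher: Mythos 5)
Your proposal matches the paper's own argument exactly: the theorem is obtained by combining \Cref{short_reps} and \Cref{long_reps}, whose complementary conditions $|u_\rho|\leq 16\log n$ and $|u_\rho|>16\log n$ partition $\grGR$, giving a total running time of $\bigo(\alpha n)$. Your additional care about the boundary and exact-once enumeration is consistent with what the two lemmas already guarantee, so nothing further is needed.
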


{By a completely similar approach we can compute $\gpGR$, generalizing the algorithm of~\cite{gappedPalindroms}.
To this end, we construct $\LCP$-structures for $w\pali{w}$ (allowing us to test efficiently whether a factor $\pali{w[i,j]}$ occurs at some position in $w$).
When we search the $\alpha$-gapped palindromes $u_\lambda,v,u_\rho$ (with $\r\arm \equiv \pali{\l\arm}$), we split again $w$ in blocks and $k$-blocks, for each $k\leq \log |w|$, to check whether there exists such an $u_\lambda,v,u_\rho$ with $2^{k}\leq |u_\lambda |\leq 2^{k+1}$. 
This search is conducted pretty much as in the case of repeats, only that now when we fix some factor $y$ of $u_\rho$, we have to look for the occurrences of $\pali{y}$ in the factor of length $\bigo(\alpha |u_\rho|)$ preceding it; the $\LCP$-structures for $w\pali{w}$ are useful for this, because, as explained above, they allow us to efficiently search the mirror images of factors of $w$ inside $w$.
Thus, given a word $w$ and $\alpha\geq 1$, we can compute $\gpGR$  in time $\bigo(\alpha n)$.}

\newpage
\bibliography{f_periodic}

\newpage
\pagestyle{empty}
\section*{Appendix}
{%
\section{A short recent history of this problem}

The problem of searching for gapped repeats and palindromes in words is not so new (see \cite{Gu97,Brodal,KK_SPIRE}), and different solutions were proposed depending on the type of restrictions imposed on the gap.
In \cite{gappedPalindroms}, Kolpakov and Kucherov introduced the notion of $\alpha$-gapped palindromes, and showed how to compute
the set $\gpGR$ of all maximal $\alpha$-gapped palindromes in $\bigo(\alpha^2 n + |\gpGR|)$ time for any input word~$w$ of length $n$ over constant alphabets.
In \cite{KolpakovPPK14}, Kolpakov et al. introduced the notion of $\alpha$-gapped repeats, and showed that the number of maximal $\alpha$-gapped repeats in a word~$w$ of length $n$ is $\bigo(\alpha^2 n)$, and that this maximal set $\grGR$ can  be computed in $\bigo(\alpha^2 n)$ time.
They suggested two open problems concerning this:
\begin{itemize}
	\item closing the gap between the upper bound $\bigo(\alpha^2n)$ and the lower bound $\Omega(\alpha n)$ 
		for the number of maximal $\alpha$-gapped repeats, and 
	\item developing a more efficient algorithm.
\end{itemize}
Their preliminary experiments suggested that the upper bound should be $\bigo(\alpha n)$.

In February 2015, at the Stringmasters meeting in Warsaw, the third author of this paper pointed that obtaining a bound on the number of $\alpha$-gapped repeats seems to be an interesting open problem, and also conjectured that this bound is $\bigo(\alpha n)$.
As a reaction, several connected papers appeared: \cite{FCT} shows how to compute the longest $\alpha$-gapped repeat/palindrome in $\bigo(\alpha n)$ time, \cite{MFCS} shows how to compute a series of data structures that can give the longest $2$-gapped repeat/palindrome that starts at each position (and the results generalize easily to arbitrary $\alpha$), \cite{TanimuraFIIBT15} gives an $\bigo(\alpha n +|\grGR|)$-time solution to find all maximal $\alpha$-gapped repeats for an input word over constant alphabets.

Finally, in August 2015, the fourth author of this paper announced on the Stringmasters web-page\footnote{\url{http://stringmasters.mimuw.edu.pl/open\_problems.html}} that the bound on the number of $\alpha$-gapped repeats is indeed $\bigo(\alpha n)$; together with \cite{TanimuraFIIBT15} this lead to an optimal algorithm for solving the problem of finding all $\alpha$-gapped repeats in the particular case of constant alphabets.
This announcement was followed by the paper \cite{KKC15} which basically confirmed the bound $\bigo(\alpha n)$ and also gave an algorithm computing efficiently all $\alpha$-gapped maximal repeats for constant alphabets.
Our paper concludes in big measure this line of research: we give concrete bounds on the number of $\alpha$-gapped repeats and $\alpha$-gapped palindromes, and, building on the approach from \cite{FCT}, we give optimal algorithms for finding them in the usual case of integer alphabets.

}

\section{Gapped repeats}
The following figure might ease the understanding the proof of \Cref{lemmaP} regarding periodic gapped repeats.
\begin{figure}[h]
	\centering{%
		\begin{tikzpicture}[yscale=0.4,xscale=0.5]
  \col[arm]{2}{$\l\arm$}
  \void{1}
  \col[arm]{2}{$\r\arm$}
  \newrow
  \col[subarm]{1}{$\l\subarm$}
  \void{2}
  \col[subarm]{1}{$\r\subarm$}
  \newrow
  \col[rep]{1}{$\l\rep$}
  \void{1.5}
  \col[rep]{1.5}{$\r\rep$}
	\end{tikzpicture}
	\hspace{2em}
		\begin{tikzpicture}[yscale=0.4,xscale=0.5]
  \col[arm]{2}{$\l\arm$}
  \void{1}
  \col[arm]{2}{$\r\arm$}
  \newrow
  \col[subarm]{1}{$\l\subarm$}
  \void{2}
  \col[subarm]{1}{$\r\subarm$}
  \newrow
  \void{-0.5}
  \col[rep]{1.5}{$\l\rep$}
  \void{2}
  \col[rep]{1}{$\r\rep$}
	\end{tikzpicture}
}
\caption{%
  The equation~$\b{\l\arm} = \b{\l\rep}$ or $\b{\r\arm} = \b{\r\rep}$ must hold in the setting of \Cref{lemmaP}.
  By the maximality property of runs, $\e{\l\rep} = \e{\l\subarm}$ and $\e{\r\rep} = \e{\r\subarm}$.
  }
\label{figPeriodic}
\end{figure}

\section{Gapped palindromes}

\begin{Lemma}\label{lemmaPPali}
	Given a word~$w$, and two real numbers $\alpha > 1$ and $0 < \beta < 1$. Then
		$\abs{\gpP}$ is at most $2 (\alpha+1) E(w) / \beta$. 
\end{Lemma}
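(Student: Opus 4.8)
The plan is to mirror the proof of \Cref{lemmaP}, replacing the translation symmetry of a repeat by the reflection symmetry of a palindrome; the factor $\alpha$ becomes $\alpha+1$ because for a palindrome the relevant quantity is the span from the start of the left arm to the end of the right arm rather than the period. Let $\agr=(\l\arm,\gap,\r\arm)\in\gpP$. By definition $\l\arm$ has a periodic prefix $\l\subarm$ of length at least $\beta\abs{\l\arm}$, generated by a run $\l\rep$ of period $\period$; since $\l\subarm$ is periodic we have $\abs{\l\subarm}\ge 2\period$. Because $\r\arm\equiv\pali{\l\arm}$, the mirror image $\pali{\l\subarm}$ occurs as a \emph{suffix} $\r\subarm$ of $\r\arm$, is periodic of the same period $\period$, and is thus generated by some run $\r\rep$. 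First I would show, exactly as in \Cref{lemmaP} but using the outer maximality condition $w[\b{\l\arm}-1]\neq w[\e{\r\arm}+1]$, that $\b{\l\arm}=\b{\l\rep}$ or $\e{\r\arm}=\e{\r\rep}$: if both arms could be advanced outward inside their runs, the period of $\l\rep$ together with the palindrome relation $w[\b{\l\arm}+i]=w[\e{\r\arm}-i]$ and the period of $\r\rep$ would force $w[\b{\l\arm}-1]=w[\e{\r\arm}+1]$, contradicting maximality. By the reversal bijection $\gpGR\sim\gpGR[\pali{w}]$ the two cases are symmetric, so I would analyse only $\b{\l\arm}=\b{\l\rep}$ and double the resulting bound.

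Next I fix the run $\l\rep$ and count the palindromes it can generate in this case. With $\b{\l\arm}=\b{\l\rep}$ fixed, a maximal palindrome is determined by its end position $\e{\r\arm}$ alone: this fixes the reflection centre $(\b{\l\arm}+\e{\r\arm})/2$, and maximality (both the inner condition $w[\e{\l\arm}+1]\neq w[\b{\r\arm}-1]$ and the outer one) forces the arms to be the longest reflected pair around that centre with the prescribed left end. Hence the map $\agr\mapsto\e{\r\arm}$ is injective on the palindromes generated by $\l\rep$. Since $\agr$ is $\alpha$-gapped, the span satisfies $\e{\r\arm}-\b{\l\arm}+1=2\abs{\l\arm}+\abs{\gap}\le(\alpha+1)\abs{\l\arm}$, and $\abs{\l\arm}\le\abs{\l\subarm}/\beta\le\abs{\l\rep}/\beta$, so $\e{\r\arm}$ ranges over at most $(\alpha+1)\abs{\l\rep}/\beta$ values.

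The step I expect to be the main obstacle is the separation estimate: two distinct palindromes generated by $\l\rep$ must have $\e{\r\arm}$-values differing by at least $\period$. I would argue as follows. Suppose $0<\e{\r\arm}_2-\e{\r\arm}_1<\period$. Both right periodic suffixes have length at least $2\period$, so they overlap in more than $\period$ positions; by \Cref{lemmaRepetitionsOverlap} they lie in the same run $\r\rep$. Reading $\r\rep$ inward from $\e{\r\arm}_i$ reproduces, via $w[\e{\r\arm}_i-j]=w[\b{\l\rep}+j]$, the period-$\period$ pattern of $\l\rep$ over at least two full periods, which pins down the phase of the occurrence modulo $\period$; hence $\e{\r\arm}_1\equiv\e{\r\arm}_2\pmod{\period}$, contradicting $0<\e{\r\arm}_2-\e{\r\arm}_1<\period$. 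Therefore $\l\rep$ generates at most $(\alpha+1)\abs{\l\rep}/(\beta\period)=(\alpha+1)\exp(\l\rep)/\beta$ palindromes in the case $\b{\l\arm}=\b{\l\rep}$. Summing over all runs gives $(\alpha+1)E(w)/\beta$, and doubling for the symmetric case $\e{\r\arm}=\e{\r\rep}$ yields $\abs{\gpP}\le 2(\alpha+1)E(w)/\beta$, as claimed.
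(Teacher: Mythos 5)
Your proof is correct and takes essentially the same route as the paper's: the same dichotomy $\b{\l\arm}=\b{\l\rep}$ or $\e{\r\arm}=\e{\r\rep}$ forced by outer maximality, the same strategy of fixing the run $\l\rep$ and counting at most $(\alpha+1)\exp(\l\rep)/\beta$ palindromes per run via a separation of at least $\period$, and the same doubling by symmetry, summing to $2(\alpha+1)E(w)/\beta$ over all runs. Your only deviations are cosmetic: you index the palindromes generated by $\l\rep$ by $\e{\r\arm}$ and prove the $\period$-separation explicitly with a phase/periodicity argument, whereas the paper uses the equivalent parameter $d=\b{\r\subarm}-\e{\l\subarm}$ and cites \Cref{lemmaRepetitionsOverlap} directly.
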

\begin{proof}
  Let $\agr \in \grP$ (or $\agr \in \gpP$).
  The gapped palindrome $\agr$ consists of the triple $(\l\arm,\gap,\r\arm) \in \grP$.
  By definition, the left arm~$\l\arm$ has a periodic prefix $\l\subarm$ of length at least $\beta \abs{\l\arm}$.
  Let $\l\rep$ denote the run that generates $\l\subarm$, i.e.,
  $\l\subarm \subseteq \l\rep$ and they both have the common shortest period $\period$.
  By the definition of gapped palindromes, there is a right copy $\r\subarm$ of $\l\subarm$ contained in $\r\arm$ with
  \[
	  \r\subarm = 
		  w[\b{\r\arm}+(\e{\l\arm}-\e{\l\subarm}), \b{\r\arm}+(\e{\l\arm}-\e{\l\subarm})+\abs{\l\subarm}-1] \equiv \pali{\l\subarm}
  \]
  Let $\r\rep$ be a run generating $\r\subarm$ (it is possible that $\r\rep$ and $\l\rep$ are identical).
  By definition, $\r\rep$ has the same period~$\period$ as $\l\rep$.
  In the following, we will see that $\agr$ is uniquely determined by $\l\rep$ and 
 the distance $d := \b{\r\subarm} - \e{\l\subarm}$, if $\agr$ is a periodic gapped palindrome.
  We will fix $\l\rep$ and study how many maximal periodic gapped repeats (palindromes) can be generated by $\l\rep$.

\begin{figure}[h]
	\centering{%
			\begin{tikzpicture}[yscale=0.4,xscale=0.5]
			\col[arm]{2}{$\l\arm$}
			\void{1}
			\col[arm]{2}{$\r\arm$}
			\newrow
			\col[subarm]{1}{$\l\subarm$}
			\void{3}
			\col[subarm]{1}{$\r\subarm$}
			\newrow
			\col[rep]{1}{$\l\rep$}
			\void{3}
			\col[rep]{1.5}{$\r\rep$}
			\end{tikzpicture}
			\hspace{2em}
\begin{tikzpicture}[yscale=0.4,xscale=0.5]
\col[arm]{2}{$\l\arm$}
\void{1}
\col[arm]{2}{$\r\arm$}
\newrow
\col[subarm]{1}{$\l\subarm$}
\void{3}
\col[subarm]{1}{$\r\subarm$}
\newrow
\void{-0.5}
\col[rep]{1.5}{$\l\rep$}
\void{3}
\col[rep]{1}{$\r\rep$}
\end{tikzpicture}
	}
	\caption{%
  The equation~$\b{\l\arm} = \b{\l\rep}$ or $\e{\r\arm} = \e{\r\rep}$ must hold in the setting of \Cref{lemmaPPali}.
  By the maximality property of runs, $\e{\l\rep} = \e{\l\subarm}$ and $\b{\r\rep} = \b{\r\subarm}$.
  }
	\label{figPeriodicPal}
\end{figure}

  Since $\agr$~is maximal, $\b{\l\arm} = \b{\l\rep}$ or $\e{\r\arm} = \e{\r\rep}$ must hold;
  otherwise we could extend $\agr$ outwards.
  We analyze the case $\b{\l\subarm} = \b{\l\rep}$, the other is treated exactly in the same way by symmetry.
    The gapped palindrome~$\agr$ is identified by~$\l\rep$ and~$d$.
    We fix $\l\rep$ and count the number of possible values for $d$.
    Given two different periodic $\alpha$-gapped palindromes with the distances $d_1$ and $d_2$, 
    the difference between $d_1$ and $d_2$ must be at least $\period$, due to~\Cref{lemmaRepetitionsOverlap}.
    It follows from $\abs{\l\arm} \leq \abs{\l\subarm} / \beta$ that
    $d = \abs{\gap} + 2(\abs{\l\arm}-\abs{\l\subarm}) \leq \abs{\gap} + 2 \abs{\l\subarm} / \beta$.
    Since $\agr$ is $\alpha$-gapped, $\abs{\gap} \leq (\alpha - 1) \abs{\l\arm} \leq (\alpha - 1) \abs{\l\subarm} / \beta$, and hence,
    $1 \leq d \leq \abs{\l\subarm} (\alpha+1) / \beta$.
    Then the number of possible values for the distance~$d$ is bounded by 
    $\abs{\l\subarm} (\alpha+1) / (\beta \period) \leq \abs{\l\rep} (\alpha+1) / (\beta \period) = \exp(\l\rep) (\alpha+1) / \beta$.
    In total, the number of maximal $\alpha$-gapped palindromes in this case is bounded by $(\alpha+1) E(w) / \beta$ for the case
	$\b{\l\arm} = \b{\l\rep}$.
  Summing up we get the bound $2 (\alpha+1) E(w) / \beta$.
  \end{proof}

\begin{Lemma}\label{lemmaPalNPCover}
	Given a word~$w$, and two real numbers $\alpha > 1$ and $6/7 \leq \beta < 1$.
	The points mapped by two different maximal gapped palindromes in $\gpNP$ cannot $\frac{1-\beta}{\alpha}$-cover the same point.
\end{Lemma}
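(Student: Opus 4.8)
The plan is to adapt the proof of \Cref{lemmaNPCover} along a closely parallel route, replacing the ``forward right copy'' of an overlap by its ``mirror right copy''. Take two distinct maximal gapped palindromes $\agr = \l\arm,\gap,\r\arm$ and $\s\agr = \s{\l\arm},\s\gap,\s{\r\arm}$ in $\gpNP$, and --- following the mapping announced before the statement --- send them to the points $(\e{\l\arm},\abs{\gap})$ and $(\e{\s{\l\arm}},\abs{\s\gap})$, using the \emph{gap length} as $y$-coordinate. Set $\arm := \abs{\l\arm}$, $\s\arm := \abs{\s{\l\arm}}$, $\gapS := \abs{\gap}$, $\s{\gapS} := \abs{\s\gap}$ and $\gamma := (1-\beta)/\alpha$, and assume for contradiction that both points $\gamma$-cover a common $(x,y)$. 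After assuming \wlogg{} $\gapS \le \s{\gapS}$, the cover inequalities give $\zarm := \abs{\e{\l\arm}-\e{\s{\l\arm}}} \le \gamma\s{\gapS}$ and $0 \le \delta := \s{\gapS}-\gapS \le \gamma\s{\gapS}$, while $\gapS \ge \s{\gapS}(1-\gamma) \ge \s{\gapS}\beta$. The decisive new ingredient, compared with repeats, is the $\alpha$-gapped relation in palindrome form $\abs{\gap} \le (\alpha-1)\abs{\l\arm}$, which replaces the period-based estimate of \Cref{lemmaNPCover} by the arm-length estimates $\arm \ge \gapS/(\alpha-1) \ge \s{\gapS}\beta/(\alpha-1)$ and $\s\arm \ge \s{\gapS}/(\alpha-1)$.

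I would then let $\l\subarm$ be the overlap of the two left arms and let $\r\subarm$, $\s{\r\subarm}$ be its two mirror copies, lying inside $\r\arm$ and $\s{\r\arm}$; since $\r\arm \equiv \pali{\l\arm}$ and $\s{\r\arm} \equiv \pali{\s{\l\arm}}$, both $\r\subarm$ and $\s{\r\subarm}$ are occurrences of the single factor $\pali{\l\subarm}$. Writing each palindrome through its centre $\e{\l\arm} + (\abs{\gap}+1)/2$, a short computation shows that the relative offset of these two occurrences equals $\abs{2(\e{\s{\l\arm}}-\e{\l\arm}) + (\s{\gapS}-\gapS)}$, hence is at most $2\zarm + \delta \le 3\gamma\s{\gapS} = 3(1-\beta)\s{\gapS}/\alpha$. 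The extra factor $2$ on the \emph{position} difference --- absent for repeats, where the offset was simply $\delta$ --- is exactly what forces the threshold $\beta \ge 6/7$ in place of $\beta \ge 2/3$.

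From here the argument splits into the same four cases as in \Cref{lemmaNPCover}, according to the signs of $\e{\l\arm}-\e{\s{\l\arm}}$ and $\b{\l\arm}-\b{\s{\l\arm}}$. In each case the aim is to force $\subarm := \abs{\l\subarm} \ge 2\cdot(\text{offset})$, so that the two occurrences of $\pali{\l\subarm}$ overlap by at least half their length; as reversal preserves the smallest period, $\l\subarm$ is then periodic, and because $\l\subarm$ is a prefix of one of the left arms of length at least $\beta$ times that arm, the corresponding palindrome lies in $\gpP$ --- the desired contradiction. In the two nested cases ($\l\subarm$ equal to one arm) this reduces to $\subarm/(\text{offset}) \ge \frac{\beta}{3(1-\beta)}\cdot\frac{\alpha}{\alpha-1} \ge \frac{\beta}{3(1-\beta)}$, which is $\ge 2$ exactly when $\beta \ge 6/7$; the same-side nested case is even slacker (it only needs $\beta \ge 5/6$), and one of the two partially-overlapping cases --- the one whose overlap is governed by $\s\arm \ge \s{\gapS}/(\alpha-1)$ --- also closes directly at $\beta \ge 6/7$.

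The main obstacle is the remaining partially-overlapping case (the analogue of Sub-Case~2b), where the overlap is governed by the shorter lower bound $\arm \ge \s{\gapS}\beta/(\alpha-1)$ and the direct inequality $\subarm \ge 2\cdot(\text{offset})$ fails at $\beta = 6/7$. There, exactly as in the repeat proof, I would branch on whether the offset is at most $\subarm/2$: if it is, $\l\subarm$ is periodic as above; if it is not, then the two occurrences of $\pali{\l\subarm}$ still overlap (one checks $\text{offset} < \subarm$ at $\beta = 6/7$), so they span a factor of period equal to the offset and of length twice the offset, which --- since $2\cdot(\text{offset}) > \subarm \ge \arm\beta$ while $2\cdot(\text{offset}) \le \arm$, both holding at $\beta = 6/7$ by the single inequality $6(1-\beta)(\alpha-1) \le \beta\alpha$ --- yields a periodic prefix of a left arm exceeding $\beta$ times its length, again putting the palindrome in $\gpP$. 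Threading the doubled offset through this secondary branch while keeping the final threshold at $6/7$ is the only genuinely delicate computation; everything else is routine once the offset formula $2\zarm + \delta$ is in hand. As a preliminary step I would prove, just as for repeats, the sub-claim that $\l\subarm$ is non-empty and $\r\subarm \ne \s{\r\subarm}$, using that the arm ends lie within $\gamma\s{\gapS}$ while the arms have length at least $\s{\gapS}/(\alpha-1)$, and that coinciding mirror copies would allow an outward extension of one palindrome, contradicting maximality.
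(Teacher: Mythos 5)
Your proposal follows the paper's proof in all essentials: the same point mapping $(\e{\l\arm},\abs{\gap})$ with the gap as $y$-coordinate, the same non-empty-overlap sub-claim via maximality, the same four-case split on the relative positions of the arm endpoints, and the same key offset formula $2(\e{\s{\l\arm}}-\e{\l\arm})+(\s{\gapS}-\gapS)$ for the mirror copies, whose doubled $\zarm$-term is indeed what raises the threshold from $2/3$ to $6/7$. Your treatment of Cases 1a, 1b and 2a matches the paper's (your $/(\alpha-1)$ bounds in place of the paper's $/\alpha$ bounds are immaterial).

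The one divergence is Sub-case 2b, and there your route is needlessly harder than the paper's. The paper does not branch at all: in Case 2 the offset is $\abs{2\zarm-\delta}$, which is at most $\max(2\zarm,\delta) \le 2\s{\gapS}(1-\beta)/\alpha$ rather than your uniform $3\s{\gapS}(1-\beta)/\alpha$, and combined with $\arm \ge \s{\gapS}\beta/\alpha$ this gives $\subarm/\abs{2\zarm-\delta} \ge \beta^2/(2(1-\beta)) > 2$ directly, since $6/7 > 2(\sqrt{2}-1)$. Your branch-on-offset variant can be made to work, but as stated it has a soft spot you must repair: when the offset exceeds $\subarm/2$, the periodic factor spanned by the two overlapping occurrences of $\pali{\l\subarm}$ yields a periodic suffix of $\r\arm$ (equivalently, a periodic prefix of $\l\arm$, as required for membership in $\gpP$) only when $\s{\r\subarm}$ starts to the left of $\r\subarm$, i.e., when $2\zarm > \delta$ --- in Case 2b $\r\subarm$ is a suffix of $\r\arm$, so only in that sign is the union anchored at $\e{\r\arm}$. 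In the opposite sign ($\delta > 2\zarm$) the union is not aligned with any arm boundary, and the factor of length ``twice the offset'' you invoke lies strictly inside the overlap region; there you must instead observe that the offset is then at most $\delta \le \s{\gapS}(1-\beta)/\alpha \le \subarm/2$ for $\beta \ge 6/7$ (using $\subarm \ge \beta\arm \ge \beta^2\s{\gapS}/\alpha$ and $2(1-\beta) \le \beta^2$, valid already for $\beta \ge \sqrt{3}-1$), so the problematic branch never arises in that sign and the primary branch applies. With this sign analysis added, your argument closes at $\beta \ge 6/7$; alternatively, adopting the paper's tighter Case-2 offset bound eliminates the secondary branch altogether.
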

\begin{proof}
	Let $\agr = \l\arm,\gap,\r\arm$ and $\s\agr = \s{\l\arm},\s{\gap},\s{\r\arm}$ be two different gapped palindromes in $\gpNP$.
  Set $\arm := \abs{\l\arm} = \abs{\r\arm}$, $\s\arm := \abs{\s{\l\arm}} = \abs{\s{\r\arm}}$, $\gapS := \abs{\gap}$ and $\s{\gapS} := \abs{\s{\gap}}$.
  We map the maximal gapped palindromes~$\agr$ and~$\s\agr$ to the points~$(\e{\l\arm}, \gapS)$ and~$(\e{\s{\l\arm}},\s{\gapS})$, respectively.
  Assume, for the sake of contradiction,  that both points $\frac{1-\beta}{\alpha}$-cover the same point $(x, y)$.

  Let $\zarm := \abs{\e{\l\arm} - \e{\s{\l\arm}}}$ be the difference of the endings of both left arms, and
  $\l\subarm := w[ [\b{\l\arm},\e{\l\arm}]\cap [\b{\s{\l\arm}},\e{\s{\l\arm}}] ]$ 
  be the overlap of $\l\arm$ and $\s{\l\arm}$.
  Let $\subarm = \abs{\l\subarm}$, 
  and let $\r\subarm$ (resp. $\s{\r\subarm}$) be the reversed copy of $\l\subarm$ based on $\agr$ (resp. $\s\agr$).

  {\bf Sub-Claim:}  The overlap~$\l\subarm$ is not empty, and $\r\subarm \not= \s{\r\subarm}$

  {\bf Sub-Proof.}
  Assume for this sub-proof that $\e{\l\arm} < \e{\s{\l\arm}}$ (otherwise exchange $\agr$ with $\s\agr$, or yield the contradiction $\agr = \s\agr$). 
  By combining the $(1-\beta)/\alpha$-cover property with the fact that $\s\agr$ is $\alpha$-gapped, we yield
  \(
	  \e{\s{\l\arm}} - \s\arm \le \e{\s{\l\arm}} - \s{\gapS}(1-\beta)/\alpha \le x \le \e{\l\arm} < \e{\s{\l\arm}}.
  \)
  So the subword $w[\e{\l\arm}]$ is contained in $\s{\l\arm}$.
  If $\r\subarm = \s{\r\subarm}$, then we get a contradiction to the maximality of $\agr$:
  By the above inequality, $w[\e{\l\arm}+1]$ is contained in $\s{\l\arm}$, too. 
  Since $\s\agr$ is a gapped palindrome, the character~$w[\e{\l\arm}+1]$ occurs in $\s{\r\arm}$, exactly at $w[\b{\r\arm}-1]$.
  \qed{}

  \Wlogg{} let $\gapS \le \s{\gapS}$.
  Then
  \begin{equation}\label{equOrdGapRuleInvPal}
    \s{\gapS} - \frac{\s{\gapS}(1-\beta)}{\alpha} \le y \le \gapS \le \s{\gapS}.
  \end{equation}
  So the difference of both gaps is
  \begin{equation}\label{equOrdGapRulePal}
    0 \le \delta := \s{\gapS} - \gapS \le \s{\gapS}(1-\beta)/\alpha \le \s\arm(1-\beta).
  \end{equation}

  By case analysis, we show that $\l\subarm$ is periodic,
  which leads to the contradiction that $\agr$ or $\s\agr$ are in $\gpP$.

  {\bf 1. Case $\e{\l\arm} \le \e{\s{\l\arm}}$}.
  Since $\e{\s{\l\arm}} - \s{\gapS}(1-\beta)/\alpha \le x \le \e{\l\arm} \le \e{\s{\l\arm}}$,
  \begin{equation}\label{equOrdGapRulePosFirstLeftPal}
    \zarm = \e{\s{\l\arm}} - \e{\l\arm} \le \s{\gapS}(1-\beta)/\alpha \le \s\arm(1-\beta).
  \end{equation}
  Since $\l\subarm$ is a prefix of $\s{\l\arm}$ and a suffix of $\l\arm$, 
  the reverse copy~$\r\subarm$ is a suffix of $\s{\r\arm}$ and a prefix of $\r\arm$.
  The starting positions of both right copies $\s{\r\subarm}$ and $\r\subarm$ differ by $\b{\s{\r\subarm}} - \b{\r\subarm} = 2\zarm + \delta > 0$.
  By~\Cref{equOrdGapRulePal,equOrdGapRulePosFirstLeftPal}, we get $2\zarm + \delta \le 3 \s{\gapS}(1-\beta)/\alpha \le 3 \s\arm(1-\beta)$.

  \begin{figure}[h]
    \centering{%
		\begin{tikzpicture}[yscale=0.4,xscale=0.92]
  \col[arm]{3.5}{$\l\arm$}
  \col[gap]{2}{$\gap$}
  \col[arm]{3.5}{$\r\arm$}
  \newrow
  \void{0.5}
  \col[arm]{3.5}{$\s{\l\arm}$}
  \col[gap]{2.5}{$\s\gap$}
  \col[arm]{3.5}{$\s{\r\arm}$}
  \newrow
  \void{0.5}
  \col[subarm]{3}{$\l\subarm$}
  \col[diff]{0.5}{$\zarm$}
  \void{1.5}
  \col[subarm]{3}{$\r\subarm$}
  \newrow
  \void{5.5}
  \col[rep]{1.5}{$2\zarm+\delta$}
  \col[subarm]{3}{$\s{\r\subarm}$}
	\end{tikzpicture}
    }
    \caption{Sub-Case 1a}
  \end{figure}

  {\bf 1a. Sub-Case $\b{\l\arm} \le \b{\s{\l\arm}}$}.
  By~\Cref{equOrdGapRulePosFirstLeftPal}, we get $\subarm = \s\arm - \zarm \ge \s\arm \beta$.
  It follows from $6/7 \leq \beta < 1$ that $\subarm / (2\zarm + \delta) \ge \s\arm \beta / 3 \s\arm (1-\beta) = \beta / (3(1-\beta)) \ge 2$,
  which means that $\r\subarm$ and $\s{\r\subarm}$ overlap by at least half of their common length, and $\l\subarm$ is periodic.
  Since $\l\subarm$ is a prefix of $\s{\l\arm}$ of length $\subarm \ge \s\arm \beta$, $\s\agr$ is in $\gpP$, a contradiction.

  \begin{figure}[h]
    \centering{%
		\begin{tikzpicture}[yscale=0.4,xscale=0.92]
  \void{0.5}
  \col[arm]{3.5}{$\l\arm$}
  \col[gap]{2}{$\gap$}
  \col[arm]{3.5}{$\r\arm$}
  \newrow
  \col[arm]{4.5}{$\s{\l\arm}$}
  \col[gap]{2.5}{$\s\gap$}
  \col[arm]{4.5}{$\s{\r\arm}$}
  \newrow
  \void{0.5}
  \col[subarm]{3.5}{$\l\subarm$}
  \col[diff]{0.5}{$\zarm$}
  \void{1.5}
  \col[subarm]{3.5}{$\r\subarm$}
  \newrow
  \void{6}
  \col[rep]{1.5}{$2\zarm+\delta$}
  \col[subarm]{3.5}{$\s{\r\subarm}$}
	\end{tikzpicture}
    }
    \caption{Sub-Case 1b}
  \end{figure}

  {\bf 1b. Sub-Case $\b{\l\arm} > \b{\s{\l\arm}}$}.
  We conclude that $\l\subarm = \l\arm$.
  By~\Cref{equOrdGapRuleInvPal},
  \begin{equation}\label{equOrdGapRuleTakeFirstArmPal}
  	\arm \ge \gapS/\alpha \ge \frac{\s{\gapS}}{\alpha} (1 - \frac{1-\beta}{\alpha}) \ge \s{\gapS}\beta/\alpha.
  \end{equation}
  It follows from $6/7 \leq \beta < 1$ that $\subarm / (2\zarm + \delta) \ge \s{\gapS} \alpha \beta / (3 \alpha \s{\gapS} (1-\beta)) = \beta / (3(1-\beta)) \ge 2$,
  which means that $\l\subarm = \l\arm$ is periodic.
  Hence $\agr$ is in $\gpP$, a contradiction.

  {\bf 2. Case $\e{\l\arm} > \e{\s{\l\arm}}$}.
  Since $\e{\l\arm} - \gapS(1-\beta)/\alpha \le x \le \e{\s{\l\arm}} \le \e{\l\arm}$,
  \begin{equation}\label{equOrdGapRulePosFirstRightPal}
    \zarm = \e{\l\arm} - \e{\s{\l\arm}} \le \gapS(1-\beta)/\alpha \le \s\gapS(1-\beta)/\alpha \le \s\arm(1-\beta).
  \end{equation}
  The starting positions of both right copies differ by $\abs{\b{\r\subarm} - \b{\s{\r\subarm}}} = \abs{2\zarm - \delta}$.
  Since $2\zarm - \delta \le \max\tuple{\delta,2\zarm}$,
  we get $\abs{2\zarm - \delta} \le 2 \s\gapS (1-\beta)/\alpha \le 2 \s\arm(1-\beta)$ by~\Cref{equOrdGapRulePal,equOrdGapRulePosFirstRightPal}.

  \begin{figure}[h]
    \centering{%
		\begin{tikzpicture}[yscale=0.4,xscale=0.92]
  \col[arm]{5}{$\l\arm$}
  \col[gap]{2.5}{$\gap$}
  \col[arm]{5}{$\r\arm$}
  \newrow
  \void{0.5}
  \col[arm]{3.5}{$\s{\l\arm}$}
  \col[gap]{3}{$\s\gap$}
  \col[arm]{3.5}{$\s{\r\arm}$}
  \newrow
  \void{0.5}
  \col[subarm]{3.5}{$\l\subarm$}
  \col[diff]{1}{$\zarm$}
  \void{2}
  \col[rep]{1.5}{${2\zarm-\delta}$}
  \col[subarm]{3.5}{$\r\subarm$}
  \newrow
  \void{7}
  \col[subarm]{3.5}{$\s{\r\subarm}$}
	\end{tikzpicture}
    }
    \caption{Sub-Case 2a}
  \end{figure}
  {\bf 2a. Sub-Case $\b{\l\arm} \le \b{\s{\l\arm}}$}.
  We conclude that $\l\subarm = \s{\l\arm}$.
  It follows from $6/7 \leq \beta < 1$ that $\subarm / \abs{2\zarm - \delta} \ge \s\arm / (2 \s\arm (1-\beta)) = 1 / (2(1-\beta)) \ge 7/2 > 2$,
  which means that $\l\subarm = \s{\l\arm}$ is periodic.
  Hence $\s\agr$ is in $\gpP$, a contradiction.

  \begin{figure}[h]
    \centering{%
		\begin{tikzpicture}[yscale=0.4,xscale=0.92]
  \void{0.5}
  \col[arm]{4.5}{$\l\arm$}
  \col[gap]{2}{$\gap$}
  \col[arm]{4.5}{$\r\arm$}
  \newrow
  \col[arm]{4}{$\s{\l\arm}$}
  \col[gap]{2.5}{$\s\gap$}
  \col[arm]{4}{$\s{\r\arm}$}
  \newrow
  \void{0.5}
  \col[subarm]{3.5}{$\l\subarm$}
  \col[diff]{1}{$\zarm$}
  \void{1.5}
  \col[rep]{1.5}{${2\zarm-\delta}$}
  \col[subarm]{3.5}{$\r\subarm$}
  \newrow
  \void{6.5}
  \col[subarm]{3.5}{$\s{\r\subarm}$}
	\end{tikzpicture}
    }
    \caption{Sub-Case 2b}
  \end{figure}
  {\bf 2b. Sub-Case $\b{\l\arm} > \b{\s{\l\arm}}$}.
  By \Cref{equOrdGapRulePosFirstRightPal}, we get  $\zarm \le \arm(1-\beta)$ and thus $\subarm = \arm - \zarm \ge \beta \arm$.
  It follows from~\Cref{equOrdGapRuleTakeFirstArmPal,equOrdGapRulePal}, and $2(\sqrt{2}-1) < 6/7 \leq \beta < 1$ that 
  $\subarm / \abs{2\zarm - \delta} \ge \beta \arm / (2 \s\gapS (1-\beta)/\alpha) \ge \s\gapS \beta^2 / (2 \s\gapS (1-\beta)) 
  = \beta^2 / (2 (1-\beta)) > 2$,
  which means that $\l\subarm$ is periodic.
  Since $\l\subarm$ is a prefix of $\l\arm$ of length $\subarm \ge \arm \beta$, $\agr$ is in $\gpP$, a contradiction.
\end{proof}

The next lemma follows immediately from~\Cref{lemmaPoints,lemmaPalNPCover}.
\begin{Lemma}\label{lemmaPalNP}
	For $\alpha > 1$, $6/7 \leq \beta < 1$ and a word~$w$ of length $n$, $\abs{\gpNP} < 3 \alpha n / (1 - \beta)$.
\end{Lemma}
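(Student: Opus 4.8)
The plan is to deduce this bound directly from the two preceding results via the point-counting scheme of \Cref{secPointAnalysis}, exactly mirroring the derivation of \Cref{lemmaNP} in the repeat case but using the palindrome-specific point map. First I would recall the injective encoding of maximal gapped palindromes established earlier: each $\agr = \l\arm,\gap,\r\arm$ is sent to the point $(\e{\l\arm}, \abs{\gap}) \in [1,n]^2$, where the second coordinate is the \emph{gap} size rather than the period. This map is injective because a maximal gapped palindrome is uniquely determined by its left arm together with its gap. Letting $S$ denote the image of $\gpNP$ under this map, injectivity gives $\abs{S} = \abs{\gpNP}$, so it suffices to bound $\abs{S}$.

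Next I would fix the covering parameter $\gamma := (1-\beta)/\alpha$ and verify that it lies in the admissible range $(0,1]$ required by \Cref{lemmaPoints}: since $\alpha > 1$ and $6/7 \le \beta < 1$ we have $0 < 1-\beta \le 1/7$, hence $0 < \gamma < 1$. The combinatorial input is \Cref{lemmaPalNPCover}, which states precisely that no two distinct points arising from palindromes in $\gpNP$ can $\gamma$-cover a common point. Thus the set $S$ satisfies exactly the hypothesis of \Cref{lemmaPoints} for this value of $\gamma$.

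Applying \Cref{lemmaPoints} then yields $\abs{S} < 3n/\gamma = 3\alpha n/(1-\beta)$, and composing with $\abs{\gpNP} = \abs{S}$ gives the claimed bound. I expect no genuine obstacle at this stage: the hard part — the non-covering property — has already been discharged in \Cref{lemmaPalNPCover} through its four-case analysis, while what remains here is purely bookkeeping, namely confirming the injectivity of the point map and the membership $\gamma \in (0,1]$. Consequently the argument is a single immediate implication with no case distinctions of its own.
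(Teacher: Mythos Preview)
Your proposal is correct and matches the paper's own argument, which simply states that the lemma ``follows immediately from \Cref{lemmaPoints,lemmaPalNPCover}.'' You have merely spelled out the routine verifications (injectivity of the point map, the range check for $\gamma$) that the paper leaves implicit.
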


\begin{theorem}
	For $\alpha > 1$ and a word~$w$ of length $n$, $\abs{\gpGR} < 28 \alpha n + 7 n$.
\end{theorem}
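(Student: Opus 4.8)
The plan is to mirror, step for step, the proof of the analogous result for gapped repeats ($\abs{\grGR} < 18\alpha n$), since the two ingredients it rests on are now available in the palindrome setting: the count of periodic palindromes from \Cref{lemmaPPali} and the count of aperiodic ones from \Cref{lemmaPalNP}. First I would split $\abs{\gpGR} = \abs{\gpP} + \abs{\gpNP}$ according to whether the left arm carries a periodic prefix of length at least $\beta\abs{\l\arm}$. By \Cref{lemmaPPali}, $\abs{\gpP} \le 2(\alpha+1)E(w)/\beta$, and by \Cref{lemmaPalNP}, $\abs{\gpNP} < 3\alpha n/(1-\beta)$; the latter bound is valid precisely on the range $6/7 \le \beta < 1$. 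Summing the two yields a bound depending on the single free parameter $\beta$ restricted to that interval.

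Next I would eliminate $E(w)$ using \Cref{lemmaExponent}, namely $E(w) < 3n$, which turns the bound into $6(\alpha+1)n/\beta + 3\alpha n/(1-\beta)$ for every $6/7 \le \beta < 1$. At this point any admissible $\beta$ already produces an $\bigo(\alpha n)$ bound; what remains is purely to choose $\beta$ so as to extract the clean constants claimed in the statement.

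Finally I would optimize over $\beta$. Writing the objective as $f(\beta) = 6(\alpha+1)n/\beta + 3\alpha n/(1-\beta)$, its unconstrained minimizer satisfies $(2\alpha+2)(1-\beta)^2 = \alpha\beta^2$, equivalently $(1-\beta)^2/\beta^2 = \alpha/(2(\alpha+1))$; as $\alpha$ ranges over $(1,\infty)$ the right-hand side increases from $1/4$ toward $1/2$, so the minimizing $\beta$ decreases and stays within $(2-\sqrt2,\,2/3)$, hence strictly below $6/7$. Consequently $f$ is increasing on the feasible interval $[6/7,1)$, and its minimum there is attained at the left endpoint $\beta = 6/7$ — exactly the boundary forced by the palindrome cover lemma. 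Substituting $\beta = 6/7$ gives $6/\beta = 7$ and $3/(1-\beta) = 21$, so $f(6/7) = 7(\alpha+1)n + 21\alpha n = 28\alpha n + 7n$, which is the desired bound.

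The main obstacle here is not the final assembly, which is routine arithmetic, but the fact that the feasibility constraint $\beta \ge 6/7$ — strictly tighter than the $\beta \ge 2/3$ available for repeats — is dictated by the extra factor-of-two slack inherent in the palindrome geometry of \Cref{lemmaPalNP}, where the two reversed copies are displaced by $2\zarm \pm \delta$ rather than by $\delta$ alone. For that reason I would not simply plug in $\beta = 6/7$ but argue explicitly that it is the optimal choice for the \emph{summed} bound, by verifying, as above, that the unconstrained minimizer falls below $6/7$ and hence $f$ is monotone on the admissible range; otherwise one could not be sure the constants $28\alpha n + 7n$ are the best extractable from these two lemmas.
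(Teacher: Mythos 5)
Your proposal is correct and follows essentially the same route as the paper's proof: split $\abs{\gpGR} = \abs{\gpP} + \abs{\gpNP}$, bound the two parts by \Cref{lemmaPPali} and \Cref{lemmaPalNP}, eliminate $E(w)$ via \Cref{lemmaExponent}, and evaluate the resulting bound $6(\alpha+1)n/\beta + 3\alpha n/(1-\beta)$ at $\beta = 6/7$. Your explicit verification that the unconstrained minimizer lies in $(2-\sqrt{2},\,2/3)$, hence below $6/7$, so that the bound is monotone increasing on the admissible interval $[6/7,1)$ and minimized at its left endpoint, is a correct justification of a step the paper merely asserts (``this number is minimal when $\beta = 6/7$'').
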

\begin{proof}
  By~\Cref{lemmaP,lemmaPalNP},
  $\abs{\gpGR} = \abs{\gpP} + \abs{\gpNP} < 2 (\alpha+1) E(w) / \beta + 3 \alpha n / (1 - \beta)$ for every $6/7 \leq \beta < 1$.
  Applying~\Cref{lemmaExponent}, the term is upper bounded by $6 (\alpha+1) n / \beta + 3 \alpha n / (1 - \beta)$.
  This number is minimal when $\beta = 6/7$, yielding the bound $28 \alpha n + 7 n$.
\end{proof}

\section*{Auxiliary algorithmic results}

\noindent{\bf Proof of Lemma \ref{periods}}.
\begin{proof}
Once we produce $\LCP$ data structures for $w$, we just have to compute the longest common prefix of $w[i,n]$ and $w[i+p,n]$. If this prefix is $w[i+p,\ell]$, then $w[i,\ell]$ is the longest $p$-periodic factor starting on position $i$.
\end{proof}
\bigskip

\noindent{\bf Proof of Lemma \ref{find_occ_range}}
\begin{proof}
We construct the dictionary of basic factors of a word of length $n$ in $\bigo(n\log n)$ time and reorganise it such that for each basic factor we have an array with all its occurrences, ordered by their starting positions. For each such array we construct data structures that allow predecessor/successor search in $\bigo(\log\log n)$ time (see, e.g., \cite{Boas75}). When we have to return the occurrences of $y=w[i,i+2^k-1]$ in $z=w[j,j+c 2^k -1]$, we search in the basic factors-array corresponding to $w[i,i+2^k-1]$ the successor of $j$ and, respectively, the predecessor of $j+c2^k -1$ and then return a succinct representation of the occurrences of $w[i,i+2^k-1]$ between these two values. This representation can be obtained in $\bigo(c)$ time. We just have to detect those occurrences that form a run; this can be done with a constant number of $\LCP$ queries. Indeed, for two consecutive occurrences, we compute the length of their overlap, which gives us a period of $w[i,i+2^k-1]$. Then we look in $w$ to see how long the run with this period can be extended to the right, which gives us the number of occurrences of  $w[i,i+2^k-1]$ in that run. As their starting positions form an arithmetic progression, we can represent them compactly. So, we return the representation of the occurrences of $w[i,i+2^k-1]$ from this run, and then move directly to the first occurrence of $w[i,i+2^k-1]$ appearing after this run and still in the desired range; as there are at most $\bigo(c)$ runs and separate occurrences of the given basic factor that are in $z$, the conclusion follows.
\begin{figure}
	\centering{%
	\begin{tikzpicture}[yscale=0.4]
  \void{-0.5}
  \col[arm]{14}{$z = w[j,j+8 \cdot 2^j - 1]$}
  \newrow
  \col[rep]{3}{$\rep_1$}
  \void{1}
  \col[rep]{2.5}{$\rep_2$}
  \void{1}
  \col[rep]{2}{$\rep_3$}
  \newrow
  \col[subarm]{1.5}{$y$}
  \void{2.5}
  \col[subarm]{1.5}{$y$}
  \void{2}
  \col[subarm]{1.5}{$y$}
  \void{2.5}
  \col[subarm]{1.5}{$y$}
  \newrow
  \void{0.5}
  \col[subarm]{1.5}{$y$}
  \void{2.5}
  \col[subarm]{1.5}{$y$}
  \void{2}
  \col[subarm]{1.5}{$y$}
  \void{2.5}
  \newrow
  \void{1}
  \col[subarm]{1.5}{$y$}
  \void{2.5}
  \col[subarm]{1.5}{$y$}
  \void{2}
  \newrow
  \void{1.5}
  \col[subarm]{1.5}{$y$}
  \void{2.5}
  \newrow
  \lab{$i$}
  \void{1.5}
  \lab{$i+2^k$}
	\end{tikzpicture}
}
\caption{Occurrences of the basic factor $y=w[i,i+2^k-1]$ in $z=w[j,j+8\cdot2^k-1]$. The overlapping occurrences are part of runs, and they can be returned as such. The representation of the occurrences of $y$ in $z$ will return $4$ elements: $3$ runs and one separate occurrence.}
\end{figure}
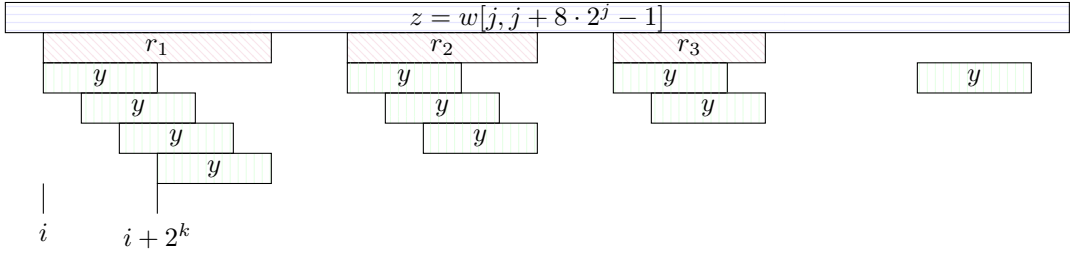
\end{proof}

\bigskip

\noindent{\bf Proof of Lemma \ref{find_occ_small}}.

In \cite{FCT} the following result was shown. Lemma \ref{find_occ_small} follows directly.

\begin{lemma}\label{find_occ_small_FCT}
Given a word $v$, $|v|=\alpha \log n$, we can process $v$ in time $\bigo(\alpha \log n)$ time such that given any basic factor $y=v[j\cdot2^k+1,(j+1)2^{k}]$ with $j,k\geq 0$ and $j2^k+1> (\alpha-1)\log n$, we can construct in $\bigo(\alpha)$ time $\bigo(\alpha)$ bit-sets, each storing $\bigo(\log n)$ bits, characterizing all the occurrences of $y$ in $v$.
\end{lemma}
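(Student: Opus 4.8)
The plan is to reduce the lemma to a single precomputation performed entirely within the $\bigo(\alpha\log n)$ preprocessing budget, after which every query is a table lookup. First I would observe that the set of patterns a query can ever name is small: a query pattern $y=v[j2^k+1,(j+1)2^k]$ must start after position $(\alpha-1)\log n$ and end inside $v$, so $2^k\le\log n$ and $0\le k\le\log\log n$; for a fixed $k$ there are only $\bigo(\log n/2^k)$ admissible starting positions, and $\sum_{k\ge0}\log n/2^k=\bigo(\log n)$. Hence at most $\bigo(\log n)$ pairs $(j,k)$ can be queried. For each such $y$ I would compute the indicator bit-vector $M_y\in\{0,1\}^{|v|}$ whose $1$-bits mark the starting positions of the occurrences of $y$ in $v$, packed into $\bigo(\alpha)$ machine words, and store it. The total storage is $\bigo(\log n)\cdot\bigo(\alpha)=\bigo(\alpha\log n)$ words, and a query just returns the stored $M_y$ in $\bigo(\alpha)$ time.

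The heart of the argument is to produce all these bit-vectors in $\bigo(\alpha\log n)$ time in total, i.e.\ $\bigo(\alpha)$ amortised per pattern. First I would rename the characters of $v$ to their ranks in $\{1,\dots,|v|\}$ using the globally available suffix array, so that all subsequent word-level work uses a linear-size effective alphabet. Then, processing the levels $k=0,1,\dots,\log\log n$ in dictionary-of-basic-factors style, I would attach to every position $i$ a name $N_k(i)$ that coincides with $N_k(i')$ exactly when $v[i,i+2^k-1]=v[i',i'+2^k-1]$, and for each admissible $y$ at level $k$ gather the positions sharing its name into the packed vector $M_y$. Done naively this is too slow by a logarithmic factor: a full dictionary of basic factors over the $\bigo(\log\log n)$ relevant levels already costs $\bigo(\alpha\log n\log\log n)$, and collecting one name-class costs $\bigo(|v|)$ per pattern. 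Both steps must therefore be carried out with word-parallelism, which is exactly what the packed-string toolkit of~\cite{Gawrychowski11}, as deployed in~\cite{FCT}, supplies: after an $\bigo(|v|)$-time preprocessing of $v$ it offers the bit-vector primitives (comparing aligned windows, propagating and compacting equalities, and turning an equally spaced run of matches into its packed stride pattern) needed to compute the names and to emit each class indicator in time proportional to its $\bigo(\alpha)$ output words rather than to $|v|$.

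The hard part will be precisely this logarithmic speedup, and it is the only non-routine ingredient. The difficulty is threefold. Over the integer alphabet a length-$2^k$ pattern need not fit into a single machine word, so one cannot simply broadcast and compare a packed pattern; the primitives of~\cite{Gawrychowski11} avoid this by operating on the names and the suffix structure of $v$ rather than on raw packed characters. Matches straddling word boundaries must be detected and merged correctly while the $|v|$-bit vector is assembled one word at a time. Finally the amortisation across the $\bigo(\log n)$ patterns and $\bigo(\log\log n)$ levels must be arranged so that the shared naming work dominates and the total stays $\bigo(\alpha\log n)$. Once these primitives are available, correctness is immediate from the defining property of $N_k$ and the construction of $M_y$, and the $\bigo(\alpha)$ query time is just the cost of copying out the stored bit-sets; I would therefore obtain the lemma as a direct assembly of the toolkit of~\cite{Gawrychowski11}, which is the route of~\cite{FCT}.
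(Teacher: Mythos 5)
There is a genuine gap, and it sits exactly where you flag ``the hard part.'' Your plan stands or falls on the claim that all $\bigo(\log n)$ indicator vectors $M_y$ can be produced within the $\bigo(\alpha\log n)$ preprocessing budget, i.e.\ in $\bigo(\alpha)$ amortised time per pattern, and for this you invoke an unspecified ``packed-string toolkit'' of~\cite{Gawrychowski11} that is supposed to compute dictionary-of-basic-factors names over all $\bigo(\log\log n)$ levels word-parallelly and to emit each name-class indicator in time proportional to its $\bigo(\alpha)$ packed size rather than to $\abs{v}$. That reference does not supply such a toolkit; what it supplies --- and what this paper actually uses, as Lemma~\ref{find_node} --- is, after linear preprocessing, an $\bigo(1)$-time map from any basic factor to the highest suffix-tree node (equivalently, the suffix-array range) whose path label begins with that factor. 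Nothing in your sketch substantiates the word-parallel naming: after rank-reduction the effective alphabet has size up to $\alpha\log n$, so a single name occupies $\Theta(\log(\alpha\log n))$ bits and only $\bigo(\log n/\log(\alpha\log n))$ names fit per machine word, which does not obviously absorb the $\log\log n$ levels; and emitting a class indicator output-sensitively is precisely the statement to be proved, not a primitive one may assume. As written, the proposal reduces the lemma to an unproved claim of equal difficulty. (Your preliminary counting is fine: there are indeed only $\bigo(\log n)$ admissible pairs $(j,k)$, and tabulating all answers would be legitimate \emph{if} each could be computed in amortised $\bigo(\alpha)$ time.)

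The paper's own proof avoids all of this with a far more elementary construction, and no per-pattern tabulation. For $\alpha=1$ it builds the suffix tree of $v$ (length $\bigo(\log n)$, hence $\bigo(\log n)$ nodes), applies Lemma~\ref{find_node} so that the node for any basic factor can be located in $\bigo(1)$ time, and augments every node with the bit-set of starting positions of its path label; since $\abs{v}=\bigo(\log n)$, each bit-set fits in $\bigo(1)$ machine words, and all of them are computed bottom-up by bitwise or of the children's bit-sets in total $\bigo(\log n)$ time. A query is then a single $\bigo(1)$ node lookup followed by reading the stored bit-set. For $\alpha>1$ the same construction is run on the $\alpha-1$ words $v[i\log n+1,(i+2)\log n]\,v[(\alpha-1)\log n+1,\alpha\log n]$ for $0\le i\le\alpha-2$, i.e.\ overlapping windows of length $\bigo(\log n)$ each concatenated with the suffix block that contains every admissible $y$ (the overlap guarantees that every occurrence of $y$, having length at most $\log n$, lies wholly inside some window). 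Preprocessing totals $\bigo(\alpha\log n)$, and a query performs one constant-time lookup per window, yielding the $\bigo(\alpha)$ bit-sets in $\bigo(\alpha)$ time. The key observation you missed is that restricting to windows of length $\bigo(\log n)$ makes each per-node bit-set a constant number of words, so the whole annotated suffix tree per window costs only $\bigo(\log n)$ --- no word-parallel naming, class gathering, or boundary-merging machinery is needed.
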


The following lemma is useful for the proof. 
More precisely, one can also construct data structures that allow us fast identification of the suffixes of a word that start with a given basic factor.
\begin{lemma}[{\cite{Gawrychowski11}}]\label{find_node}
A word $w$ of length $n$ can be processed in $\bigo(n)$ time such that given any basic factor $w[i,i+2^k-1]$ with $k\geq 0$, we can retrieve in $\bigo(1)$ time the range of the suffix array of $w$ of suffixes starting with $w[i,i+2^k-1]$. Equivalently, we can find the node of the suffix tree of $w$ closest to the root, such that the label of the path from the root to that node has the prefix $w[i,i+2^k-1]$.
\end{lemma}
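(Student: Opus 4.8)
The statement is a weighted--ancestor result in disguise, so the plan is to build the suffix tree and reduce the query to locating, in constant time, the unique tree node at which a power--of--two string--depth ``lands'' on a root--to--leaf path.

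First I would construct, in $\bigo(n)$ time for the integer alphabet, the suffix tree $T$ of $w$ (equivalently the suffix array, its inverse, and the $\LCP$ array), annotating every node $v$ with its string--depth $\mathrm{sd}(v)$, i.e.\ the length of the factor spelled from the root to $v$, and with the suffix--array interval of the leaves in its subtree. For a basic factor $y=w[i,i+2^k-1]$, the sought suffix--array range is exactly the interval stored at the \emph{locus} of $y$: the shallowest node $v$ that is an ancestor of the leaf $\ell_i$ representing suffix $i$ and satisfies $\mathrm{sd}(v)\ge 2^k$. This is a weighted--ancestor query whose queried depth is always a power of two, and this restriction is the whole point, since unrestricted weighted--ancestor queries provably cannot be answered in $\bigo(1)$.

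To exploit the restriction I would, on the edge entering each node $v$, store a bitmask over the $\bigo(\log n)$ exponents recording which $k$ satisfy $2^k\in(\mathrm{sd}(\mathrm{parent}(v)),\mathrm{sd}(v)]$; such a mask fits in one machine word. Since string--depth strictly increases down any root--to--leaf path, the half--open depth intervals of the edges on that path partition $(0,\mathrm{sd}(\ell_i)]$, so a fixed exponent $k$ with $2^k\le \mathrm{sd}(\ell_i)$ is owned by exactly one edge of the path; hence the locus of $y$ is the unique ancestor of $\ell_i$ whose edge--mask has bit $k$ set. Finding it amounts to a most--significant--bit computation rather than a predecessor search, which is what makes $\bigo(1)$ possible. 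To turn this into a genuine constant--time query under $\bigo(n)$ space, I would apply a micro--macro decomposition of $T$ into micro--trees of $\Theta(\log n)$ nodes and a macro--tree of $\bigo(n/\log n)$ nodes: along macro--paths I store cumulative OR--masks so that the correct macro--edge for exponent $k$ is identified with one word operation, while inside a micro--tree I precompute, by word packing, the answer for every (node, exponent) pair.

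The main obstacle is delivering $\bigo(n)$ preprocessing and $\bigo(1)$ query \emph{simultaneously}: the predecessor lower bound for general weighted ancestors must be dodged, and the dodge is exactly the observation that among powers of two a ``predecessor'' is a highest--set--bit operation on a packed $\bigo(\log n)$--bit word. The remaining delicate point is the bookkeeping at the micro/macro boundary, namely verifying that the per--edge masks telescope correctly so that a query resolved partly in a micro--tree and partly in the macro--tree returns the same unique locus, and that the micro--tree encoding is compact enough for a universal lookup table of size $o(n)$.
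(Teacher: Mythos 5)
You should first note that the paper contains no proof of this lemma at all: it is imported verbatim from \cite{Gawrychowski11} and used as a black box in the proof of Lemma~\ref{find_occ_small_FCT}, so your reconstruction can only be judged on its own merits. On those merits, your framing is exactly right and captures the real reason the statement is not blocked by the predecessor lower bound for general weighted-ancestor queries: the locus of $w[i,i+2^k-1]$ is the shallowest ancestor of the leaf of suffix $i$ with string depth at least $2^k$; only $\bigo(\log n)$ depths are ever queried; each root-to-leaf path assigns every exponent to exactly one edge; and a per-edge mask over exponents fits in a machine word, turning predecessor search into a bit operation.

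The gap is that the two steps which actually have to deliver $\bigo(1)$ query time within $\bigo(n)$ preprocessing are asserted rather than established, and neither works as stated. (i) Cumulative OR-masks along macro-paths do not identify ``the correct macro-edge for exponent $k$ with one word operation'': a cumulative mask at a node only tells you that bit $k$ is set \emph{somewhere} above it, and locating the first edge on a specific root-to-leaf macro path where bit $k$ appears is again the original query, now on the macro tree. (This is fixable within your architecture: since there are only $\bigo(n/\log n)$ micro-trees, store at each micro-tree root an explicit exponent-indexed array of the $\bigo(\log n)$ macro answers, which is $\bigo(n)$ words in total.) (ii) The universal lookup table for micro-trees cannot exist in size $o(n)$: its index must encode the micro-tree shape \emph{together with} the per-edge exponent masks, which is $\Theta(\log n)$ edges times $\Theta(\log n)$ bits, i.e.\ $\Theta(\log^2 n)$ bits per micro-tree, giving a table of size $2^{\Theta(\log^2 n)}$; and precomputing ``the answer for every (node, exponent) pair'' explicitly costs $\Theta(\log^2 n)$ per micro-tree, hence $\Theta(n\log n)$ overall, breaking the linear budget. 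A repair that stays linear: number each micro-tree's nodes in depth order; store per node the OR $M(v)$ of the edge masks from the micro-root to $v$ (one word), per node an ancestor bitmask $A(v)$ over the micro-tree (one word), and per micro-tree and exponent $k$ a word $B_k$ marking the nodes whose $M(\cdot)$ has bit $k$ set; the in-micro-tree locus is then the least-significant set bit of $A(\ell)\wedge B_k$, computable in $\bigo(1)$, with all structures occupying $\bigo(n)$ words. So the skeleton of your argument is sound and salvageable, but as written the proof of the two claimed bounds is missing precisely where the difficulty lies.
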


The proof of Lemma \ref{find_occ_small_FCT} follows.
\begin{proof}
We first show how the proof works for $\alpha=1$. 

We first build the suffix tree for $v$ in $\bigo(\log n)$ time. We further process this suffix tree such that we can find in constant time, for each factor $v[j\cdot 2^k+1,(j+1)2^{k}]$, the node~$u$ of the suffix tree which is closest to the root with the property that the label of the path from the root to~$u$  starts with $v[j\cdot 2^k+1,(j+1)2^{k}]$. According to Lemma \ref{find_node}, this can be done in linear time. 

Now, we augment the suffix tree in such a manner that for each node we store an additional bit-set, indicating the positions of $v$ where the word labelling the path from the root to the respective node occurs.
Each of these bit-sets, of size $\bigo(\log n)$ bits, can be stored in constant space;
indeed, each $\log n$ block of bits can be seen in fact as a number between $1$ and $n$ so we only need to store a constant number of numbers smaller than $n$;
in our model, each such number fits in a memory word.
Computing the bit-sets can be done in a bottom up manner in linear time: for a node, we need to make the union of the bit-sets of its children, and this can be done by doing a bitwise {\bf or} operation between all the bit-sets corresponding to the children.
So, now, checking the bit-set associated to the lowest node of the suffix tree such that the label of the path from the root to that node starts with $v[j\cdot 2^k+1,(j+1)2^{k}]$ we can immediately output a representation of this factor's occurrences in $v$.

This concludes the proof for $\alpha=1$. 

For $\alpha>1$, we just have to repeat the algorithm in the previous proof for the words $v[i\log n+1,(i+2)\log n]v[(\alpha-1)\log n+1 ,\alpha\log n]$, for $0\leq i\leq \alpha-2$, which allows us to find all the occurrences of the basic factors of $v[(\alpha-1)\log n+1 ,\alpha\log n]$ in $v$. The time is clearly $\bigo(\alpha)$. 
\end{proof}

\begin{remark}\label{find_occ_small_range}
By this lemma, given a word $v$, $|v|=\alpha \log n$,  and a basic factor $y=v[j\cdot2^k+1,(j+1)2^{k}]$, with $j,k\geq 0$ and $j2^k+1> (\alpha-1)\log n$, we can produce $\bigo(\alpha)$ bit-sets, each containing exactly $\bigo(\log n)$ bits, characterising all the occurrences of $y$ in $v$. Let us also assume that we have access to all values $\log x$ with $x\leq n$ (which can be ensured by a $\bigo(n)$ preprocessing). Now, using the bit-sets encoding the occurrences of $y$ in $v$ and given a factor $z$ of $v$, $|z|=c|y|$ for some $c\geq 1$, we can obtain in $\bigo(c)$ time the occurrences of $y$ in $z$: the positions (at most $c$) where $y$ occurs outside a run and/or at most $c$ runs containing the occurrences of $y$. Indeed, the main idea is to select by bitwise operations on the bit-sets encoding the factors of $v$ that overlap $z$ the positions where $y$ occurs (so the positions with an $1$). For each two consecutive such occurrences of $y$ we detect whether they are part of a run in $v$ (by $\LCP$-queries on $v$) and then skip over all the occurrences of $y$ from that run (and the corresponding parts of the bit-sets) before looking again for the $1$-bits in the bit-sets.
 \end{remark}
 \begin{proof}
When given an additional subrange $z$ of $v$, we have to select from the non-null bits of the bit-sets corresponding to the factors $v[j\log n +1 ,(j+2)\log n]$ of $v$ those corresponding to the range $z$. This can be easily done with bitwise operations in $\bigo(c)$ time. Indeed, when looking at such a bit-set, we first select the most significant $1$-bit (it occurs on the position given by $\log m$, where $m$ is the integer value of the current bit-set), it gives an occurrence of $y$. Then we turn this bit into $0$, and select the next most-significant $1$, which again gives an occurrence of $y$. If these two occurrences are part of a run, we detect the ending position of the run by a $\LCP$ query on $v$, and continue finding the $y$'s after that position. Otherwise, we repeat the procedure (the second $y$ detected able will be now detected as the first $y$).
\end{proof}

\bigskip

\noindent{\bf Missing cases from the proof of Lemma \ref{short_reps}}.

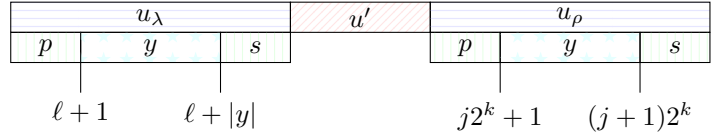
\begin{figure}[h]
	\centering{%
		\begin{tikzpicture}[yscale=0.4,xscale=0.92]
  \col[arm]{4}{$\l\arm$}
  \col[gap]{2}{$u'$}
  \col[arm]{4}{$\r\arm$}
  \newrow
  \col[subarm]{1}{$p$}
  \col[diff]{2}{$y$}
  \col[subarm]{1}{$s$}
  \void{2}
  \col[subarm]{1}{$p$}
  \col[diff]{2}{$y$}
  \col[subarm]{1}{$s$}
  \newrow
  \void{1}
  \lab{$\ell+1$}
  \void{2}
  \lab{$\ell+\abs{y}$}
  \void{4}
  \lab{$j2^k+1$}
  \void{2}
  \lab{$(j+1)2^k$}
	\end{tikzpicture}
	}
	\caption{Fixing $x_m$ in the proof of \Cref{short_reps}, we try to spot gapped repeats whose arms contain a certain basic factor.
		If we can extend this gapped repeat to a maximal gapped repeat, we output it.
	}
	\label{figExtendXm}
\end{figure}

\begin{figure}[h]
	\centering{%
\subfloat[ ]{%
		\begin{tikzpicture}[yscale=0.4,xscale=0.5]
  \col[rep]{1}{$\l\rep$}
  \col[diff]{0.5}{$\subarm$}
  \void{0.5}
  \col[rep]{2}{$\r\rep$}
  \col[diff]{0.5}{$\subarm$}
  \newrow
  \col[arm]{1.5}{$\l\arm$}
  \void{1.5}
  \col[arm]{1.5}{$\r\arm$}
	\end{tikzpicture}
}
\hfill
\subfloat[ ]{%
		\begin{tikzpicture}[yscale=0.4,xscale=0.5]
  \col[rep]{2}{$\l\rep$}
  \void{0.5}
  \col[rep]{1.5}{$\r\rep$}
  \newrow
  \col[arm]{1}{$\l\arm$}
  \void{2}
  \col[arm]{1}{$\r\arm$}
  \newrow
  \void{0.5}
  \col[arm]{1}{$\l\arm$}
  \newrow
  \void{0.5}
  \lab{$p$}
  \void{0.5}
  \col[arm]{1}{$\l\arm$}
	\end{tikzpicture}
}
\hfill
\subfloat[ ]{%
		\begin{tikzpicture}[yscale=0.4,xscale=0.5]
  \col[rep]{1}{$\l\rep$}
  \void{1}
  \col[rep]{2}{$\r\rep$}
  \newrow
  \col[arm]{1}{$\l\arm$}
  \void{1.5}
  \col[arm]{1}{$\r\arm$}
  \newrow
  \col[subarm]{0.5}{$\zarm$}
  \col[subarm]{0.5}{$\zarm$}
  \void{1}
  \col[subarm]{0.5}{$\zarm$}
  \newrow
  \void{0.5}
  \lab{$p$}
	\end{tikzpicture}
}
\hfill
\subfloat[ ]{%
		\begin{tikzpicture}[yscale=0.4,xscale=0.5]
  \col[rep]{2}{$\l\rep$}
  \void{1}
  \col[rep]{2}{$\r\rep$}
  \newrow
  \void{0.5}
  \col[arm]{1.5}{$\l\arm$}
  \void{1}
  \col[arm]{1.5}{$\r\arm$}
	\end{tikzpicture}
}

\subfloat[ ]{%
		\begin{tikzpicture}[yscale=0.4,xscale=0.5]
  \col[rep]{2}{$\l\rep$}
  \void{0.5}
  \col[rep]{1}{$\r\rep$}
  \newrow
  \void{0.5}
  \col[arm]{1}{$\l\arm$}
  \void{1}
  \col[arm]{1}{$\r\arm$}
	\end{tikzpicture}
}
\hfill
\subfloat[ ]{%
		\begin{tikzpicture}[yscale=0.4,xscale=0.5]
  \col[rep]{2}{$\l\rep$}
  \col[diff]{0.5}{$\subarm$}
  \void{0.5}
  \col[rep]{1.5}{$\r\rep$}
  \col[diff]{0.5}{$\subarm$}
  \newrow
  \void{0.5}
  \col[arm]{2}{$\l\arm$}
  \void{0.5}
  \col[arm]{2}{$\r\arm$}
	\end{tikzpicture}
}
\hfill
\subfloat[ ]{%
		\begin{tikzpicture}[yscale=0.4,xscale=0.5]
  \col[diff]{0.5}{$\subarm$}
  \col[rep]{2}{$\l\rep$}
  \void{1}
  \col[diff]{0.5}{$\subarm$}
  \col[rep]{2}{$\r\rep$}
  \newrow
  \col[arm]{2}{$\l\arm$}
  \void{1.5}
  \col[arm]{2}{$\r\arm$}
	\end{tikzpicture}
}
}
\caption{Catching gapped repeats with periodicity is done by case analysis in the proof of \Cref{short_reps}. 
	Each case is depicted in order (from left to right, top to bottom)}
\label{figPeriodicExtension}
\end{figure}
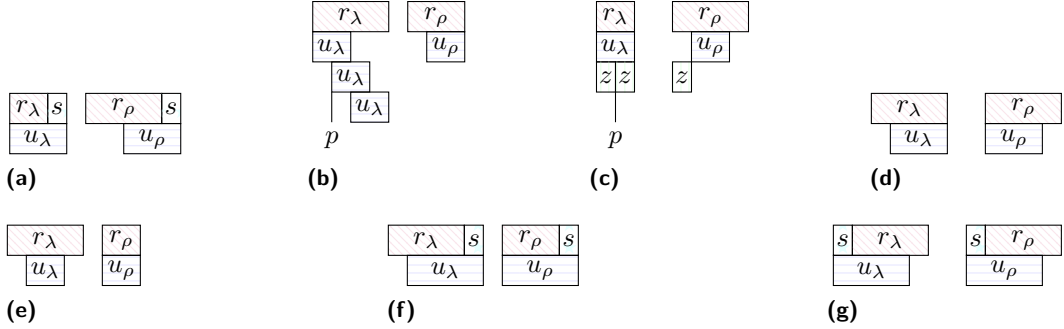

Assume $u_\rho$ starts on the first position of $r_\rho$. If $u_\rho$ ends on a position inside $r_\rho$, other than its last position, then $u_\lambda$ should end on the last position of $r_\lambda$ (otherwise, both arms could be extended to the right). This means that we know exactly the gap between the two arms of the $\alpha$-gapped repeat that we want to construct, as well as the fact that the arms are $p$-periodic. We consider, again, the longest $p$-periodic suffix of $r_\lambda$ which is a prefix of $r_\rho$, and see whether the two occurrences of this factor determine a maximal $\alpha$-gapped repeat. If $u_\rho$ ends on the last position of $r_\rho$ then we know the exact arm of the $\alpha$-gapped palindrome we are looking for (that is, if it also fulfils the length conditions). We can proceed just like in the case analyzed before, when we knew that $u_\lambda=r_\lambda$, only that in this case we have to determine the positions where $u_\lambda$ may start instead of those where $u_\rho$ started. Finally, if $u_\rho$ ends on a position to the right of $r_\rho$, then $u_\lambda$ should also end after $r_\lambda$ and the suffix of $u_\lambda$ occurring after $r_\lambda$ should be equal to the suffix of $u_\rho$ occurring after $r_\rho$. We determine this suffix by a longest common prefix query on $x_m$, and then we obtain exactly the arms of the repeat. We can check whether it is a valid maximal $\alpha$-gapped repeat, and, if so, return it.

The last case is when $u_\rho$ starts on a position to the left of $r_\rho$. Then $u_\lambda$ starts on a position occurring before the first position of $r_\lambda$, the prefix of $u_\rho$ occurring before the beginning of $r_\rho$ equals the prefix of $u_\lambda $ occurring before $r_\lambda$, and they can be determined in constant time. Thus we know the starting points of both arms of the repeat, and we can determine them exactly by a longest common prefix query. We just have to check whether the arms we obtained form a valid maximal $\alpha$-gapped repeat.

\end{document}